\declaretheorem{theorem}
\newcommand{\PROOF}{\begin{proof}}
\newcommand{\QED}{\end{proof}}
\newcommand{\range}{\mathrm{range}}
\newcommand{\prefix}{\sqsubseteq}
\newcommand{\restr}{\upharpoonright}
\newcommand{\N}{\mathbb{N}}
\newcommand{\Q}{\mathbb{Q}}
\newcommand{\poly}{{\mathrm{poly}}}
\newcommand{\cdim}{\mathrm{cdim}}
\newcommand{\cDim}{{\mathrm{cDim}}}
\newcommand{\D}{{\mathcal{D}}}
\newcommand{\NP}{{\ensuremath{\mathrm{NP}}}}
\newcommand{\PSPACE}{{\rm PSPACE}}
\renewcommand{\P}{\ensuremath{{\mathrm P}}}
\newcommand{\F}{{\mathcal{F}}}
\newcommand{\supp}{\mathrm{supp}}
\newcommand{\Rplus}{[0,\infty)}
\renewcommand{\F}{\mathcal{F}}
\newtheorem{lemma}{Lemma} 
\newtheorem{corollary}{Corollary}
\newtheorem{definition}{Definition}
\newtheorem{question}{Question}
\newtheorem*{definition*}{Definition}
\newcommand{\nifty}{n \rightarrow \infty}
\renewcommand{\S}{{X}} 
\newcommand{\Sstr}{S_{\mathrm{str}}}
\newcommand{\cdimp}{{\cdim}_\mathrm{P}}
\newcommand{\cDimp}{{\cDim}_\mathrm{P}}
\newcommand{\pr}{\mathrm{Pr}}
\newcommand{\Kpoly}{\mathcal{K}_{\poly}}
\newcommand{\Kstrpoly}{\mathcal{K}^{\mathrm{str}}_{\poly}}
\newcommand{\constc}{\mathbf{c}}
\title{One-Way Functions and Polynomial Time Dimension}
\author[1]{Satyadev Nandakumar}
\author[2]{Subin Pulari}
\author[1]{Akhil S}
\author[1]{Suronjona Sarma}
\affil[1]{
  Department of Computer Science and Engineering\\
  Indian Institute of Technology Kanpur,
  Kanpur, Uttar Pradesh, India.
}
\affil[1]{\{satyadev,akhis,suronjona\}@cse.iitk.ac.in}
\affil[2]{
Universit\'e de Bordeaux, CNRS, Bordeaux INP, LaBRI, UMR 5800, F-33400, Talence, France
}
\affil[2]{subin.pulari@labri.fr}
\date { }
\begin{document}

\maketitle

\begin{abstract}
	
	 This paper demonstrates a duality between the non-robustness of polynomial time dimension and the existence of one-way functions. Polynomial-time dimension (denoted $\mathrm{cdim}_\mathrm{P}$) quantifies the density of information of infinite sequences using polynomial time betting algorithms called $s$-gales. An alternate quantification of the notion of polynomial time density of information is using polynomial-time Kolmogorov complexity rate (denoted
	$\mathcal{K}_\text{poly}$). Hitchcock and Vinodchandran (CCC 2004) showed that $\mathrm{cdim}_\mathrm{P}$ is always greater than or equal to $\mathcal{K}_\text{poly}$. We first show that if one-way functions exist then there exists a polynomial-time samplable distribution with respect to which $\mathrm{cdim}_\mathrm{P}$ and $\mathcal{K}_\text{poly}$ are separated by a uniform gap with  probability $1$. Conversely, we show that if there exists such a polynomial-time samplable distribution, then (infinitely-often) one-way functions exist. 	  
	    
  Using our main results, we solve a long standing open problem posed by Hitchcock and Vinodchandran (CCC 2004) and Stull under the assumption that one-way functions exist. We demonstrate that if one-way functions exist, then there are individual sequences $X$ whose poly-time dimension strictly exceeds  $\mathcal{K}_\text{poly}(X)$, that is $\mathrm{cdim}_\mathrm{P}(X) > \mathcal{K}_\text{poly}(X)$. The corresponding unbounded notions,
  namely, the constructive dimension and the asymptotic lower rate of
  unbounded Kolmogorov complexity are equal for every sequence.
  Analogous notions are equal even at polynomial space and finite-state level. In view of these results, it is reasonable
  to conjecture that the polynomial-time quantities are identical for
  every sequence and set of sequences. However, under a plausible assumption which underlies modern
  cryptography - namely the existence of one-way functions, we \emph{refute}
  the conjecture thereby giving a negative answer to the open question posed by Hitchcock, Vinodchandran and Stull. Further, we show that the gap between
 these quantities can be made as large as possible (\emph{i.e.} close
 to 1). We also establish similar bounds for
 strong poly-time dimension versus asymptotic upper Kolmogorov
 complexity rates. 
 Our proof uses several new constructions and arguments involving probabilistic tools such as the
 Borel-Cantelli Lemma, the Kolmogorov inequality for martingales and  the theorem on universal extrapolation by Ilango, Ren, and Santhanam. 
 This work shows that the question of \emph{non-robustness} of
 polynomial-time information density notions, which is \emph{prima
 	facie} different, is intimately related to questions which are of
 current interest in cryptography and \emph{meta-complexity}.
\end{abstract}
\newpage
\tableofcontents

\section{Introduction}

\subsection{Context and Motivation}

\textbf{One-way functions}
\cite{DiffieHellman1976,GoldreichCryptography,Levin2003} are functions
on finite strings that are easy to compute but are hard to invert,
except possibly on a negligible fraction of the input strings of a
given length. The concept of a one-way function is central in
cryptography, since the existence of such functions are both necessary
and sufficient for the existence of essential cryptographic primitives
like pseudorandom generators \cite{HILL99}, digital signatures
\cite{Rompel1990}, private key encryption \cite{HILL99} \cite{GM84},
authentication schemes \cite{FS90} and commitment schemes
\cite{Naor1991}. The question of existence of One-way functions
based on average case hardness of NP-complete problems has been one that has been central in cryptography.

Recent advancements in meta-complexity reveal intriguing
connections between existence of one-way functions and the computational hardness
of problems studied in meta-complexity. Characterisations of existence of One-way functions have been studied based on average case hardness of time bounded
Kolmogorov complexity \cite{liu2020one}, average case hardness of $\mathrm{McKTP}$ \cite{allender2021one},  average-case hardness of approximating Kolmogorov complexity on samplable distributions  \cite{ilango2022robustness}, average case failure of symmetry of information \cite{hirahara2023duality}, average-case easiness of approximating pKt complexity \cite{HirhLuOliveria}, and NP-hardness of distributional Kolmogorov complexity under randomized polynomial-time reductions \cite{Hirahara23}.

In this paper, we show that the existence of one-way functions implies the existence of dimension gaps on polynomial time samplable distributions over infinite sequences. Furthermore, we show that existence of such distributions implies existence of infinitely often one way functions.
Using this result,
we give a surprising negative resolution to the longstanding open question of the robustness of polynomial-time dimension posed by Hitchcock, Vinodchandran, and Stull \cite{dercjournal, stullsurvey}.

Polynomial-time dimension quantifies the
asymptotic rate of information in
an infinite sequence of bits. There are several approaches towards defining polynomial time dimension. The first approach analyzes the asymptotic \emph{compressibility} of the finite prefixes of the given infinite sequence. The polynomial-time bounded \emph{Kolmogorov Complexity} of a string $x$ is the length of the shortest program that can output $x$ in at most polynomial number of time steps. In the \emph{compressibility} approach, the polynomial time density of information is defined in terms of the polynomial-time bounded Kolmogorov complexity of the prefixes of the string, denoted $\Kpoly$. In contrast, we can also use the \emph{gambling based} approach to quantify polynomial-time density of information. Polynomial-time dimension, denoted $\cdimp$, is defined using polynomial-time betting algorithms known as $s$-gales. These algorithms attempt to achieve profit by placing successive bets on the bits of the sequence. Each bet made by the $s$-gale on a symbol reflects its confidence in the occurrence of that corresponding symbol.

In this work, we initiate a line of investigation into the connection between cryptographic primitives and the
\emph{robustness} of complexity notions of infinite sequences. We consider polynomial time samplable distribution $\nu$ over $\Sigma^\infty$ that uses at most $sn$ random bits (where $s < 1$) to sample a string $w \in \Sigma^n$. We call such distributions as \emph{short-seed  polynomial time samplable distributions}.
We show that the existence of one-way functions implies
dimension gaps between	$\cdimp$ and $\Kpoly$ over a collection of sequences in $\Sigma^\infty$, such that the collection has probability $1$ according to a short seed polynomial time samplable distribution. Futhermore, we show that the existence of such distributions with dimension gaps implies the existence of infinitely-often one-way functions. 

Formally establishing the connection between the robustness of these information density notions and one-way functions in the \emph{infinite string} setting requires significant technical effort and new insights, extending beyond the well-known relationships between compressibility and indistinguishability \cite{HILL99} \cite{Yao82} \cite{Kabanets2000} in the setting of finite strings .

\subsection{Applications}

Considerable research has been devoted to understanding whether
fundamental properties of unbounded Kolmogorov complexity survive in
the time bounded setting. Recently, it was shown that the  symmetry of information of time-bounded Kolmogorov
complexity is equivalent to the existence of OWFs \cite{hirahara2023duality}.
Along similar lines, we investigate whether the characterization of
constructive dimension in terms of Kolmogorov complexity
\cite{LutzDimension2003,Mayordomo02} survives in the time
bounded setting.  This is the question of \emph{robustness of polynomial-time dimension}  was posed by Hitchcock and Vinodchandran \cite{dercconference, dercjournal}, and  later  mentioned in a survey by Stull (see Open Question 3.71 in \cite{stullsurvey}). As an application of our main result, we provide a surprising negative resolution to this longstanding open question, under the assumption that one-way functions exist.  

The unbounded analogue of time-bounded dimension is known to be robust with
markedly different approaches to its definition - via predictors, via
unbounded time $s$-gales, and via unbounded Kolmogorov complexity
rates - all known to be equivalent \cite{Mayordomo02} \cite{LutzDimension2003}. 
At the other extreme,
\emph{finite-state} analogues of these notions are also known to be
robust. Hitchcock and Vinodchandran \cite{dercconference} show that the
polynomial-dimension of every sequence is at least the asymptotic
lower density of the polynomial-time Kolmogorov complexity of its prefixes. Given the
robustness at other levels and the fact that the
inequality holds in one direction, it is natural to conjecture that
the inequality holds in the other direction as well. This would establish the
robustness of polynomial time dimension.
However, the question of
robustness of dimension at the important ``intermediate''
resource-bounded level of polynomial time dimension, has remained open for a
long time.


Using our main theorem, we establish that the existence of one-way functions implies
dimension gaps over a \emph{suitable collection} of sequences. Thus, we show that polynomial time dimension of sets is non-robust if one-way functions exist. We then extend this result to show the stronger result that polynomial time dimension of sequences is non-robust if one-way functions exist. 
Furthermore, we show that the gap between $\Kpoly$ and $\cdimp$ can be arbitrarily close to $1$. Additionally, we show that \emph{strong} polynomial
time-bounded dimension is also non-robust under the same assumption.


The study of \emph{meta-complexity} - the complexity of computing various measure in complexity theory has been shown to have close connections to cryptographic notions. Our work broadens the scope of this connection.  Our main results establish that the study of polynomial-time dimension and, more generally, information-theoretic complexity notions of \emph{infinite sequences} is closely connected with the existence of fundamental primitives in complexity and cryptography. In doing so, we also provide an unexpected resolution to a long-standing open problem.



\subsection{Technical Background}
\subsubsection*{s-gales and Kolmogorov Complexity}

\textbf{$s$-gales} are betting strategies on infinite sequences. The
betting game on an infinite binary sequence $X \in \Sigma^\infty$ can
be understood as follows. The player starts with an initial capital of
$d(\lambda)=1$ on the empty string. Here $d$ is the betting function
(capital) of the $s$-gale. At the $n$\textsuperscript{th} stage,
having accumulated a capital $d(x_1x_2\dots x_n)$ on the first $n$
bits of the sequence, the player is allowed to place an amount as a
``bet" on the next bit $x_{n+1}$ of the sequence. The rule of the game
is that the expected value of capital received by the player after the
bet is $2^s$ times the capital they started with, that is $2^s . d(w)
= d(w0) + d(w 1)$, for any finite string $w \in \Sigma^*$.  The player $d$ succeeds on
the sequence $X \in \Sigma^\infty$ if they can secure an arbitrary
amount of capital over the course of betting on the infinite sequence,
more precisely if $\limsup_{n \to \infty} d(x_1x_2\dots x_n) =
\infty$. 

When $s=1$, the average capital after the bet on $w0$ or $w1$ is equal
to the previous capital $d(w)$ (when $s=1$, such $s$-gales are
referred to as \emph{martingales}). Therefore, the betting strategy is
fair. When $s<1$, the average capital after the bet is strictly less
than the previous capital $d(w)$. Therefore, such betting strategies
are inherently unfair to the player. When $s=1$, the player can bet ``evenly" on the next bit if they are unsure.
But as $s$ decreases, the setting becomes more unfavorable and the
player needs to bet more aggressively to keep increasing their capital. 

The \textbf{Kolmogorov Complexity} of a finite string $x$, denoted by $K(x)$, is the length of the shortest program that produces the string as the output. 
Any string $x$ can be produced by a program that trivially outputs the string $x$, and therefore the Kolmogorov complexity of a string $x$ is less than or equal to its length, up to an additive constant. However, if the string has lesser amount of information, there may be programs of shorter length that outputs it. For instance, if all the even bits of the string are $0$, the program needs to encode only the bits at the odd indices to produce the string. Therefore, its Kolmogorov complexity is at most half its length, up to additive constants.

\subsubsection*{Polynomial-time Dimension}

\textbf{Resource bounded dimension} is defined by placing resource bounds on the computation of the $s$-gale. Analogously,  resource bounded Kolmogorov complexity is defined by placing resource bounds on the programs that can print a string.


The \textbf{Polynomial time dimension} quantifies the rate of information in an infinite sequence, measured with respect to polynomial time bounded computation. It is formulated using polynomial time $s$-gales. An $s$-gale $d$ wins on an infinite sequence $X$ if  $\limsup_{n \to \infty} d(X \restr n) = \infty$. $d$ is said to be polynomial-time computable if for some $p(n) \in \poly(n)$, $d$ takes at most $p(n)$ time to compute $d(w)$, where $n$ is the length of $w$.

\begin{definition*}[Polynomial-time dimension \cite{LutzDimension2003}]
	The polynomial-time dimension of $ \F \subseteq \Sigma^\infty$ is defined as 
	\begin{align*}
		\cdimp(\F) = \inf \{s \mid \exists \text{ a polynomial-time  } s\text{-gale } d \text{ such that d succeeds on all } X \in \F \}.
	\end{align*}
	For a sequence $X \in \Sigma^\infty$, define $ \cdimp(X) = \cdimp(\{X\})$.
\end{definition*}

That is $\cdimp(X)$ is the ``lowest" $s$ for which there exists a polynomial time $s$-gale $d$ that succeeds on $X$. 

\subsubsection*{Time bounded Kolmogorov Complexity}

On the other hand,  we can use \emph{time bounded} Kolmogorov complexity to define a polynomial time analogue of $\mathcal{K}$. 
 For a given time bound $t(n)$, the \textbf{$t$-time bounded Kolmogorov complexity} $K_t(x)$ of a string $x$ is the length of the shortest program that generates $x$ in at most $t(\lvert x \rvert)$ time steps (see \cite{LiVitanyi}). 

For a given time bound $t(n)$, the quantity $\liminf_{\nifty} K_t(X \restr n) / n$ gives the $t$-time bounded rate of information in an infinite string $X$. The polynomial time analogue of $\mathcal{K}$ (Definition \ref{def:Kdefinition}) is the infimum of this quantity over all polynomial time bounds $t$.

\begin{definition*}[\cite{dercconference,dercjournal}]
	\label{def:Kdefinition}
	For any $\F \subseteq \Sigma^\infty$,
	\begin{align*}
		\Kpoly(\F) = \inf\limits_{t \in \poly}\sup\limits_{X \in \F}\liminf\limits_{n \rightarrow \infty} \frac{K_t(X \restr n)}{n}.
	\end{align*}
\end{definition*}

\subsection*{Robustness of  Dimension}

\subsubsection*{Robustness at other levels}

In the classical setting, we use the notion of $\mathcal{K}$ defined using time-unbounded Kolmogorov complexity $K(x)$ and $\cdim$ defined using lower semi-computable $s$-gales.
Mayordomo \cite{Mayordomo02} and Lutz \cite{diss} show that the unbounded notions of $\mathcal{K}$ and $\cdim$ are equivalent, hence establishing the robustness of \emph{constructive} dimension.

\begin{theorem} [Mayordomo \cite{Mayordomo02} , Lutz
	\cite{diss}] \label{thm:cdimAndliminf} 
	For all $\F \subseteq \Sigma^\infty$, 
	$\cdim(\F) =
	\sup\limits_{X \in \F}\liminf\limits_{n \rightarrow \infty} \frac{K(X \restr n)}{n}.$ 
\end{theorem}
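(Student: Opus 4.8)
The plan is to reduce the set-level identity to the corresponding pointwise identity $\cdim(\{X\}) = \liminfn K(X\restr n)/n$, together with an ``absoluteness'' step that lets a single optimal constructive gale handle the supremum over $\F$. The two workhorses are the optimal lower semicomputable semimeasure $\mathbf{M}$ on $\Sigma^\infty$ (the a priori probability) and the Coding Theorem, which states that $-\log\mathbf{M}(w)$ equals the monotone complexity of $w$ up to an additive constant. Since the plain, prefix, and monotone complexities agree up to $O(\log n)$ terms, they all yield the same value of $\liminfn K(X\restr n)/n$, so I may freely pass between them.

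For the pointwise upper bound $\cdim(\{X\}) \le \liminfn K(X\restr n)/n$, I would turn complexity into capital. Set $d(w) = 2^{s\lvert w\rvert}\mathbf{M}(w)$. The semimeasure inequality $\mathbf{M}(w0) + \mathbf{M}(w1) \le \mathbf{M}(w)$ is exactly what makes $d$ a lower semicomputable $s$-supergale. If $\liminfn K(X\restr n)/n < s$, then along a subsequence $n_k$ we have $K(X\restr n_k) \le (s-\varepsilon)n_k$, whence $\log d(X\restr n_k) \ge \varepsilon n_k - O(1) \to \infty$, so $d$ succeeds on $X$ and $\cdim(\{X\}) \le s$. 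Letting $s$ decrease to the $\liminf$ gives this direction, invoking the standard equivalence between constructive dimension defined by gales and by supergales.

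For the matching lower bound, I would turn capital into compression. Given a lower semicomputable $s$-gale $d$ with $d(\lambda)=1$ that succeeds on $X$, the gale equation $d(w0)+d(w1)=2^s d(w)$ telescopes to $\sum_{\lvert w\rvert = n} 2^{-sn} d(w) = 1$, so $q(w) = 2^{-s\lvert w\rvert} d(w)\cdot c/(\lvert w\rvert+1)^2$ is a lower semicomputable semimeasure for a suitable constant $c$. The Coding Theorem then yields $K(w) \le s\lvert w\rvert - \log d(w) + 2\log(\lvert w\rvert + 1) + O(1)$. Evaluating along the subsequence on which $d(X\restr n)\to\infty$, the term $-\log d(X\restr n)$ is eventually nonpositive and the logarithmic term is $o(n)$, so $\liminfn K(X\restr n)/n \le s$; taking the infimum over admissible $s$ gives $\liminfn K(X\restr n)/n \le \cdim(\{X\})$.

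Finally, to pass from sequences to sets, the inequality $\cdim(\F) \ge \sup_{X\in\F}\cdim(\{X\})$ is immediate, since any gale winning on all of $\F$ wins on each singleton. For the reverse, I would exploit that the supergale $d^{(s)}(w) = 2^{s\lvert w\rvert}\mathbf{M}(w)$ constructed above is a single object which, by the pointwise result, succeeds on every $X$ with $\cdim(\{X\}) < s$. Hence if $\sup_{X\in\F}\cdim(\{X\}) < s$, then $d^{(s)}$ wins on all of $\F$ at once, giving $\cdim(\F)\le s$ and therefore $\cdim(\F)\le \sup_{X\in\F}\cdim(\{X\})$. Combining the three pieces yields the claim. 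I expect the main obstacle to be this last absoluteness step: it hinges entirely on the existence of an \emph{optimal} (universal) lower semicomputable semimeasure, which is precisely the ingredient that has no polynomial-time analogue --- and whose absence is what this paper ultimately exploits to establish non-robustness in the time-bounded setting.
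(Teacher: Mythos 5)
Your proposal is correct, and the paper itself states this theorem without proof, citing Mayordomo \cite{Mayordomo02} and Lutz \cite{diss}; your argument is essentially the standard one from those sources. The three ingredients you use --- turning the optimal lower semicomputable semimeasure $\mathbf{M}$ into the supergale $2^{s\lvert w\rvert}\mathbf{M}(w)$, converting a winning $s$-gale into a semimeasure and applying Levin's coding theorem (with the $O(\log n)$ slack between complexity variants correctly absorbed by the division by $n$), and using the single optimal supergale to collapse $\cdim(\F)$ to $\sup_{X\in\F}\cdim(\{X\})$ --- are exactly the classical proof, and your closing observation that the absoluteness step rests on an optimal semimeasure with no polynomial-time analogue is precisely the point this paper exploits.
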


$\PSPACE$ analogues of $\cdim$ and $\mathcal{K}$ are defined by restricting the $s$-gales in the definition of $\cdim$ to be polynomial space computable and by using polynomial space bounded Kolmogorov complexity in the definition of $\mathcal{K}$. Hitchcock and Vinodchandran \cite{dercconference,dercjournal} showed that these notions of resource bounded dimension coincide in the $\PSPACE$ setting,  hence establishing the robustness of \emph{polynomial-space} dimension.

\begin{theorem}[\cite{dercconference,dercjournal}]
	For every $\F \subseteq \Sigma^\infty$, $\cdim_{\PSPACE}(\F) = \mathcal{K}_{\PSPACE}(\F)$. 
	
\end{theorem}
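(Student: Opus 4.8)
The plan is to establish the two inequalities $\cdim_{\PSPACE}(\F)\le \mathcal{K}_{\PSPACE}(\F)$ and $\cdim_{\PSPACE}(\F)\ge \mathcal{K}_{\PSPACE}(\F)$ separately, following the skeleton of the unbounded Mayordomo--Lutz argument (Theorem~\ref{thm:cdimAndliminf}) but checking that every construction remains \emph{exactly} computable in polynomial space. The backbone is the standard dictionary between gales and subprobability measures: an $s$-gale $d$ corresponds to the assignment $\rho(w)=d(w)2^{-s|w|}$, which satisfies $\rho(w0)+\rho(w1)\le\rho(w)$, and conversely any such $\rho$ yields the $s$-supergale $d(w)=\rho(w)2^{s|w|}$. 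Since supergales convert to gales with only a constant multiplicative loss and success is a $\limsup$ condition, it suffices to work with supergales throughout.

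For the direction $\cdim_{\PSPACE}(\F)\le \mathcal{K}_{\PSPACE}(\F)$ I would fix a rational $s>\mathcal{K}_{\PSPACE}(\F)$, so that by definition some polynomial space bound $g$ satisfies $\sup_{X\in\F}\liminf_n K^{g}(X\restr n)/n<s$, where $K^g$ denotes $g$-space-bounded Kolmogorov complexity. I would then form the space-bounded a priori semimeasure $\rho^g(w)=\sum_\pi 2^{-|\pi|}$, summing over programs $\pi$ that output a string having $w$ as a prefix while using space at most $g(\ell)$ once $\ell$ output symbols have been produced. Using a prefix-monotone universal machine whose space budget grows with the output length keeps $\rho^g$ a genuine semimeasure, and for each fixed $w$ its value is a finite dyadic rational computable in space polynomial in $|w|$ (simulate each candidate program until it emits $|w|$ symbols or exceeds the budget). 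The assignment $d(w)=\rho^g(w)2^{s|w|}$ is then a polynomial-space $s$-supergale. Since the shortest $g$-bounded program for $w$ is one of the summands, $\rho^g(w)\ge 2^{-K^g(w)}$ up to the usual $O(\log|w|)$ gap between program variants, whence $d(X\restr n)\ge 2^{\,sn-K^g(X\restr n)-O(\log n)}$. For every $X\in\F$ the $\liminf$ hypothesis drives this exponent to $+\infty$ along a subsequence, so $d$ succeeds on all of $\F$, giving $\cdim_{\PSPACE}(\F)\le s$; taking the infimum over $s$ finishes this direction.

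For the reverse inequality (the space-level analogue of the Hitchcock--Vinodchandran bound) I would fix a rational $s>\cdim_{\PSPACE}(\F)$ and a polynomial-space $s$-gale $d$ succeeding on every $X\in\F$. The key lemma is a ranking/encoding argument: passing to $\rho(w)=d(w)2^{-s|w|}$, the normalization $\sum_{|w|=n}\rho(w)\le 1$ implies that at most $2^{j}$ strings of length $n$ satisfy $\rho(w)\ge 2^{-j}$. Hence $d(w)\ge 2^{k}$ forces $w$ to lie among at most $2^{\,sn-k}$ strings of length $n$, and $w$ is recoverable from $n$, $k$, and its index in an enumeration of that set; the enumeration only cycles through all length-$n$ strings and evaluates the polynomial-space gale $d$, so the decoder runs in polynomial space. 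This yields $K^{g}(w)\le sn-k+O(\log n)$ for a suitable polynomial space bound $g$. Because $d$ succeeds, $\limsup_n d(X\restr n)=\infty$ gives infinitely many $n$ with $d(X\restr n)\ge 2^{k_n}$ and $k_n\to\infty$, so $K^g(X\restr n)/n\le s-k_n/n+o(1)$ infinitely often and $\liminf_n K^g(X\restr n)/n\le s$. As $g$ depends only on $d$ and $s$, this is uniform over $\F$, so $\mathcal{K}_{\PSPACE}(\F)\le s$, and the infimum over $s$ completes the proof.

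The main obstacle I anticipate is not the combinatorial skeleton, which is classical, but verifying that each step is genuinely realizable in polynomial space with exact arithmetic: truncating the otherwise only lower-semicomputable a priori semimeasure to a finite, poly-space sum while preserving both the semimeasure inequality (handled by the monotone universal machine with output-length-indexed space budget) and the bound $\rho^g(w)\ge 2^{-K^g(w)}$; carrying the dyadic values of $d$ with enough precision that the supergale identity and the enumeration thresholds hold exactly; and confirming that the $O(\log n)$ and additive-constant overheads in both encodings are absorbed by the division by $n$ and by the freedom to choose $s$ arbitrarily close to the two dimensions. One must also track the interaction between the space bound $g$ used for $K^g$ and the polynomial governing $d$ so that a single polynomial bound works uniformly for all $X\in\F$.
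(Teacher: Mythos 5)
Your two-inequality plan is sound, and you should first know that this paper does not actually prove the statement you were given: it is quoted verbatim from Hitchcock and Vinodchandran, so the only in-paper material to compare against is the Section~\ref{sec:technicallemmas} proof of the polynomial-\emph{time} inequality $\Kpoly(\F)\le\cdimp(\F)$. Your second direction is the PSPACE analogue of that argument, but coarser in an instructive way: the paper must encode $w$ by a dyadic subinterval of $[c_n(w),c_n(w+1)]$ (Lemma~\ref{lem:dyadicintervallemma}) precisely so that the decoder can binary-search down the tree in polynomial \emph{time}, whereas your rank-based decoder, which cycles through all $2^n$ strings evaluating $d$, is exponential-time but polynomial-space --- perfectly legitimate at the PSPACE level and genuinely simpler there. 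Your first direction (semimeasure $\rho^g$, then $d(w)=\rho^g(w)2^{s|w|}$) has no counterpart anywhere in the paper, and deliberately so: the failure of exactly this direction at polynomial time, conditioned on one-way functions, is the paper's main theorem, and it is the PSPACE simulation of all short programs by space reuse that makes your construction work where the poly-time analogue breaks. The one point to treat with more care than your sketch does is the truncation of the infinite sum defining $\rho^g$: if you cap program length by a function of $|w|$, the caps for $w0,w1$ exceed the cap for $w$ and the inequality $\rho^g(w0)+\rho^g(w1)\le\rho^g(w)$ can fail, so the supergale property is not automatic. You flag this yourself, and it is repairable by standard means --- either a genuinely monotone machine where $\rho^g(w)$ is the measure of program-tape cylinders whose output extends $w$ within the space budget, or the Lutz-style alternative that bypasses semimeasures entirely: from the sets $A_n=\{w\in\Sigma^n: K^g(w)<sn\}$, whose membership is PSPACE-decidable and whose cardinality is below $2^{sn+O(1)}$, build for each $s'>s$ an exactly pspace-computable $s'$-gale succeeding on every $X$ with $X\restr n\in A_n$ infinitely often. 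With either repair your argument is complete and is essentially the original Hitchcock--Vinodchandran route; nothing in it fails, and the remaining items you list (exact dyadic arithmetic, absorbing $O(\log n)$ overheads, uniformity of the space bound over $\F$) are routine at the PSPACE level.
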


\subsubsection*{Robustness of Polynomial time Dimension}

 This leads to the following question: \emph{Are the notions of polynomial time dimension formulated using $s$-gales ($\cdimp$) and time bounded Kolmogorov complexity ($\Kpoly$) equivalent?}. 

Hitchcock and Vinodchandran \cite{dercjournal} showed that $\cdimp$ is always greater than or equal to $\Kpoly$\footnote{See Section \ref{sec:technicallemmas} for alternate proofs of Theorems \ref{thm:hitchcockvinodchandran} and \ref{thm:hitchcockvinodchandranstr}.}. 

\begin{theorem}[\cite{dercconference,dercjournal}]   \label{thm:hitchcockvinodchandran}%
	For every $\F \subseteq \Sigma^\infty$, $\Kpoly (\F) \leq \cdimp(\F)$.
\end{theorem}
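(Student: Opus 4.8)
The plan is to show that any polynomial-time $s$-gale succeeding on $\F$ can be turned into a polynomial-time compression scheme for the prefixes of every $X \in \F$, thereby bounding $\Kpoly(\F)$ by $s$. Fix any rational $s$ for which there is a polynomial-time $s$-gale $d$ with $d(\lambda)=1$ succeeding on every $X \in \F$; it suffices to prove $\Kpoly(\F) \le s$ and then take the infimum over all such $s$. We may assume $s$ rational, since given an $s$-gale $d$ and a rational $s' > s$, the function $w \mapsto 2^{(s'-s)|w|}d(w)$ is a polynomial-time $s'$-gale that dominates $d$ and hence also succeeds on $\F$.

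First I would pass from the gale to a measure by setting $\mu(w) = 2^{-s|w|} d(w)$. The $s$-gale identity $d(w0)+d(w1) = 2^s d(w)$ gives $\mu(w0)+\mu(w1)=\mu(w)$ with $\mu(\lambda)=1$, so $\mu$ is a probability measure on $\Sigma^\infty$ whose values are polynomial-time computable to within any desired dyadic precision. Next comes the crux: an arithmetic-coding argument against $\mu$. Assign to each $w$ the nested dyadic subinterval $I_w \subseteq [0,1)$ of length $\mu(w)$, where $I_{w0}$ and $I_{w1}$ split $I_w$ in the natural left-to-right order. Encode a string $w$ of length $n$ by a self-delimiting description of $n$ followed by a dyadic rational lying inside $I_w$, specified with $\lceil -\log_2 \mu(w)\rceil + O(1)$ bits. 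To decode, a machine starts at the root and, at each of the $n$ levels, evaluates $d$ at one node to locate the split point, decides which half contains the code point, and recurses. Since $d$ is polynomial-time computable, the decoder evaluates $d$ at only $n$ nodes and runs in time polynomial in $n$, so there is a fixed polynomial $t$ with
\[ K_t(w) \le -\log_2 \mu(w) + O(\log |w|) = s|w| - \log_2 d(w) + O(\log|w|) \]
for every $w$.

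The main obstacle I anticipate is exactly this step. The naive route of describing $w$ by its rank among the strings of high capital requires enumerating all $2^n$ strings of length $n$ and is therefore exponential, so the polynomial-time bound genuinely relies on using $d$ itself to guide the decompression along a single path. Some care is also needed to fix a precision (say $2^{-(2n+c)}$) at which the approximate values of $d$ keep the interval comparisons decisive; this overhead is absorbed into the $O(\log|w|)$ term.

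Finally I would use success to extract the liminf bound. For $X \in \F$ the gale succeeds, so there is a subsequence $n_1 < n_2 < \cdots$ with $d(X \restr n_i) \to \infty$; in particular $\log_2 d(X \restr n_i) \ge 0$ for large $i$. Along this subsequence the displayed inequality gives $K_t(X \restr n_i)/n_i \le s - \log_2 d(X \restr n_i)/n_i + O(\log n_i)/n_i \le s + o(1)$, whence $\liminfn K_t(X \restr n)/n \le s$. Taking the supremum over $X \in \F$ and then the infimum over the polynomial time bounds in the definition of $\Kpoly$ gives $\Kpoly(\F) \le s$, and taking the infimum over all admissible $s$ yields $\Kpoly(\F) \le \cdimp(\F)$, as required.
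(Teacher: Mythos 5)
Your proposal is correct and is essentially the paper's own argument (Section~\ref{sec:technicallemmas}): converting the $s$-gale into the measure $p(x)=d(x)2^{-s|x|}$, encoding a string by a short dyadic rational inside its cumulative-probability interval, and decoding in polynomial time by binary search down the tree using $d$ to locate the split at each level, then extracting the $\liminf$ bound along the success subsequence. Your per-string inequality $K_t(w)\le s|w|-\log_2 d(w)+O(\log|w|)$ is a mild refinement of the paper's $sn+O(1)$ bound for strings with $d(x)>1$, and the precision worry you flag is handled in the paper by working with \emph{exact} polynomial-time computable gales (via the Exact Computation Lemma, which is also why one should reduce to $2^{s}\in\Q$ rather than merely $s$ rational).
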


The question whether the reverse inequality holds has remained elusive. Hitchcock and Vinodchandran \cite{dercconference,dercjournal} and later Stull in \cite{stullsurvey} posed the following open question:

\begin{question}[\cite{dercconference,dercjournal,stullsurvey}]
	\label{ques:sequencequestion}
	Is it true that, for every sequence $X \in \Sigma^\infty$
	\begin{align*}
		\cdimp(X) = \Kpoly(X) \;?
	\end{align*}
	
\end{question}

\subsubsection*{Polynomial-time Strong Dimension}

%
%

\textbf{Polynomial-time strong dimension} is a dual notion of polynomial time dimension \cite{athreya2007effective,stullsurvey}. As in the case of polynomial time dimension, there exist two notions, one defined using $s$-gales and the other defined in terms of time bounded Kolmogorov complexity. An $s$-gale $d$ strongly succeeds on an infinite sequence $X$ if  $\liminf\limits_{n \to \infty} d(X_1X_2\dots X_n) = \infty$.

\begin{definition*}[Polynomial-time strong dimension \cite{athreya2007effective,stullsurvey}]
	The polynomial-time dimension of $ \F \subseteq \Sigma^\infty$ is defined as 
	\begin{align*}
		\cDimp(\F) = \inf \{s \mid \exists \text{ a polynomial-time  } s\text{-gale } d \text{ such that d strongly succeeds on all } X \in \F \}.
	\end{align*}
	For a sequence $X \in \Sigma^\infty$, define $ \cDimp(X) = \cDimp(\{X\})$.
\end{definition*}
The strong dimension analogue of $\Kpoly$ is defined by replacing the $\liminf$ in the definition of $\Kpoly$ with $\limsup$.
\begin{definition*}[\cite{dercconference,dercjournal}]
	For any $\F \subseteq \Sigma^\infty$,
	\begin{align*}
		\Kstrpoly(\F) = \inf\limits_{t \in \poly}\sup\limits_{X \in \F}\limsup\limits_{n \rightarrow \infty} \frac{K_t(X \restr n)}{n}.
	\end{align*}
\end{definition*}

Similar to the conclusion of Theorem \ref{thm:hitchcockvinodchandran}, $\cDimp$ is always greater than or equal to $\Kstrpoly$.

\begin{theorem}[\cite{dercconference,dercjournal}]   \label{thm:hitchcockvinodchandranstr}%
	For every $\F \subseteq \Sigma^\infty$,  $\Kstrpoly (\F) \leq \cDimp(\F)$.
\end{theorem}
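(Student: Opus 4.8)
The plan is to reduce this to the same compression estimate that underlies Theorem~\ref{thm:hitchcockvinodchandran}, changing only the step where $\liminf$ is replaced by $\limsup$. The single ingredient both statements share is a lemma converting capital accumulated by a polynomial-time $s$-gale into a short, polynomial-time decompressible description: for every polynomial-time $s$-gale $d$ there is a polynomial $t$ such that
\begin{align*}
K_t(w) \leq s\lvert w\rvert - \log d(w) + O(\log \lvert w\rvert)
\end{align*}
for all $w \in \Sigma^*$. Once this is in hand, the strong-dimension bound follows because \emph{strong} success forces the capital to be large on \emph{every} sufficiently long prefix, not merely on infinitely many of them, and it is precisely this ``eventually'' behaviour that controls the $\limsup$ appearing in $\Kstrpoly$, in contrast to the $\liminf$ governed by ordinary success in Theorem~\ref{thm:hitchcockvinodchandran}.

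For the compression lemma I would encode $w$ by arithmetic coding against the conditional distribution induced by $d$. At a prefix $u$, set $p_b = d(ub)/(2^s d(u))$ for $b \in \Sigma$; the gale identity $d(u0)+d(u1)=2^s d(u)$ guarantees $p_0+p_1 = 1$, so these are genuine probabilities, and the product telescopes:
\begin{align*}
\prod_{i=1}^{n} p_{w_i} = \frac{d(w)}{2^{sn} d(\lambda)} = \frac{d(w)}{2^{sn}}.
\end{align*}
Arithmetic coding therefore yields a codeword of length $-\log \prod_i p_{w_i} + O(1) = sn - \log d(w) + O(1)$, and decoding simply re-runs $d$ forward one bit at a time, which is polynomial per step and hence polynomial overall; specifying $\lvert w\rvert$ and the coding precision costs the additional $O(\log \lvert w\rvert)$ bits. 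This is the analogue of the estimate proved in Section~\ref{sec:technicallemmas}.

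To finish, fix any $s > \cDimp(\F)$ and take a polynomial-time $s$-gale $d$ that strongly succeeds on every $X \in \F$, i.e. $\liminfn d(X\restr n) = \infty$. Then for each $X$ there is $N_X$ with $d(X\restr n) \geq 1$ for all $n \geq N_X$, so $\log d(X\restr n) \geq 0$ and the compression lemma gives
\begin{align*}
\frac{K_t(X\restr n)}{n} \leq s + \frac{O(\log n)}{n}
\end{align*}
for all $n \geq N_X$. Hence $\limsupn K_t(X\restr n)/n \leq s$ for every $X \in \F$, so that $\sup_{X\in\F}\limsupn K_t(X\restr n)/n \leq s$ and therefore $\Kstrpoly(\F) \leq s$. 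Letting $s \downarrow \cDimp(\F)$ yields $\Kstrpoly(\F) \leq \cDimp(\F)$.

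The main obstacle lies entirely inside the compression lemma: keeping the decoder polynomial-time while arithmetic-coding against a gale whose values may require many bits of precision. One must truncate $d$ to $\poly(n)$ bits, argue that the induced rounding changes each codeword length by only $O(\log n)$ (absorbed into the overhead), and ensure a \emph{single} polynomial $t$ works uniformly for all prefixes of all $X \in \F$ — the last point is immediate here, since $t$ depends only on the fixed gale $d$ and not on the individual sequence $X$.
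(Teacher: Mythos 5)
Your proposal is correct and takes essentially the same route as the paper: the paper's proof of Theorem \ref{thm:hitchcockvinodchandran2} encodes each prefix by a dyadic subinterval of the cumulative distribution induced by $p(x) = d(x)\,2^{-s\lvert x\rvert}$ (i.e.\ arithmetic coding against the gale, decoded in $\poly(t(n))$ time by re-running $d$ bit by bit), and then concludes exactly as you do, using that strong success forces $d(X \restr n) \geq 1$ for all but finitely many $n$, which converts the $\liminf$ bound into the $\limsup$ bound. The only difference is cosmetic: the truncation/precision issue you flag at the end does not arise in the paper because $\cDimp$ is defined via \emph{exact} polynomial-time computable gales (cf.\ Lemma \ref{lem:LutzExactComputation}).
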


Stull in \cite{stullsurvey} posed the following question:  \emph{Are the notions of polynomial time strong dimension formulated using $s$-gales ($\cDimp$) and time bounded Kolmogorov complexity ($\Kstrpoly$) equivalent?}.

\subsubsection*{Polynomial time samplable distributions on infinite sequences}
We generalize the notion of polynomial time samplable distributions \cite{ilango2022robustness,hirahara2023duality,liu2020one} to probability distributions on the space of infinite sequences (see Section \ref{subsec:ptimesamplabledefinition} for formal definitions).

Let $\nu$ be a probability measure over $\Sigma^\infty$. For every $n$, let $\nu_n(w) = \nu(X \in \Sigma^\infty : w \prefix X)$\footnote{$\prefix$ denotes the prefix operator} denote the probability distribution induced by $\nu$ on $\Sigma^n$. Now, we define polynomial time samplable distributions on $\Sigma^\infty$.

\begin{restatable}{definition}{ptimsampldistn}
	A probability distribution $\nu$ over $\Sigma^\infty$ is polynomial time samplable if there exists a probabilistic Turing machine $M$ that uses $q(n)$ random bits, where $q$ is a polynomial, such that for every $n$ and $w \in \Sigma^n$, $\pr_{r \sim \Sigma^{q(n)}}[M(1^n,r)=w]=\nu_n(w)$\footnote{An alternate way of defining polynomial time samplable distributions on $\Sigma^\infty$ is the following: $\nu=\{\nu_n\}$ is a \emph{polynomial time samplable distribution on $\Sigma^\infty$} if $\{\nu_n\}$ is a \emph{polynomial time samplable distribution} (in the sense of \cite{ilango2022robustness,hirahara2023duality,liu2020one}) such that for any $w \in \Sigma^n$, $\nu_{n}(w) = \nu_{n+1}(w.0) + \nu_{n+1} (w.1)$. The equivalence of these notions follows using routine measure theoretic arguments.}. 
\end{restatable}

One-way functions are secure against inversion by probabilistic polynomial time adversaries. However, polynomial time gales are defined in terms of computation using deterministic machines. In order to bridge this gap, we define martingales and $s$-gales that are approximable using probabilistic polynomial-time machines. 

\begin{restatable}{definition}{ptimapproxdistn}
	Let \( d : \Sigma^* \to [0, \infty) \cap \Q \) be an $s$-supergale and $\nu$ be any probability distribution over $\Sigma^\infty$. $d$ is $t(n)$-time $\nu$-approximable  if for every constant $k$ there exist a probabilistic $t(n)$-time machine $M$ 
	and constant $c<1$  such that for every $n$, $\{w\in \Sigma^n :M(w) \not\in [c\cdot d(w),d(w)]\} \subseteq \mathrm{supp}(\nu_n)$ and $\nu_n\{w \in \Sigma^n :M(w) \not\in [c\cdot d(w),d(w)]\} \leq n^{-k}$.\end{restatable}

In the above, $\mathrm{supp}(\nu_n)$ denotes the support of the distribution $\nu_n$, i.e. $\mathrm{supp}(\nu_n) = \{w \in \Sigma^n : \nu_n(w) > 0\}$. A supergale $d$ is $\nu$-approximable if the set of strings at which the algorithm $M$ makes an error in approximating $d(w)$ up to a constant multiplicative factor lies within the support of $\nu_n$, and the measure of the set of such strings according to $\nu_n$ is bounded by an inverse polynomial function.

\subsection{Our Results}

\subsubsection{One way functions and Dimension gaps}

Our main result (Theorem \ref{thm:equivalencetheorem}) shows that the existence of one-way function implies the existence  of uniform dimension gaps between	$\cdimp$ and $\Kpoly$ over a collection of sequences in $\Sigma^\infty$, such that the collection has probability 1 according to a short seed polynomial time samplable distribution. Futhermore we show that such a distrution implies existence of infinitely often one way functions.

\begin{restatable}{theorem}{equivalencetheorem} \label{thm:equivalencetheorem}
	Suppose that one-way functions exist. Then, for every $s < \frac{1}{2}$, there exists a polynomial-time samplable distribution $\nu$ over $\Sigma^\infty$ such that:
	\begin{enumerate}
		\item The number of random bits used by the sampler for $\nu$ on input $1^n$ is at most $sn$\footnote{See the remarks at the end of section \ref{sec:maintheoremintro} regarding dimension gaps and short seed distributions.}.
		\item For every $s' \in (s, \frac{1}{2})$ and every polynomial-time $\nu$-approximable $s'$-supergale $d$, we have
		\[
		\nu(S^\infty(d)) = 0.
		\]
	\end{enumerate}
	Furthermore, this implies the existence of infinitely-often one-way functions\footnote{See the remarks in section \ref{sec:ioowfs} regarding infinitely-often one-way functions and polynomial time dimension and also Lemma \ref{lem:equivalenceowfconverse}.}.
\end{restatable}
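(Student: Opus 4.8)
The plan is to prove both directions of the claimed duality. For the forward direction I would use a one-way function to obtain, via the standard pseudorandom generator construction from one-way functions (HILL), a generator $G\colon \Sigma^{m} \to \Sigma^{L(m)}$ of arbitrary polynomial stretch, and then manufacture $\nu$ by a block construction that deliberately pairs each pseudorandom bit with an exact copy of itself. Concretely, partition the positions of an infinite sequence into consecutive blocks $B_1, B_2, \dots$ of lengths $\ell_1 < \ell_2 < \cdots$ growing polynomially; to sample $B_i$, draw a seed $\sigma_i \in \Sigma^{s\ell_i}$ uniformly, compute $G(\sigma_i) = e_1 e_2 \cdots e_{\ell_i/2}$, and emit $e_1 e_1 e_2 e_2 \cdots e_{\ell_i/2} e_{\ell_i/2}$, so that each odd position carries a fresh pseudorandom bit and each even position repeats it. This sampler is polynomial time and uses only $s\ell_i$ random bits per block, hence at most $sn$ bits on a length-$n$ prefix, giving Property~1. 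Since the prefix is reconstructed in polynomial time from $\sigma_1,\dots,\sigma_i$, one also gets $K_t(X\restr n) \le sn + O(\log n)$ for a fixed polynomial $t$ and $\nu$-almost every $X$, so $\Kpoly \le s$; this is what turns Property~2 into a genuine gap rather than a vacuous statement.

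For Property~2, fix $s' \in (s, \tfrac12)$ and a polynomial-time $\nu$-approximable $s'$-supergale $d$, and aim for $\nu(S^\infty(d)) = 0$. The idea is to bookkeep capital separately on the copy positions and the pseudorandom positions. On a copy position the correct value has already been revealed, so $d$ can wager everything and multiply its capital by at most $2^{s'}$; on a pseudorandom position $d$ faces $G(\sigma_i)$, and by the security of $G$ its behaviour is indistinguishable from its behaviour on uniform bits. Passing to the martingale $M(w) = 2^{(1-s')|w|} d(w)$ and applying Kolmogorov's maximal inequality within each block bounds, under uniform bits, the probability that the capital ever rises by a factor $\lambda_i$ during block $i$ by $1/\lambda_i$; because $d$ (through its approximator) is polynomial time, this is a polynomial-time predicate of the block, so the bound transfers to $\nu$ up to a negligible additive loss by a distinguisher argument --- and this is exactly where $\nu$-approximability is used. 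Choosing $\lambda_i = i^2$ makes the failure probabilities summable, so by the Borel--Cantelli Lemma the running-maximum capital multiplier across block $i$ is, eventually and $\nu$-almost surely, at most $\lambda_i\, 2^{(2s'-1)\ell_i/2}$. Since $s' < \tfrac12$ and $\ell_i \to \infty$, this tends to $0$; the block-boundary capital vanishes and the within-block maxima are swamped, whence $\limsup_n d(X\restr n) < \infty$ for $\nu$-almost every $X$, i.e. $\nu(S^\infty(d)) = 0$.

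For the converse ("Furthermore") I would prove the contrapositive: if infinitely-often one-way functions do not exist, then Property~2 must fail for some $\nu$-approximable $s'$-supergale with $s' \in (s,\tfrac12)$, so no such gap distribution can exist. Here I invoke the universal extrapolation theorem of Ilango, Ren, and Santhanam: the nonexistence of (io-)one-way functions yields a polynomial-time procedure that, given a prefix $w$, approximately estimates the conditional probabilities of the samplable distribution $\nu$. Feeding these estimates into the canonical bet-according-to-$\nu$ rule $d(w) = 2^{s'|w|}\nu_{|w|}(w)$ produces a polynomial-time $\nu$-approximable $s'$-supergale whose log-capital grows at rate $s' - H$, where $H = s$ is the entropy rate enforced by the short seed (the copies contribute no entropy). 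For any $s' \in (s,\tfrac12)$ this is strictly positive, so the gale succeeds on a $\nu$-measure-one set, contradicting Property~2 and forcing io-one-way functions to exist.

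I expect the main obstacle to be the transfer step in the forward direction: success is a $\limsup$ event, so controlling the capital only at block endpoints is insufficient --- one must control the entire trajectory, which is precisely why Kolmogorov's inequality (rather than a plain Markov bound at block ends) is required, and one must verify that the running-maximum event remains a polynomial-time predicate so that the security of $G$ applies to it and so that the $\nu$-approximator's off-support behaviour does not corrupt the distinguisher. A secondary difficulty, in the converse, is reconciling the two-sided multiplicative guarantee demanded by $\nu$-approximability with the one-sided estimates that extrapolation naturally supplies, which must be done carefully so that the constructed supergale is genuinely $\nu$-approximable while retaining its success.
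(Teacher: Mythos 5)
Your forward direction is correct in outline but takes a genuinely different route from the paper. The paper fixes a hypothetical $\nu$-approximable $s'$-supergale with $\nu(S^\infty(d))>0$ and derives a contradiction: it first converts $d$ into a martingale that, on every successful sequence, gains a factor $2^{(1-\tilde{s})2^{n-1}}$ across the dyadic block $[2^{n-1},2^n]$ for infinitely many $n$ (Lemmas \ref{lem:galetofastgrowingmgale} and \ref{lem:optimalmartingaleconstruction}, which force $\tilde{s}>2s'$ and hence $s'<1/2$, and which crucially need the paper's \emph{exponentially} growing PRG blocks so that the most recent block accounts for half the prefix); it then uses a reverse Borel--Cantelli extraction (Lemma \ref{lem:BorelCant}) to turn ``infinitely many big-gain blocks per sequence, on a positive-measure set'' into ``a $1/n^2$ fraction of seeds at infinitely many lengths,'' which feeds a distinguisher. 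You instead argue \emph{directly} for every approximable gale: your bit-doubling construction makes the induced process on fresh bits a $2s'$-supergale (this is exactly where your $s'<\tfrac12$ enters, paralleling the paper's $\tilde{s}>2s'$), Kolmogorov's maximal inequality bounds each block's running maximum under uniform fresh bits, PRG security transfers each per-block bound to $\nu$, and the \emph{first} Borel--Cantelli lemma with $\lambda_i=i^2$ gives almost-sure boundedness of the capital. This buys you a simpler structure --- no gale-to-fast-martingale lemmas, no contradiction argument, and polynomially growing blocks suffice because you never need a single block to carry the whole gain --- at the cost of more delicate per-block bookkeeping: you must verify that the running-maximum predicate evaluated through the approximator sandwiches the true event up to the constant $\constc$, that the hybrid (uniform block appended to a $\nu$-sampled prefix) puts only exponentially small mass $2^{(s-1/2)\ell_i}$ on $\supp(\nu_{L_i})$ where the approximator's guarantee could fail, and that the sampled-prefix distinguisher family is uniform so that asymptotic PRG security yields summable advantages. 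These all appear to go through, and the analogous support-counting and $\constc$-slack arguments are exactly what the paper does in its uniform-input analysis.

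The converse, however, has a concrete gap you flagged but did not close. The definition of $\nu$-approximability demands not only that $M$ approximate $d(w)=2^{s'|w|}\nu_{|w|}(w)$ within $[\constc\cdot d(w),d(w)]$ with high $\nu_n$-probability, but also that the error set be \emph{contained in} $\supp(\nu_n)$, i.e.\ $M$ must be exactly right (output $0$) on every off-support string. Universal extrapolation gives no guarantee whatsoever off the support, so ``feeding the estimates into the canonical betting rule'' does not by itself produce a $\nu$-approximable supergale, and your worry about one-sided versus two-sided estimates is not the real issue (the definition's interval $[\constc\cdot d,d]$ is already one-sided and matches the extrapolator's output range directly). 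The paper resolves this by using the non-existence of infinitely-often one-way functions a \emph{second} time: since the sampler $S$ is then not even weakly one-way, there is an inverter $\mathcal{I}$ for $f(x)=S(1^{|x|^{1/c'}},x)$, and $M$ first runs $\mathcal{I}$ on $w$, outputting $0$ whenever $f(\mathcal{I}(w))\neq w$; off support the inverter can never succeed, so $M(w)=0=d(w)$ there, while in-support inversion failures have $\nu_n$-mass $O(n^{-q})$ and fold into the error bound. Without this (or an equivalent support-detection mechanism), your constructed gale fails the containment clause of the definition and the contradiction with Property~2 does not go through. Separately, note that the paper's short-seed lower bound $\nu_n(w)\geq 2^{-sn}$ on the support is what makes $d$ succeed $\nu$-almost surely --- your ``entropy rate $H=s$'' phrasing is the right intuition but should be grounded in exactly this pointwise bound.
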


%

The existence of polynomial-time samplable distributions satisfying the condition 1 implies that, with probability $1$ according to $\nu$, the polynomial-time Kolmogorov complexity $\Kpoly(X)$ of $X \in \Sigma^\infty$ is at most $s$. We later show in the proof of Lemma \ref{thm:thmforSeqpsrg} that the polynomial-time constructive dimension $\cdimp(X)$ of almost every such sequence is at least $1/2$. Since $s < 1/2$, this establishes the existence of a dimension gap on a probability 1 set according to $\nu$.

\subsubsection{Applications}

As an application of Theorem \ref{thm:equivalencetheorem}, we show that if one-way functions exist then equality does not hold between $\cdimp$ and $\Kpoly$ for sets $\F \subseteq \Sigma^\infty$. To show this, we prove the contrapositive statement that if the equality holds, then one-way functions do not exist. 

\begin{restatable}{theorem}{thmforSetspsrg} \label{thm:forSetspsrg}%
	If for all $\F \subseteq \Sigma^\infty$, $$\cdimp(\F) = \Kpoly(\F),$$
	then one-way functions do not exist.
\end{restatable}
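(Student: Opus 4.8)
The plan is to prove the contrapositive directly, exactly as the statement is framed: assuming the robustness equality $\cdimp(\F) = \Kpoly(\F)$ holds for all $\F \subseteq \Sigma^\infty$, derive that one-way functions cannot exist. The natural engine for this is Theorem~\ref{thm:equivalencetheorem}, which we are permitted to assume. That theorem says that \emph{if} one-way functions exist, then for every $s < \tfrac{1}{2}$ there is a short-seed polynomial-time samplable distribution $\nu$ and an associated dimension gap: for every $s' \in (s,\tfrac12)$ and every polynomial-time $\nu$-approximable $s'$-supergale $d$, one has $\nu(\regSS(d)) = 0$. So the skeleton of the argument is: suppose for contradiction that one-way functions \emph{do} exist; invoke Theorem~\ref{thm:equivalencetheorem} to obtain such a $\nu$ for some fixed $s < \tfrac12$; then exhibit a set $\F$ (of $\nu$-measure~$1$) on which the two quantities $\cdimp$ and $\Kpoly$ are provably separated, contradicting the assumed equality.

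The key steps, in order, are as follows. First I would fix $s < \tfrac12$ and obtain $\nu$ from Theorem~\ref{thm:equivalencetheorem}, satisfying conditions~1 and~2. Second, I would identify the witnessing set $\F$. The most natural candidate is $\F = \regSS(\nu)$-a-full set, or more concretely the set of sequences $X$ that are ``typical'' for $\nu$; the excerpt already announces (in the paragraph after Theorem~\ref{thm:equivalencetheorem} and in the proof of Lemma~\ref{thm:thmforSeqpsrg}) that $\Kpoly(X) \le s$ holds for $\nu$-almost every $X$, while $\cdimp(X) \ge \tfrac12$ for $\nu$-almost every such $X$. Third, I would combine these two $\nu$-almost-everywhere bounds via the Borel--Cantelli-style / countable-intersection argument to produce a single set $\F_0$ of $\nu$-measure~$1$ on which simultaneously $\Kpoly(X) \le s$ and $\cdimp(X) \ge \tfrac12$; since $\nu(\F_0) = 1 > 0$, this set is nonempty. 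Fourth, I would lift the pointwise separation to the level of sets: taking $\F = \F_0$, we get $\Kpoly(\F) \le s$ (since $\Kpoly$ of the set is a supremum of $\liminf$ quantities each bounded by $s$) while $\cdimp(\F) \ge \tfrac12$ (since any $s'$-gale succeeding on all of $\F$ would, in particular, succeed on a $\nu$-positive set, contradicting condition~2). Thus $\cdimp(\F) \ge \tfrac12 > s \ge \Kpoly(\F)$, directly contradicting the assumed equality $\cdimp(\F) = \Kpoly(\F)$. Since the assumption ``one-way functions exist'' led to a contradiction with the robustness equality, the equality forces one-way functions not to exist.

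The main obstacle I anticipate is the fourth step, specifically the passage from the $\nu$-approximability of supergales (as in condition~2 of Theorem~\ref{thm:equivalencetheorem}, which is a \emph{probabilistic}, approximate, $\nu$-specific guarantee) to a statement about \emph{deterministic} polynomial-time $s'$-gales, which is what enters the definition of $\cdimp$. Condition~2 only rules out $\nu$-approximable supergales, whereas $\cdimp(\F)$ quantifies over all deterministic polynomial-time $s'$-gales succeeding on $\F$. Bridging this gap requires arguing that a deterministic polynomial-time $s'$-gale succeeding on all of $\F$ would yield (perhaps after the standard exact-computation-implies-approximability observation, or by the fact that deterministic gales are trivially $\nu$-approximable with zero error) a $\nu$-approximable $s'$-supergale succeeding on a set of positive $\nu$-measure, contradicting $\nu(\regSS(d)) = 0$. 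I expect this reconciliation — together with carefully checking that the constant gap $\tfrac12 - s > 0$ survives all the measure-theoretic unions — to be where the real care is needed; the remaining steps are largely bookkeeping built on Theorem~\ref{thm:equivalencetheorem} and the almost-sure bounds already flagged in the excerpt.
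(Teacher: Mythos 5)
Your skeleton is the paper's: assume OWFs, invoke the forward direction (the paper uses Lemma \ref{lem:equivalenceforward}, which is condition 1 and 2 of Theorem \ref{thm:equivalencetheorem}), exhibit a $\nu$-measure-one set $\F$ with $\Kpoly(\F)\le s$ and $\cdimp(\F)\ge 1/2$, and get the lower bound on $\cdimp(\F)$ exactly as you do in your fourth step: a deterministic exact polynomial-time $s'$-gale is trivially $\nu$-approximable (empty error set), so condition 2 forbids it from succeeding on a set of $\nu$-measure $1$. That reconciliation you flagged as the main obstacle is indeed a one-liner, and it is precisely what the paper writes in its proof of Theorem \ref{thm:owfimpliesdimensiongaponsets}.

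There is, however, one genuine gap in your justification of the other half. You derive $\Kpoly(\F)\le s$ from the pointwise bounds $\Kpoly(X)\le s$ for ($\nu$-almost) every $X\in\F$, ``since $\Kpoly$ of the set is a supremum of $\liminf$ quantities each bounded by $s$.'' This swaps quantifiers: by definition $\Kpoly(\F)=\inf_{t\in\poly}\sup_{X\in\F}\liminf_n K_t(X\restr n)/n$, with the infimum over time bounds \emph{outside} the supremum over sequences, so $\Kpoly(\F)$ can exceed $\sup_{X\in\F}\Kpoly(X)$ when different sequences need different polynomials $t$. The paper avoids this by choosing the witness set concretely as $\F=\bigcap_{n\ge 1}\bigcup_{w\in\supp(\nu_n)}C_w$ (nonempty by compactness, of $\nu$-measure $1$ by continuity from above): every prefix $X\restr n$ of every $X\in\F$ is an output of the single polynomial-time sampler on a seed of length at most $sn$, so one fixed polynomial $t$ gives $K_t(X\restr n)\le sn+O(1)$ uniformly in $X$ and $n$, yielding even $\Kstrpoly(\F)\le s$. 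Your construction is repairable the same way (intersect your $\F_0$ with this support set), but as stated the inference is invalid. Relatedly, your detour through the almost-everywhere pointwise bound $\cdimp(X)\ge 1/2$ is both unnecessary for the sets theorem and much more expensive than you suggest: in the paper that pointwise statement (Lemma \ref{lem:owfimpliesdimensiongapwithprobability1}) requires the measure-partitioning trick and the universal gale-combination construction (Lemma \ref{lem:Mgalecombination}), machinery reserved for the sequences version (Theorem \ref{thm:thmforSeqpsrg}); for sets, your own direct set-level argument in step 4 is all that is needed, and is what the paper uses.
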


We then show the stronger result that if one-way functions exist then then equality does not hold between $\cdimp$ and $\Kpoly$ for sequences $X \in \Sigma^\infty$.

\begin{restatable}{theorem}{thmforSeqpsrg} \label{thm:thmforSeqpsrg}
	If for all $X \in \Sigma^\infty$, $$\cdimp(X) = \Kpoly(X),$$
	then one-way functions do not exist.
\end{restatable}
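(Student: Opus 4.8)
The plan is to prove the contrapositive: assuming one-way functions exist, I will exhibit a single sequence $X \in \Sigma^\infty$ with $\cdimp(X) > \Kpoly(X)$, which refutes the hypothesis $\cdimp(X) = \Kpoly(X)$ for all $X$. Fix any rational $s < \tfrac12$ and invoke Theorem~\ref{thm:equivalencetheorem} to obtain a polynomial-time samplable distribution $\nu$ over $\Sigma^\infty$ satisfying conditions (1) and (2). The strategy is to argue that both the event $\{X : \Kpoly(X) \le s\}$ and the event $\{X : \cdimp(X) \ge \tfrac12\}$ have $\nu$-measure $1$; their intersection is then nonempty, and any member of it exhibits the desired strict gap.

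For the upper bound on $\Kpoly$, I would use the short-seed property (condition 1). Let $M$ be the sampler, which on input $1^n$ uses a seed $r$ of length $q(n) \le sn$ and outputs $w \in \Sigma^n$ in polynomial time. Hence every $w \in \supp(\nu_n)$ admits a description consisting of its seed together with a fixed-size account of $M$, yielding a polynomial time bound $t$ with $K_t(w) \le sn + O(\log n)$. Dividing by $n$ and taking the $\liminf$ shows $\Kpoly(X) \le s$ for every $X$ all of whose prefixes lie in the supports $\supp(\nu_n)$; by the consistency of the $\nu_n$ (the footnoted relation $\nu_n(w)=\nu_{n+1}(w0)+\nu_{n+1}(w1)$), this holds for $\nu$-almost every $X$.

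The main work---and the main obstacle---is the matching lower bound $\cdimp(X) \ge \tfrac12$ for $\nu$-a.e.\ $X$, because condition (2) constrains only $\nu$-approximable supergales, whereas $\cdimp$ is defined through exactly computable deterministic gales. I would bridge this as follows. A deterministic rational-valued polynomial-time $s'$-gale is trivially $\nu$-approximable (take the approximating machine to be $d$ itself, with empty error set). Moreover, any $s'$-gale is an $s''$-supergale for every $s'' > s'$ and succeeds on at least the same sequences. Thus if $\cdimp(X) < \tfrac12$, witnessed by some rational $s' < \tfrac12$ and a polynomial-time $s'$-gale $d$ succeeding on $X$, I may choose a rational $s'' \in (\max(s,s'),\tfrac12)$ and view $d$ as a $\nu$-approximable $s''$-supergale; condition (2) then forces $\nu(S^\infty(d)) = 0$. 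Since polynomial-time gales are countable (enumerate clocked machines), the set $\{X : \cdimp(X) < \tfrac12\}$ is contained in a countable union of such success sets, each of $\nu$-measure $0$, and hence itself has $\nu$-measure $0$.

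Combining the two estimates, $\nu$-almost every $X$ satisfies $\Kpoly(X) \le s < \tfrac12 \le \cdimp(X)$, so the set of sequences with $\cdimp(X) - \Kpoly(X) \ge \tfrac12 - s > 0$ has $\nu$-measure $1$ and is in particular nonempty. Any such $X$ contradicts the assumed equality, completing the contrapositive. I expect the delicate points to be the reduction from deterministic gales to $\nu$-approximable supergales (including the rational-valuedness convention that makes the approximation exact) and the verification that the countable union over clocked machines genuinely captures every sequence of polynomial-time dimension below $\tfrac12$.
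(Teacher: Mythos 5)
Your proof is correct, and at the decisive step it takes a genuinely different route from the paper. Both arguments open the same way: invoke the forward direction (Theorem \ref{thm:equivalencetheorem} / Lemma \ref{lem:equivalenceforward}), use the short-seed sampler to get $\Kpoly(X)\leq s$ (the paper in fact records the stronger $\Kstrpoly(X)\leq s$) on the $\nu$-measure-one set of sequences all of whose prefixes lie in $\supp(\nu_n)$, and then show $\cdimp(X)\geq 1/2$ holds $\nu$-almost everywhere. For this last step, however, the paper does \emph{not} use your countable-union argument: it assumes for contradiction a uniform $s'\in(s,1/2)$ with $\cdimp(X)<s'$ for all $X\in\F$, partitions $\F$ by the running time $n^i$ of the witnessing gales into countably many pieces, extracts finitely many pieces of total positive $\nu$-measure, and then applies a dedicated gale-combination construction (Lemma \ref{lem:Mgalecombination}) to merge the finitely many time-bounded classes into a \emph{single} exact $n^{k+c}$-time $s'$-gale succeeding on a positive-measure set, contradicting condition (2) with one gale. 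You instead note that exact polynomial-time gales are rational-valued and hence trivially $\nu$-approximable with empty error set, that an $s'$-gale is an $s''$-supergale for any rational $s''\in(\max(s,s'),1/2)$ with the same success set, and that clocked machines are countable, so $\{X:\cdimp(X)<1/2\}$ is covered by a countable union of $\nu$-null success sets. This is more elementary: it dispenses with Lemma \ref{lem:Mgalecombination} entirely, and it also sidesteps a uniformity point the paper's reductio glosses over (the negation ``$\cdimp(X)<1/2$ for all $X\in\F$'' does not immediately furnish a single $s'$ bounding all dimensions, whereas your per-machine choice of rational $s''$ handles this cleanly); your conclusion matches the measure-one statement the paper records separately as Lemma \ref{lem:owfimpliesdimensiongapwithprobability1}. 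What the paper's heavier route buys is an \emph{effective} object: the universal combined gale of Lemma \ref{lem:Mgalecombination} is reused in Section \ref{sec:maximalgaps}, where a single explicitly computable gale must be fed into the martingale transformation and the distinguisher algorithm, and there a purely measure-theoretic union of null sets would not suffice. Your compression of the paper's compactness discussion is also sound, since intersecting two $\nu$-measure-one events already yields nonemptiness.
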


Thus, conditioned on the existence of one-way functions, we demonstrate the existence of sequences $X$ such that $\Kpoly (X) < \cdimp(X)$. 

In the last section of the paper we demonstrate how the proof of Theorem \ref{thm:thmforSeqpsrg} can be extended to show that if one-way functions exist, then the distance between these quantities can be arbitrarily close to the maximum possible value of $1$. 

\begin{restatable}{theorem}{gaptheoremsequences}
	\label{thm:gaptheoremsequences}
	If one-way functions exist, then for any $\epsilon>0$, there exists $ X \in \Sigma^\infty$ such that,
	\begin{align*}
		\cdimp(X) - \Kpoly (X) \geq 1-\epsilon.
	\end{align*}
\end{restatable}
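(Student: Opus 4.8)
The goal is to exhibit, for each $\epsilon>0$, a single sequence $X$ with $\Kpoly(X)\le \epsilon/2$ and $\cdimp(X)\ge 1-\epsilon/2$; the two bounds then give $\cdimp(X)-\Kpoly(X)\ge 1-\epsilon$ directly. The distribution built for Theorem \ref{thm:equivalencetheorem} caps the unpredictability threshold at $s'=1/2$, which is exactly what limits the gap produced by Theorem \ref{thm:thmforSeqpsrg} to roughly $1/2$. The plan is to re-run the argument of Theorem \ref{thm:thmforSeqpsrg} on top of a \emph{stronger} short-seed samplable distribution whose samples are \emph{fully} pseudorandom, so that the unpredictability threshold rises from $1/2$ to a value arbitrarily close to $1$ while the seed rate, and hence $\Kpoly$, is simultaneously driven down to $\epsilon/2$.

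First I would fix the distribution. Assuming one-way functions exist, invoke the HILL construction \cite{HILL99} to obtain a pseudorandom generator $G$ of seed rate $\epsilon/2$, i.e.\ one that stretches a seed of length $\lceil(\epsilon/2)n\rceil$ to output length $n$ (linear stretch is available from one-way functions). Define $\nu$ over $\Sigma^\infty$ by independently generating consecutive blocks, a block of length $n_i$ being produced by drawing a fresh seed of length $\lceil(\epsilon/2)n_i\rceil$ and outputting $G$ of that seed. Independent blocks keep $\nu$ prefix-consistent, so it is a genuine measure on $\Sigma^\infty$, and poly-time samplable using at most $(\epsilon/2)n$ random bits on a length-$n$ prefix, matching the short-seed requirement of Theorem \ref{thm:equivalencetheorem}.

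The two bounds are then established separately. For compressibility, the concatenation of the block seeds describes any length-$n$ prefix in polynomial time, so $K_t(X\restr n)\le (\epsilon/2)n+O(\log n)$ for a fixed polynomial $t$, whence $\Kpoly(X)\le\epsilon/2$ for $\nu$-almost every $X$. For unpredictability I would show that no polynomial-time $s'$-supergale with $s'<1-\epsilon/2$ succeeds on a set of positive $\nu$-measure: such a $d$ has total mass $\sum_{|w|=n}d(w)\le 2^{s'n}$, so under the uniform measure $\mu$ the capital $d(X\restr n)$ is a nonnegative supermartingale whose expectation decays like $2^{-(\epsilon/2)n}$, and Kolmogorov's inequality bounds $\mu$ of the event that it ever crosses a constant threshold $\lambda$ by $1/\lambda$; if instead $d$ succeeded with positive $\nu$-probability, then comparing, block by block, the growth of $d$'s capital under $\nu$ against its provable decay under $\mu$ yields a polynomial-time distinguisher for a single invocation of $G$, contradicting its security. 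Since a deterministic polynomial-time supergale is $\nu$-approximable by itself, and there are only countably many polynomial time bounds and supergales, a Borel--Cantelli and countable-union argument across block lengths and across all such supergales keeps the union of all success sets, together with the set where $\Kpoly(X)>\epsilon/2$, still $\nu$-null; any $X$ in the remaining measure-one set satisfies both bounds.

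The main obstacle I expect is the supergale-to-distinguisher reduction in the unpredictability step: the success condition $\limsup_n d(X\restr n)=\infty$ is an \emph{infinitely-often} event, and converting it into a \emph{single}, \emph{polynomial-length} distinguishing experiment for $G$ requires localizing the first block boundary at which the capital exceeds the constant threshold, so that simulating $d$ runs in time polynomial in the seed length. This localization is precisely why the threshold is capped at $1-\epsilon/2$ rather than $1$: the uniform decay rate $2^{-(\epsilon/2)n}$ is what makes the threshold-crossing probability inverse-exponential and hence confines the $\nu$-versus-$\mu$ deviation to a single pseudorandom block, whose seed length is the security parameter; summing the resulting small per-block failure probabilities via Borel--Cantelli then forces $\limsup_n d(X\restr n)<\infty$ $\nu$-almost surely. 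Getting this block isolation and the accompanying concentration estimate right is the delicate part, and the analogous strong-dimension statement (with $\liminf$ in place of $\limsup$, comparing $\cDimp$ against $\Kstrpoly$) would require the dual concentration argument.
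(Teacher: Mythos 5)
Your short-seed distribution and the compressibility bound $\Kpoly(X)\le\epsilon/2$ match the paper's setup, and your closing countable-union argument over polynomial-time gales is a legitimate (even simpler) way to pass from a per-gale null-success property to an individual sequence. But the core of your argument --- the supergale-to-distinguisher reduction at thresholds $s'$ close to $1$ --- has a genuine gap, and you have misdiagnosed where the $1/2$ barrier comes from. It does not come from the distribution: the distribution of Lemma \ref{lem:equivalenceforward} already has arbitrarily small seed rate $s$, yet the threshold there still caps at $1/2$. The cap comes from the distinguisher analysis. To break a \emph{single} invocation of $G$ you must exhibit an event, computable from one block, whose probability is non-negligible under a pseudorandom block but exponentially small under a uniform block \emph{for every possible entering capital} $d(w')$. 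Your proposed event --- the capital's first crossing of a constant threshold $\lambda$, localized to a block --- fails this test: under a $\nu$-sampled prefix the entering capital can sit just below $\lambda$, and then a uniform challenge block crosses $\lambda$ with constant probability (a single aggressive bet suffices), so there is no gap against the $\ge n^{-2}$ guarantee on the pseudorandom side. Relatedly, your quantitative claim is off: after the supergale-to-supermartingale rescaling, Kolmogorov's inequality bounds the probability of \emph{ever} crossing a constant threshold $\lambda$ by a quantity of order $1/\lambda$, which is constant in $n$, not inverse-exponential; you are conflating the decay of the expectation at length $n$ with the crossing probability. The inverse-exponential bound holds only for the \emph{relative} growth event $\tilde{d}(w'w)\ge 2^{(1-\tilde{s})\lvert w\rvert}\tilde{d}(w')$, which is what Algorithm \ref{alg:distinguisher} actually tests.

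Once you switch to the relative-growth event, the opposite problem appears, and this is the paper's real obstacle: for $s'\ge 1/2$, a succeeding $s'$-gale need not ever multiply its capital by $2^{(1-\tilde{s})2^{n-1}}$ within a single dyadic block for any $\tilde{s}<1$, since capital can accumulate gradually across blocks (the localization in Lemma \ref{lem:optimalmartingaleconstruction} requires $\tilde{s}>2s''$, which forces $s''<1/2$). The paper's resolution, to which your sketch has no counterpart, is Lemma \ref{lem:optimalmartingaleconstructionforgenerals}: infinitely often either the block-relative growth holds \emph{or} the capital crosses the position-dependent, exponentially large threshold $2^{(1-s'')(2^{n-1}+\ell)}$ at some intermediate index $\ell \ge ((\tilde{s}-s'')/s'')\cdot 2^{n-1}$. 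The distinguisher is then modified to test the disjunction of the two conditions; Lemma \ref{lem:BorelCantellionsgale2} transfers the disjunction to a $1/n^2$ fraction of seeds, and both branches are exponentially unlikely under uniform inputs by Lemma \ref{lem:KolmogorvInequality}, the second precisely because $\ell$ is at least a constant fraction of the block, making the absolute threshold exponential in the block length. The ``delicate localization'' you flag at the end of your proposal is exactly this missing lemma, so as written the argument does not go through at any threshold above $1/2$, and the claimed gap of $1-\epsilon$ is not established.
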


For polynomial time strong dimension, we show that if one-way functions exist, then there exist  sequences for which the gap between $\Kstrpoly$ and $\cDimp$ is arbitrarily close to $1$.

\begin{restatable}{theorem}{strongdimensiongapsequences}
	\label{thm:strongdimensiongapsequences}
	If one-way functions exist then for any $\epsilon>0$, there exists $ X \in \Sigma^\infty$ such that,
	\begin{align*}
		\cDimp(X) - \Kstrpoly(X) \geq 1-\epsilon.
	\end{align*}
\end{restatable}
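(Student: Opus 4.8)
The plan is to recycle the sequence built for Theorem~\ref{thm:gaptheoremsequences}, observing that a single construction certifies both gaps once two facts are recorded: strong dimension always dominates dimension, so any lower bound on $\cdimp$ transfers to $\cDimp$ for free; and the sequence comes from a \emph{short-seed} polynomial-time samplable distribution, which forces \emph{every} prefix (not merely infinitely many) to be compressible, exactly the uniformity needed to control the $\limsup$-based quantity $\Kstrpoly$.

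Concretely, I would fix $\epsilon>0$, apply Theorem~\ref{thm:gaptheoremsequences} with parameter $\epsilon/2$ to obtain $X\in\Sigma^\infty$ with $\cdimp(X)\geq 1-\epsilon/2$ (since $\cdimp(X)-\Kpoly(X)\geq 1-\epsilon/2$ and $\Kpoly(X)\geq 0$), and retain from its proof that $X$ is produced by a polynomial-time samplable distribution $\nu$ whose sampler $M$ uses at most $sn$ random bits on input $1^n$, for some $s\leq\epsilon/2$. First, for the strong-dimension lower bound: if a poly-time $s'$-gale $d$ strongly succeeds on $X$, i.e.\ $\liminf_n d(X\restr n)=\infty$, then $\limsup_n d(X\restr n)=\infty$ as well, so $d$ succeeds on $X$; hence the set of admissible $s'$ in the definition of $\cDimp(X)$ is contained in that for $\cdimp(X)$, and taking infima yields $\cDimp(X)\geq\cdimp(X)\geq 1-\epsilon/2$.

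Next, for the upper bound on $\Kstrpoly$: for each $n$ the prefix $X\restr n$ lies in $\supp(\nu_n)$, so there is a seed $r_n$ with $|r_n|\leq sn$ and $M(1^n,r_n)=X\restr n$; hard-wiring $M$ and $r_n$ gives a program of length $sn+O(\log n)$ that prints $X\restr n$ within a fixed polynomial time bound $t$, so $K_t(X\restr n)\leq sn+O(\log n)$ holds for \emph{every} $n$. Therefore $\limsup_n K_t(X\restr n)/n\leq s\leq\epsilon/2$, giving $\Kstrpoly(X)\leq\epsilon/2$, and combining with the previous paragraph $\cDimp(X)-\Kstrpoly(X)\geq(1-\epsilon/2)-\epsilon/2=1-\epsilon$.

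The hard part is the $\limsup$ bound of the last step: because $\Kstrpoly$ is defined with a $\limsup$, it is not enough that the construction make infinitely many prefixes compressible (all that $\Kpoly$ requires); I need a bound valid at every length. This is supplied precisely by the short-seed structure of $\nu$ — one polynomial-time sampler that reconstructs every prefix from few random bits — so the crux is to confirm that the sequence of Theorem~\ref{thm:gaptheoremsequences} genuinely inherits this all-lengths samplability rather than being an interleaved, only-infinitely-often-compressible object. If it does, the argument is complete as above. If not, I would instead instantiate $X$ directly from a high-stretch pseudorandom generator with seed rate $s\leq\epsilon/2$: its output is compressible at every length by the same reconstruction argument, while gale-hardness up to rate $1-\epsilon/2$ follows as in Theorem~\ref{thm:equivalencetheorem}, since a succeeding $\nu$-approximable gale would, through Kolmogorov's inequality for martingales and the Borel--Cantelli lemma, become a distinguisher against the generator and contradict its security. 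Either way the same two bounds combine to give the claim.
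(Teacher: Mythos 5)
Your proposal is correct and takes essentially the same route as the paper: the paper proves this theorem from the very construction underlying Theorem~\ref{thm:gaptheoremsequences}, whose witness $X \in g(\Sigma^\infty)$ satisfies $K_{t_g}(g(X)\restr i)\leq 2s\cdot i+2\log i+c'$ at \emph{every} length $i$ (Corollary~\ref{cor:StrongAllender2}), so it bounds $\Kstrpoly(X)$ and not merely $\Kpoly(X)$, and then invokes $\cDimp(X)\geq \cdimp(X)$ exactly as you do. Your flagged worry about all-lengths samplability is resolved affirmatively — indeed your fallback (instantiating $X$ directly from the PRG-based mapping with gale-hardness via Kolmogorov's inequality and Borel--Cantelli) is literally the paper's construction — and the only discrepancy is harmless constant-factor slack in the seed rate ($2s$ rather than $s$, absorbed by taking $m$ larger).
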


Therefore, we show that if one-way functions exist, the answers to both parts of Open Question 3.71 from \cite{stullsurvey} are negative.

\subsection{Techniques}
 

\textbf{Proof of Theorem \ref{thm:equivalencetheorem}}: We use a new construction to extend pseudorandom generators to obtain short seed polynomial-time samplable distributions on infinite sequences (Section \ref{sec:constructionofg}, Figure \ref{fig:gConstrn}).  We use existing standard $s$-gale manipulation techniques (Lemma \ref{lem:galetofastgrowingmgale} and Lemma 
\ref{lem:optimalmartingaleconstruction}) to convert winning gales on outputs of short seed distributions.  Using these we show a new technique to build a distinguisher algorithm (Algorithm \ref{alg:distinguisher}) that breaks the pseudorandom generators used. To show that the distinguisher succeeds with high probability on pseudorandom outputs, we use a argument involving the Borel-Cantelli lemma (Lemma \ref{lem:BorelCant}) to transform the occurrence of infinitely many events into the occurrence over a large fraction, as required for the distinguisher. We use the Kolmogorov Inequality (Lemma \ref{lem:KolmogorvInequality}) to show that the distinguisher succeeds with low probability on uniformly random outputs.

For the converse, we use the theorem on \emph{universal extrapolation} by Ilango, Ren and Santhanam (see Theorem 20 from \cite{ilango2022robustness}) along with a gale construction technique.

\medskip 

\textbf{Proof of Theorem \ref{thm:thmforSeqpsrg}}: Non robustness of polynomial-time dimension for sets (Theorem \ref{thm:forSetspsrg}) follows as a corollary of Theorem \ref{thm:equivalencetheorem}, along with some compactness arguments. However, to extend the argument to show non-robustness of polynomial-time dimension for sequences (Theorem \ref{thm:thmforSeqpsrg}) requires additional technical effort. We provide a new gale combination technique (Lemma \ref{lem:Mgalecombination}) to construct a $t(n).n.\log(n)$ $s$-gale that is universal over all $t(n)$-time $s$-gales. We use this along with a measure-based partitioning trick to construct a polynomial time $s$-gale that is universal over all polynomial time $s$-gales (up to a positive measure subset)  to prove Theorem \ref{thm:thmforSeqpsrg}.

\subsection{Proof Outline}

In this section we give an informal account of the proofs of the main results. The full proofs are given in later sections.

\subsubsection*{Dimension gaps from One-Way Functions}

We first show that one-way functions imply the existence of distributions over which dimension gap exists almost everywhere.
Precisely, we show that if one-way functions exist, for any $s < 1/2$, there are polynomial time samplable distribution $\nu$ over $\Sigma^\infty$ samplable using $sn$ bits such that for any $s' \in (2s,1/2)$, any polynomial time $s'$-supergale that is $\nu$-approximable  can succeed only on a ``small" $\nu$-measure $0$ subset of $\Sigma^\infty$. See Section \ref{sec:forwardimplication} for the formal proof.

\begin{restatable}{lemma}{equivalenceforward}
	\label{lem:equivalenceforward}
	If one-way functions exist, then for every $s<1/2$, there exist a polynomial time samplable distribution $\nu$ over $\Sigma^\infty$ such that:
	\begin{enumerate}
		\item The number of random bits used by the sampler for $\nu$ on input $1^n$ is at most $sn$.
		\item For every $s' \in (s,1/2)$ and any polynomial-time $\nu$-approximable $s'$-supergale $d$, $\nu(S^\infty(d))=0$.
	\end{enumerate}
\end{restatable}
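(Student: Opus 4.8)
The plan is to convert a one-way function into a pseudorandom generator and then embed its output into an infinite sequence with a controlled amount of redundancy, so that the resulting distribution has small $\Kpoly$ (because of the short seed) yet defeats every slightly-unfair poly-time gale. First I would invoke the equivalence of one-way functions and pseudorandom generators \cite{HILL99} to obtain, for the given $s<\tfrac12$, an online (prefix-consistent) PRG $G$ whose output on a seed of length $sn$ is indistinguishable from uniform by poly-time adversaries; note that $s<\tfrac12$ is exactly what makes $G$ a genuinely \emph{stretching} generator in the construction below (the stretch factor is $1/(2s)>1$). Using the construction $\mathcal{G}$ of Section~\ref{sec:constructionofg}, I would realize $\nu$ as the distribution of the sequence obtained by generating a pseudorandom stream $y=y_1y_2\cdots$ from $G$ and emitting each bit twice, i.e. $X=y_1y_1y_2y_2\cdots$. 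This immediately yields condition~1: a length-$n$ prefix is a function of the first $\tfrac{n}{2}$ stream bits, hence of at most $sn$ seed bits, and $\nu$ is prefix-consistent and poly-time samplable by construction. The $2\times$ redundancy is the source of the threshold $\tfrac12$: it lets a gale predict the duplicated positions for free but gives it nothing more.

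For condition~2 I would argue by contradiction: suppose some poly-time $\nu$-approximable $s'$-supergale $d$ with $s'\in(s,\tfrac12)$ satisfies $\nu(S^\infty(d))=\delta>0$. Applying the gale-manipulation lemmas (Lemma~\ref{lem:galetofastgrowingmgale}, Lemma~\ref{lem:optimalmartingaleconstruction}) I would first replace $d$ by a well-behaved (monotone, fast-growing) supergale with the same success set up to $\nu$-measure zero. The key structural observation is that collapsing the two copies of each stream bit turns $d$ into a $2s'$-gale $d'$ acting on the underlying stream $y$: betting optimally on a duplicated position multiplies capital by the full $2^{s'}$ allowance, so over each pair the admissible growth factor is $2^{2s'}$, i.e. $d'(w0)+d'(w1)=2^{2s'}d'(w)$. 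Since $s'<\tfrac12$ we have $2s'<1$, so $d'$ is a genuinely unfair gale on $y$, and this is precisely the regime in which Kolmogorov's inequality will bite.

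I would then build the distinguisher (Algorithm~\ref{alg:distinguisher}) for $G$. Because success is a $\limsup$, hence an infinitary event, I would use the Borel--Cantelli lemma (Lemma~\ref{lem:BorelCant}) to turn ``$d$ reaches arbitrarily large capital infinitely often on a $\delta$-fraction'' into a finite statement: there exist a length $N$ and a threshold $T$ such that, with $\nu$-probability at least $\delta/2$, some prefix of length $\le N$ already has $d$-capital exceeding $T$. On a candidate stream $z$, the distinguisher duplicates it to $X^z$, evaluates $d$ on all prefixes up to length $N$ via the approximator $M$, and accepts if the approximated capital ever exceeds $cT$. On pseudorandom $z$ the input $X^z$ is $\nu$-distributed, so the distinguisher accepts with probability at least $\delta/2-o(1)$, where the $o(1)$ absorbs the inverse-polynomially rare approximation failures of $M$ inside $\supp(\nu_n)$. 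On uniform $z$ the duplicated string $X^z$ lies outside $\supp(\nu_n)$ except with probability at most $2^{(s-1/2)n}$, so $M$ computes $d$ exactly there; passing to the martingale associated with the $2s'$-gale $d'$ and applying Kolmogorov's inequality (Lemma~\ref{lem:KolmogorvInequality}) bounds the acceptance probability by $1/(cT)$, precisely because $2s'<1$. Taking $T$ a large constant multiple of $1/\delta$ gives a non-negligible distinguishing advantage, contradicting the security of $G$; hence $\nu(S^\infty(d))=0$.

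The main obstacle, I expect, is not any single probabilistic estimate but the simultaneous bookkeeping in the construction $\mathcal{G}$: arranging that $\nu$ is prefix-consistent, poly-time samplable, and uses at most $sn$ seed bits for \emph{every} $n$ (not merely at block boundaries), while keeping the embedded PRG secure for the distinguisher to exploit. The second delicate point is the interface between the infinitary notion of gale success and a poly-time distinguisher: the Borel--Cantelli step must produce a single finite horizon $N=N(\delta,T)$ that is still polynomial in the PRG's security parameter, and the $\nu$-approximability hypothesis must be used so that $M$ is exact on the overwhelmingly likely off-support uniform inputs while being only inverse-polynomially wrong on $\nu$-typical inputs. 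Getting the threshold to be exactly $\tfrac12$ rather than $1$ is forced by the redundancy: it is the doubling $s'\mapsto 2s'$ that lets Kolmogorov's inequality apply on the uniform side.
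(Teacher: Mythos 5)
Your high-level architecture (OWF $\Rightarrow$ PRG, extend to a short-seed samplable distribution on $\Sigma^\infty$, turn a successful gale into a PRG distinguisher, Kolmogorov's inequality on the uniform side) matches the paper, but the core of your distinguisher has a fatal gap. You reduce gale success to the event ``some prefix of length at most $N$ has capital exceeding $T$'' for a \emph{single fixed} horizon $N=N(\delta,T)$ (this step is, incidentally, continuity of measure from below rather than Borel--Cantelli). A test that inspects only a constant-length prefix of the challenge depends only on the constantly many PRG instances $G_{2^i}$ with $2^i\le N$, all of \emph{constant} size; distinguishing these from uniform with constant advantage contradicts nothing, because PRG security is asymptotic and any fixed-size instance is trivially breakable by a hardwired test. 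Equivalently, $\nu_N$ is just some fixed small-support distribution on $\Sigma^N$, and of course some fixed circuit separates it from $U_N$. Your own flagged ``delicate point'' --- that $N$ must remain ``polynomial in the PRG's security parameter'' --- cannot be met in your formulation: for fixed $\delta$ and $T$, the horizon $N$ is an absolute constant and does not grow with the input length at all, because an absolute capital threshold is a one-time event rather than a per-scale event. This is precisely why the paper avoids absolute thresholds: it converts success into the \emph{relative growth} event $\tilde d(Y\restr 2^n)>2^{(1-\tilde s)2^{n-1}}\tilde d(Y\restr 2^{n-1})$ (Lemmas \ref{lem:galetofastgrowingmgale} and \ref{lem:optimalmartingaleconstruction}), uses Borel--Cantelli (Lemmas \ref{lem:BorelCant} and \ref{lem:BorelCantellionsgale}) to show this event has $\nu$-probability at least $n^{-2}$ at \emph{infinitely many block scales} $n$, and then embeds the length-$2^n$ challenge as the current block while Algorithm \ref{alg:distinguisher} synthesizes the earlier blocks from its own randomness --- yielding inverse-polynomial advantage at unboundedly many security parameters, which is what a genuine contradiction requires.

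Two further points. First, your bit-duplication device is both unnecessary and misattributed as the source of the threshold: in the paper the constraint $s'<1/2$ arises from the block-doubling geometry in Lemma \ref{lem:optimalmartingaleconstruction} (one needs $s'<2s''<\tilde s<1$ so that growth over the last block, which is half the prefix, can be extracted from the overall growth rate $2^{(1-s'')n}$), not from redundancy in the samples; indeed, your fixed-horizon argument, were it sound, would apply verbatim to every $s'<1$ and even to martingales, subsuming the much harder analysis of Section \ref{sec:maximalgaps} --- a red flag in itself. Second, smaller issues: off the support the approximator is only guaranteed $M(w)\in[c\cdot d(w),d(w)]$, not exactness as you assert; and evaluating $M$ on \emph{all} prefixes up to $N$ requires a union bound $\sum_{j\le N}j^{-k}$ whose small-$j$ terms are vacuous (fixable by restricting to $j\ge N_0$, but moot given the main gap). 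Your collapsed $2s'$-supergale computation and the uniform-side use of Lemma \ref{lem:KolmogorvInequality} are fine in themselves, and mirror the paper's treatment with $w'$ fixed.
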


\textbf{Proof Outline}. We start with the assumption that one-way functions exist. This implies the existence of pseudorandom generators $\{G_n\}_{n \in \N}$ running in time $t(n) \in \poly(n)$ mapping strings of length $sn$ to strings of length $n$ for every $s<1$. For the sake of convenience,  take any $s$ such that $s = 2^{-m}$ for some $m \in \N$, $m > 1$.
We first \emph{extend} the PRG $\{G_n\}_{n \in \N}$ to a mapping between infinite sequences, $g:\Sigma^\infty \to \Sigma^\infty$. We then use $g$ to construct a polynomial time samplable distribution $\nu$ on $\Sigma^\infty$.

The mapping $g$ is constructed as follows. We first define mappings $g_k$ from $\Sigma^k$ to $\Sigma^*$ such that $k = s.2^n$ for some $ \in \N$.
On an input $w \in \Sigma^k$, where $g_k$ first divides $w$ into blocks $x_n$ of size $s.2^i$ for every $i < n$. Then, for every block $x_i$ in $w$,  $g$ applies the appropriate function from $\{G_n\}_{n \in \N}$ to obtain an output block $y_i$ of length $2^i$. Now $g_k(w)$ is defined as the concatenation of blocks $y_i$ for every $i \leq n$. Now, for any $X \in  \Sigma^\infty$ we define the function $g$ as the \emph{limit} of the mappings $g_k$, $g(X)=\lim_{k \to \infty} g_k(X \restr k)$. We abuse the notation and use $g$ instead of $g_k$ when the context is clear. Figure \ref{fig:gConstrn} gives a detailed illustration of the construction of $g$.

 The mapping $g:\Sigma^\infty \to \Sigma^\infty$ naturally generates a \emph{short seed} polynomial-time samplable distribution $\nu$ over $\Sigma^\infty$. The sampling algorithm $M$ for $\nu$ on input $1^n$, maps a random seed of length $s\cdot 2^{\lceil \log_2(n) \rceil}$ using $g$ to a string of length $2^{\lceil \log_2(n) \rceil}$, and thereafter trims the output to $n$ bits. The number of random bits used by $M$ on input $1^n$ is at most $2s \cdot n$. 
 
 \begin{figure}[] \label{fig:gConstrn}
 	\centering
 	\includegraphics[width=0.75\linewidth]{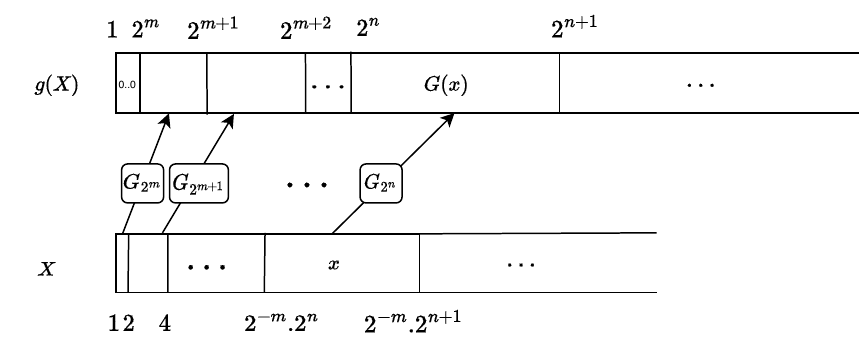}
 	\caption{Illustration of the construction of $g(X)$. The first $s = 2^m$ bits of the $g(X)$ are 0. Thereafter, for every for every $n>m$, the block $X[s.2^{{n-1}}, s.2^{n}-1]$ is mapped to $G_{2^{n-1}}(X[s.2^{{n-1}}, s.2^{n}-1])$ in the output string $g(X)$.}
 	\label{fig:example}
 \end{figure}

Towards building a contradiction, let there exist some $s' \in (2s,1/2)$ and a polynomial-time $\nu$-approximable $s'$-supergale $d'$ such that $\nu(S^\infty(d'))>0$. We show that $d'$ can be used to break the PRG $\{G_n\}_{n \in \N}$, obtaining a contradiction.

Using standard techniques on gales, we convert the polynomial time $\nu$-approximable $s'$-supergale $d'$ to a polynomial-time $\nu$-approximable martingale $\tilde{d}$ that gains a \emph{significant} amount of money over blocks of length $2^n$ for infinitely many $n$. More precisely, for some $\tilde{s}\in (2s',1)$, for all $Y \in g(\Sigma^\infty$), there exist infinitely many $n \in \N$ such that
\begin{align} 
	\label{eq:fastmartingalecondition}
	\tilde{d}(Y \restr 2^{n+1}) > 2^{(1-\tilde{s})2^{n}} \tilde{d}(Y \restr 2^{n}).
\end{align}

Let $M$ be the Turing machine that $\nu$-approximates $\tilde{d}$. That is for all $n \in \N$, $\nu_n\{w \in \Sigma^n :M(w) \not\in [\constc \cdot d(w),d(w)]\} \leq n^{-k}$ where $\constc \in \Q,k \in \N$ are constants such that $\constc \leq 1$ and $k \geq 1$. We use $M$ to build a distinguisher algorithm $A$ that breaks the PRG $\{G_n\}$.
 We describe the behavior of $A$ on inputs of length $2^n$. Let $w$ be an input of size $2^n$. The distinguisher randomly chooses $s.2^n$ bits $r$ and takes $w' = g(r)$. Now $A$ outputs $1$ if and only if: 
\begin{align*}
	M(w'w) \geq \constc \cdot 2^{(1 - \tilde{s})\lvert w \rvert} \cdot M(w').
\end{align*}
Otherwise $A$ outputs $0$. Since $M$ is polynomial time computable, $A$ is a polynomial time algorithm. 

To show that $A$ breaks the PRG $\{G_n\}_{n \in \N}$, it suffices to show that 
for infinitely many $n \in \N$, taking $N = 2^n$, $|\pr_{x \sim \Sigma^{s.N}}[A(g(x)) = 1] - \pr_{w \sim \Sigma^N}[A(w) = 1]| > N^{-c}$ for some constant $c$.

We first analyse the behaviour of $A$ on inputs of the PRG $\{G_n\}$.
We have shown that $\tilde{d}$ satisfies the condition (\ref{eq:fastmartingalecondition}) over $Y \in g(\Sigma^\infty)$ for infinitely many $n$. But the \emph{fraction} of strings of a particular length $n$ satisfying (\ref{eq:fastmartingalecondition}) may in fact be negligible. In order to overcome this difficulty, we use an argument involving the Borel Cantelli Lemma to show that there exist infinitely many $n \in \N$, such that over at least a $1/n^2$ fraction of strings $w \in \Sigma^{s.2^n}$, 
\begin{align} \label{eg:AftreBorelCantelli}
	\tilde{d}(g(w)) > 2^{(1-\tilde{s})2^{n}} \tilde{d}(g(w) \restr 2^{n-1})).	
\end{align}

Let $n+1$ be one of the lengths at which the condition \ref{eg:AftreBorelCantelli}  holds. 
Let $w,w' \in \Sigma^{2^n}$ such that $w = G_{2^n}(x)$ for some $x \in \Sigma^{s.2^{n}}$ and $w' = g(r)$ for some $r \in \Sigma^{s.2^{n}}$. 

In order for the condition $M(w'w) \geq \constc \cdot 2^{(1 - \tilde{s})\lvert w \rvert} \cdot M(w')$ to hold, it suffices that $\tilde{d}(w'w) \geq 2^{(1 - \tilde{s})\lvert w \rvert} \cdot \tilde{d}(w')$ and $M(w'w) \geq \constc \cdot \tilde{d}(w'w)$ and $M(w) \leq \tilde{d}(w)$. From the argument given above, it follows that the first condition holds with probability at least $(n+1)^{-2}$.
Since $\tilde{d}$ is $\nu$-approximable, it follows that the second and third conditions hold  with probability at least $1-2^{-(kn-1)}$ for some constant $k$.

Therefore it follows that for infinitely many n,
\begin{align}\label{eq:AnalysisPO1}
	\pr_{(x,r) \sim \Sigma^{s \cdot 2^n} \times \Sigma^{s \cdot 2^n}}[A(G_{2^n}(x),r)) = 1] &\geq \frac{1}{(n+1)^2}-\frac{1}{2^{kn}}.
\end{align}

We now analyze the behaviour of $A$ on uniformly random inputs. We use the Kolmogorov inequality for supermartingales to show that this is inverse exponentially small in input size.

We need an upper bound on the number of strings $w,r \in \Sigma^{2^n}$ such that $A(w,r) = 1$. This happens when $M(w'w) \geq \constc \cdot 2^{(1 - \tilde{s})\lvert w \rvert} \cdot M(w')$.  Using the Kolmogorov inequality for martingales, we show that for a fixed $w'$, the fraction of strings $w \in \Sigma^{2^{n}}$ such that $\tilde{d}(w'w) \geq \constc^2 \cdot 2^{|w|(1-\tilde{s})}.\tilde{d}(w')$ is less than $ \constc^{-2} \cdot 2^{-|w|(1-\tilde{s})}$. 
 For the remaining set of strings, we have that $\tilde{d}(w'w) < \constc^2 \cdot 2^{2^{n}(1-\tilde{s})}. \tilde{d}(w')$. If we have that $M(w'w) \leq \tilde{d}(w'w)$ and $M(w') \geq \constc \cdot \tilde{d}(w')$, it follows that $A(w,r) = 0$. Using the fact that $\tilde{d}$ is $\nu$-approximable, we show the following that for some constant $k$, for all $n \in \N$, taking $N = 2^n$,
\begin{align} \label{eq:AnalysisPO2}
	\pr_{w \sim \Sigma^N} \pr_{r \in \Sigma^{s.N}} [A(w,r) = 1] <\frac{1}{\constc^2 \cdot 2^{ N  ( 1- \tilde{s})}}+\frac{1}{N^k}+\frac{1}{2^{ N ( 1- {s})}}.
\end{align}

From Equation \ref{eq:AnalysisPO1} and \ref{eq:AnalysisPO2}, it follows that $A$ breaks the PRG $\{G_n\}$, which is a contradiction.


\subsubsection*{Infinitely often one-way functions from dimension gaps} 

We now give a proof outline that the existence of distributions over which dimension gap exists almost everywhere implies existence of infinitely often one-way functions. See Section \ref{sec:backwardimplication} for the formal proof.

\begin{restatable}{lemma}{equivalenceconverse} \label{lem:equivalenceconverse}
	If for some $s<1$ there exist a polynomial time samplable distribution $\nu$ over $\Sigma^\infty$ such that:
	\begin{enumerate}
		\item The number of random bits used by the sampler for $\nu$ on input $1^n$ is at most $sn$.
		\item There exist some $\hat{s} \in(s,1]$ such that for every $s' \in (s,\hat{s})$ and any polynomial-time $\nu$-approximable $s'$-supergale $d$, $\nu(S^\infty(d))=0$.
	\end{enumerate}
	Then, infinitely-often one-way functions exist.
\end{restatable}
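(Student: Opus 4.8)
The plan is to argue the contrapositive: assuming that infinitely-often one-way functions do \emph{not} exist, I would show that condition 2 must fail for every polynomial-time samplable $\nu$ satisfying condition 1, which proves the lemma. The starting observation is that condition 1 forces the distribution to be information-theoretically thin: since the sampler uses at most $sn$ random bits on input $1^n$, we have $|\supp(\nu_n)| \le 2^{sn}$, so the ``entropy rate'' of $\nu$ is at most $s$. The idea is then to exhibit, for a suitable rate $s' \in (s,\hat{s})$, a polynomial-time $\nu$-approximable $s'$-supergale that succeeds on a set of full $\nu$-measure, directly contradicting condition 2.

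First I would describe the \emph{ideal} gale that ignores efficiency. For $s' \in (s,\hat{s})$ set $d(w) = 2^{s'|w|}\nu_{|w|}(w)$; the identity $\nu_{|w|}(w) = \nu_{|w|+1}(w0) + \nu_{|w|+1}(w1)$ makes $d$ a genuine $s'$-gale. To see that $d$ succeeds $\nu$-almost everywhere, fix $s'' \in (s,s')$ and consider the events $E_n = \{X : \nu_n(X\restr n) \le 2^{-s''n}\}$. Since at most $2^{sn}$ strings carry positive mass,
\[
\nu(E_n) = \sum_{\substack{w \in \supp(\nu_n)\\ \nu_n(w)\le 2^{-s''n}}} \nu_n(w) \le 2^{sn}\cdot 2^{-s''n} = 2^{-(s''-s)n},
\]
which is summable. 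By the Borel--Cantelli Lemma (Lemma \ref{lem:BorelCant}), $\nu$-almost surely $\nu_n(X\restr n) > 2^{-s''n}$ for all large $n$, whence $d(X\restr n) > 2^{(s'-s'')n}\to\infty$. Thus $\nu(S^\infty(d)) = 1$.

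The crux is to replace the ideal $d$ by a polynomial-time $\nu$-approximable supergale without destroying success. Here I would invoke universal extrapolation (Theorem 20 of \cite{ilango2022robustness}): under the assumption that i.o.\ one-way functions do not exist, the conditional structure of the samplable distribution $\nu$ can be estimated in probabilistic polynomial time, with inverse-polynomial error on $\nu$-typical prefixes. Feeding these estimates into the gale construction (in the spirit of Lemma \ref{lem:optimalmartingaleconstruction}) yields a polynomial-time machine $M$ together with an $s'$-supergale $d$ --- now defined through the extrapolated conditional probabilities, rescaled to a sub-probability kernel so that the supergale inequality is preserved --- such that $M$ computes $d$ up to a constant multiplicative factor off a set of $\nu_n$-measure at most $n^{-k}$, with this error set contained in $\supp(\nu_n)$. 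Provided the extrapolated measure stays close enough to $\nu$ (small per-step divergence), the same Borel--Cantelli estimate as above, with the entropy rate $s$ playing against $s' > s$, shows that $d$ still strongly succeeds on a full-$\nu$-measure set. Hence $d$ is a polynomial-time $\nu$-approximable $s'$-supergale with $\nu(S^\infty(d)) > 0$, contradicting condition 2.

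The main obstacle is precisely this passage: turning the guarantee supplied by universal extrapolation into the exact object demanded by the definition of $\nu$-approximability. Two points require care. The accuracy of extrapolation is an \emph{average-case, per-step} conditional-probability guarantee, so one must telescope these estimates over the $n$ coordinates of a prefix while controlling the accumulation of multiplicative error --- for instance forcing the per-step relative error below $n^{-2}$ and union-bounding the per-step failures over all positions of a $\nu$-random string, so that the total stays within a constant factor with probability $1 - n^{-k}$. Separately, one must ensure the small-probability tail, whose total $\nu$-mass is negligible by the support bound, is handled so that the error set lands inside the support and that success (via the divergence control) is not spoiled. Reconciling the exact form of the extrapolation guarantee with the constant-factor, inverse-polynomial-error requirement of $\nu$-approximability is where the real work lies.
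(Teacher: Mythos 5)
Your skeleton matches the paper's proof up to a point: the paper also takes the ideal supergale $d(w)=2^{s'\lvert w\rvert}\nu_{\lvert w\rvert}(w)$ for a fixed $s'\in(s,\hat{s})$ with $2^{s'}\in\Q$, establishes $\nu(S^\infty(d))=1$, and invokes Theorem 20 of \cite{ilango2022robustness} under the assumption that infinitely-often one-way functions do not exist. (Your Borel--Cantelli detour for success is avoidable: since the sampler uses at most $sn$ random bits, \emph{every} $w\in\supp(\nu_n)$ satisfies $\nu_n(w)\geq 2^{-sn}$, so $d(X\restr n)\geq 2^{(s'-s)n}$ deterministically for every $X\in\range(\nu)$, a set of $\nu$-measure one.) But your proposal has a genuine gap at exactly the point you flag and do not resolve: the definition of $\nu$-approximability demands that the error set $\{w\in\Sigma^n : M(w)\notin[c\cdot d(w),d(w)]\}$ be \emph{contained in} $\supp(\nu_n)$, and off the support $d(w)=0$, so $M(w)$ must equal $0$ exactly there. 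Universal extrapolation gives no control whatsoever outside the support, and saying the tail ``is handled so that the error set lands inside the support'' is an assertion, not an argument. The paper's missing ingredient is a second use of the no-i.o.-OWF hypothesis: the sampler itself, viewed as a function $f(x)=S(1^{\lvert x\rvert^{1/c'}},x)$, is not an infinitely-often weak one-way function, so there is an inverter $\mathcal{I}$; the machine $M$ first runs $\mathcal{I}$ on $w$ and outputs $0$ whenever $f(\mathcal{I}(w))\neq w$. Off the support no preimage exists, so $M(w)=0=d(w)$ always; on the support the inversion failure probability $O(n^{-q})$ under $\nu_n$ simply folds into the error bound. Without some such mechanism your $M$ is not a $\nu$-approximator in the required sense.

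A secondary problem is that your ``crux'' paragraph solves a difficulty the proof does not have while creating one it need not have. The object to be approximated can remain the ideal $d$ itself --- $\nu$-approximability never asks that $d$ be computable, only that some probabilistic polynomial-time $M$ approximate it within a constant factor off a small-$\nu$-mass set --- and Theorem 20 as invoked gives a constant-factor approximation of the \emph{total} mass $\nu_n(w)$ directly ($c\cdot\nu_n(w)\leq\mathcal{A}(w)\leq\nu_n(w)$ with probability $1-O(n^{-q})$ over $w\sim\nu_n$), so $M(w)=2^{s'\lvert w\rvert}\mathcal{A}(w)$ suffices and no telescoping of per-step conditional estimates is needed. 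Conversely, by replacing $d$ with a fresh supergale built from rescaled extrapolated conditionals, you obligate yourself to re-prove success for the new gale, and your hedge ``provided the extrapolated measure stays close enough to $\nu$'' is precisely what an average-case, per-length guarantee does not deliver for all $n$ simultaneously. Keeping $d$ ideal dissolves this issue entirely, leaving the support-containment step above as the only real work --- and that is the step your proposal leaves open.
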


\textbf{Proof Outline}. We start with the hypothesis that for some $s<1$ there exist a polynomial time samplable distribution $\nu$ over $\Sigma^\infty$ such that the number of random bits used by the sampler for $\nu$ on input $1^n$ is at most $sn$ and there exists $\hat{s} \in (s,1]$ such that for every $s' \in (s,\hat{s})$ and polynomial-time $\nu$-approximable $s'$-supergale $d$, $\nu(S^\infty(d))=0$. Assume that there exist no infinitely often one-way functions.

Fix arbitrary $s'$ such that $s<s'<\hat{s}$ and $2^{s'}\in \Q$. Consider the function $d:\Sigma^* \to [0,\infty)$ defined as $d(w)=2^{s'\lvert w \rvert} \nu(w)$. Using the fact that $\nu$ is a short-seed polynomial time samplable distribution, we show that $d$ defined above satisfies the following property,
	\begin{align*}
		\nu\left\{X \in \Sigma^\infty: \limsup_{n \to \infty} d(X \restr n) = \infty\right\}=1.	
	\end{align*}

We now show that if infinitely-often one-way functions do not exist, there exists an algorithm $M$ that $\nu$-approximates $d$. Using a theorem on \emph{universal extrapolation} by Ilango, Ren and Santhanam (see Theorem 20 from \cite{ilango2022robustness}), it follows that for any $q \geq 1$, there exists a probabilistic polynomial time algorithm $\mathcal{A}$ and constant $c<1$ such that for all $n$,
\begin{align*}
\pr_{w \sim \nu_n} [c \cdot\nu_n(w) \leq \mathcal{A}(w) \leq \nu_n(w)]	\geq 1-O\left(\frac{1}{n^q}\right).
\end{align*}
 Furthermore, since there infinitely-often one-way functions do not exist (and therefore weak infinitely-often one-way functions do not exist), there is an algorithm $\mathcal{I}$ which \emph{inverts} the function $f:\Sigma^* \to \Sigma^*$ defined as $$f(x)=M(1^{\lvert x \rvert^{1/c'}},x)$$ on all but finitely many input lengths $n$. We combine the algorithms $\mathcal{A}$ and $\mathcal{I}$ to give an algorithm $M$ satisfying the following properties: for every $n$, $\{w:M(w) \not\in [c\cdot d(w),d(w)]\} \subseteq \mathrm{supp}(\nu_n)$ and
\begin{align*}
\pr_{x \sim \nu_n} [c \cdot d(x) \leq M(x) \leq d(x)]	\geq  1-O\left(\frac{1}{n^q}\right).
\end{align*}
 Using the fact that $\mathcal{A}$ and $\mathcal{I}$ are randomized polynomial time algorithms and $q$ is arbitrary we show that $M$ approximates $\nu$ in polynomial time. Since this contradicts our hypothesis, it must be the case that there exist infinitely-often one-way functions.

\subsubsection*{Dimension gaps over sequences from one-way functions}

We use Lemma \ref{lem:equivalenceforward} to show that if one-way functions exist then there exists sequences $X$ for which $\Kpoly(X)$ and $\cdimp(X)$ are separated by a gap of $1/2$. Thereafter we refine our proof to increase the gap to any value arbitrarily close to $1$. The corresponding theorems for sets is trivially implied by considering the set $\F=\{X\}$.

\thmforSeqpsrg*

\textbf{Proof Outline.}	We start with the assumption that one-way functions exist. From Lemma \ref{lem:equivalenceforward}, for every $s<1/2$ there exist a polynomial time samplable distribution $\nu$ over $\Sigma^\infty$ and polynomial $t$ such that:
\begin{enumerate}
\item The number of random bits used by the sampler for $\nu$ on input $1^n$ is at most $sn$.
\item For every $s' \in (s,1/2)$ and any polynomial-time $\nu$-approximable $s'$-supergale $d$, $\nu(S^\infty(d))=0$.
\end{enumerate}
Consider the following set $
\mathcal{F}=\bigcap\limits_{n \geq 1} \bigcup\limits_{w \in \mathrm{supp}(\nu_n)} C_w.$
Using an argument involving the compactness of $\Sigma^\infty$ as a topological space, we demonstrate that $\F$ is a non-empty set such that $\nu(\mathcal{F})=1$. Since the number of random bits used by the sampler for $\nu$ on input $1^n$ is at most $sn$, $\Kpoly(X)\leq s$ for every $X \in \F$. For the sake of contradiction, assume that there exist $s' \in (s,1/2)$ such that $\cdimp(X)<s'$ for every $X \in \F$. So, every $X \in \F$ there exists an exact computable $s'$-gale $d'_X$ that runs in time $t_{d'_X}(n) \in \poly(n)$, such that $d'_X$ succeeds on $X$.

 We now use a partitioning trick to build a set $\mathcal{S} \subseteq \Sigma^\infty$ such that $\nu(\mathcal{S})>0$. Thereafter we use a new gale combination technique to construct a $n^{k+c}$-time computable $s'$-gale $d'$ that succeeds on all sequences in $\mathcal{S}$. Since $d$ is polynomial-time computable, $d$ is trivially a polynomial-time $\nu$-approximable $s'$-supergale. Now, since we know that $\nu(S^\infty(d)) \geq \nu(\mathcal{S})>0$, we obtain a contradiction. Hence, it must be the case that there exist a sequence $X\in \F$ with $\cdimp(X)>s'$. Since $s'$ was arbitrary, the theorem follows.

 \gaptheoremsequences*
\textbf{Proof Outline.} In order to demonstrate that for every $\epsilon$, there exist sequences $X$ for which the gap between $\cdimp$ and $\Kpoly$ is greater than $1-\epsilon$, we combine the proof techniques in the proofs of Lemma \ref{lem:equivalenceforward} and the proof of existence of sequences with gaps close to $1/2$ sketched above. Also, we require an important modification in the construction of the martingale $\tilde{d}$ in the proof of Lemma \ref{lem:equivalenceforward}. While constructing the martingale $\tilde{d}$ in the proof of Lemma \ref{lem:equivalenceforward}, we used the assumption that $s'<1/2$. This is important because in order to obtain the condition $\tilde{d}(Y \restr 2^{n+1}) > 2^{(1-\tilde{s})2^{n}} \tilde{d}(Y \restr 2^{n})$ for infinitely many $n$ over any $Y \in g(\Sigma^\infty)$, we require $\tilde{s} > 2s'$. For $\tilde{s}$ to be less than $1$, we require $s'<1/2$. The major tool we need to overcome this hurdle is a generalization of the construction of martingale $\tilde{d}$ to every $s' < 1$ (see Lemma \ref{lem:optimalmartingaleconstruction}). We transform $d'$ into a polynomial time martingale $\tilde{d}$ satisfying the following property: let $\tilde{s}$ and $s''$ be such that $\tilde{s}>s''>s'$. Then for any $Y \in g(\Sigma^\infty)$ there exist infinitely many $n$ satisfying either of the following conditions:
\begin{enumerate}
	\item\label{item:optimalmgale1} $\tilde{d}(Y \restr 2^n) > 2^{(1-\tilde{s})2^{n-1}} \tilde{d}(Y \restr 2^{n-1})$.
	\item\label{item:optimalmgale2} There exists $\ell$ satisfying $ ((\tilde{s}-s'')/s'') \cdot 2^{n-1} \leq \ell \leq 2^{n-1}$ such that $\tilde{d}(Y \restr 2^{n-1} + \ell) > 2^{(1-s'')(2^{n-1}+\ell)}$. 
\end{enumerate}

We need the second condition above to handle the case when $s'$ may be between $1/2$ and $1$. Finally, we modify Algorithm A from the proof of Lemma \ref{lem:equivalenceforward} to incorporate the second condition above. We adapt the analysis of the algorithm and the arguments involving the Borel Cantelli lemma appropriately to show that the modified algorithm $A$ is a distinguisher for the PRG $\{G_n\}_{n \in \N}$.  

\subsection{Related Work}

In \cite{liu2020one}, Liu and Pass proved that
one-way functions exist if and only if the \emph{$t$-time bounded
	Kolmogorov complexity problem} $\mathrm{MK}^t\mathrm{P}$, is
\emph{mildly} hard on average. In a follow-up work \cite{liu2021one}
they define a problem that is $\NP$-complete under randomized
reductions  whose mild average
case hardness is equivalent to existence of one-way functions. Allender, Cheraghchi, Myrisiotis, Tirumala and Volkovich
\cite{allender2021one} showed similar results in the setting of
$\mathrm{KT}$ complexity, basing the existence of one-way functions on the average case hardness of $\mathrm{McKTP}$ (an analogue of
$\mathrm{McK}^t\mathrm{P}$ for $\mathrm{KT}$-complexity).  Hirahara \cite{Hirahara23}
gave the first characterization of a one-way function by worst-case hardness assumptions.  He showed that  one-way function exists if and only if it
is NP-hard to approximate the distributional Kolmogorov complexity under randomized polynomial-time reductions, assuming NP is hard in the worst case.
Hirahara, Lu and Oliveira \cite{HirhLuOliveria} showed the relationship between One-way functions and pKt complexity. Ilango, Ren, and Santhanam \cite{ilango2022robustness} characterize the existence of OWFs by the average-case hardness of approximating Kolmogorov complexity on samplable distributions. Hirahara, Ilango, Lu, Nanashima and Oliveira \cite{hirahara2023duality} gave a complete characterization of one-way
functions in terms of the average case failure of symmetry of
information (and related properties like the conditional coding theorem)
for $\mathrm{pK}^t$.

Considerable research has been devoted to understanding whether
fundamental properties of unbounded Kolmogorov complexity survive in
the time bounded setting. One of the important properties that was
studied in this context is the symmetry of information of Kolmogorov
complexity
\cite{LiVitanyi,ShenUspenskyVereshchagin,Downey10,Nies2009}. Longpr\'e
and Mocas \cite{LM95} (also Longpr\'e and Watanabe \cite{LW92}) showed
that if one-way functions exist then symmetry of information does not
hold for time bounded Kolmogorov complexity. In \cite{Hirahara2022}
and \cite{Goldberg2022Simpler}, symmetry of information for polynomial
time bounded probabilistic Kolmogorov complexity $\mathrm{pK}^t$ (see
\cite{GKLO22,Lu2022Theory}) was derived from the stronger
assumption $\mathrm{DistNP} \subseteq \mathrm{AvgBPP}$.

This work explores whether the characterization of constructive dimension in terms of Kolmogorov complexity \cite{LutzDimension2003,Mayordomo02} extends to the time-bounded setting. Our results establish a duality between the separation of $\cdimp$ and $\Kpoly$ as notions of polynomial-time dimension over a \emph{sufficiently large collection} of sequences and the existence of one-way functions.

As an application of our main result, we show that if one-way functions exist then
polynomial time bounded Kolmogorov complexity does not yield a
characterization of polynomial time dimension of infinite sequences,
thereby refuting an open question posed by Hitchcock and Vinodchandran
in \cite{dercconference} and by Stull in \cite{stullsurvey}. 

\subsection{Open Problems}
Our findings suggest several promising directions for future research. A key open question is whether weaker hardness assumptions, such as $\P \neq \NP$, $\mathrm{DistNP} \not\subseteq \mathrm{AvgP}$ or $\mathrm{DistNP} \not\subseteq \mathrm{AvgBPP}$  can be characterized in terms of the separation of polynomial-time dimension, as defined using time-bounded $s$-gales and time-bounded Kolmogorov complexity in appropriate settings. Specifically,\emph{ can these hardness assumptions be equivalently expressed through if-and-only-if statements involving polynomial-time dimension?}.

Our results reveal a duality between the non-robustness of polynomial-time dimension and the existence of one-way functions. An interesting open question is whether notions of polynomial-time randomness-defined via betting algorithms and Kolmogorov incompressibility of prefixes (see \cite{stullsurvey} for more details) capture equivalent concepts of randomness. Polynomial-time betting algorithms that succeed on specific sequences do not necessarily provide guarantees on the rate of capital accumulation in general. Consequently, investigating the robustness of polynomial-time randomness and its potential connections to cryptographic and complexity-theoretic primitives may require fundamentally new approaches.

\section{Preliminaries}
\label{sec:preliminaries}
In this section, we introduce the basic notation and formally define the necessary concepts from the theory of pseudorandomness and constructive dimension.

\subsection{Notation}
Let $\Sigma$ denote the binary alphabet $\{0,1\}$. $\Sigma^n$ denotes the set of $n$-length strings and $\Sigma^*$ denotes the set of all finite strings over $\Sigma$. For a finite string $w \in \Sigma^*$, $\lvert w \rvert$ denotes the length of $w$. $1^*$ denotes the set of all finite strings in the unary alphabet $\{1\}$. $\Sigma^\infty$ denotes the \emph{Cantor space}, or the set of all infinite strings over $\Sigma$. We use small letters $w$, $x$ to represent finite strings and capital letters $X$, $Y$ to represent infinite strings over $\Sigma$.
For an infinite string $X \in \Sigma^\infty$, $X[i]$ denotes the $i$\textsuperscript{th} bit of $X$ and $X \restr n$ denotes the first n bits of $X$. For infinite string $X=X_1 X_2 X_3 X_4 \dots$ and $m \geq n$, $X[n:m]$ denotes the the substring $X_n X_{n+1} X_{n+2} \dots X_{m}$. For a finite string $w \in \Sigma^*$, $w[i]$, $w \restr n$ and $w[n:m]$ are defined similarly. $\lambda$ denotes the empty string. $\N$ denotes the set of natural numbers and $\Q$ denotes the set of rationals. All the logarithms in the paper are taken to base $2$ unless specified otherwise. For any function $f: \Sigma^\infty \to \Sigma^\infty$ and $\mathcal{S} \subseteq \Sigma^\infty$ let $f(\mathcal{S})$ denote $\{f(X): X \in \mathcal{S}\}$.

For $w \in \Sigma^*$, $C_w$ denotes the \emph{cylinder set} of $w$, that is the set of all infinite sequences in \( \Sigma^\infty \) that begin with \( w \). Let $\mathcal{B}(\Sigma^\infty)$ denote the Borel $\sigma$-algebra over $\Sigma^\infty$ generated by the set of all cylinder sets $\{C_w: w\in \Sigma^*\}$. For two finite strings $x,y \in \Sigma^*$, $x.y$ denotes the concatenation of the strings $x$ and $y$.

We use $\poly(n)$ to denote the set of polynomial (time) functions $\bigcup_{c \in \N}\{n^c\}$. When the context is clear, we also use $\poly$ to denote $\poly(n)$. We use $U_n$ to denote the uniform probability distribution over $\Sigma^n$. 

\subsection{Time Bounded Kolmogorov complexity}
We define time bounded Kolmogorov complexity.

\begin{definition}[$t$-time bounded Kolmogorov complexity \cite{Kol65,LiVitanyi}]
	Let \( t \) be a time-constructible function, and \( x \in \Sigma^* \) be a finite binary string. The \( t \)-time bounded Kolmogorov complexity of \( x \)  is
	\[
	K_t(x ) = \min\{|\Pi| \text{ such that } \Pi \in \Sigma^* \text{ and } \mathcal{U}(\Pi) = x \text{ in } t(|x|) \text{ steps}\}.
	\]
	where $ \mathcal{U}$ is any fixed universal Turing machine.
\end{definition}

Since in the study of resource dimension we divide $K_t(x)$ by $\lvert x \rvert$, our results are unaffected by the choice between prefix-free Kolmogorov complexity and plain Kolmogorov complexity. We use plain Kolmogorov complexity in the rest of the paper. 

\subsection{Polynomial-time dimension}

We first give the formal definition of polynomial time dimension using polynomial time $s$-gales. Towards this end, we first define the concepts of $s$-gales, supergales, martingales and supermartingales.
\begin{definition}[\cite{LutzDimension2003,diss}]
	For $s \in \Rplus$,
	\begin{enumerate}
		\item\label{item:sgale} An $s$-supergale is a function $d: \Sigma^{*}
		\to \Rplus$ such that $d(\lambda) < \infty$ and for every finite string \( w \in \Sigma^* \),
		\[ d(w) \geq \frac{d(w0) + d(w1)}{2^s}.\]
		\item A supermartingale is a $1$-supergale.
		\item An $s$-gale is a function $d: \Sigma^{*}
		\to \Rplus$ with $d(\lambda) < \infty$ for every finite string \( w \in \Sigma^* \),
		\[ d(w) =\frac{d(w0) + d(w1)}{2^s}.\]
		\item A martingale is a $1$-gale.
	\end{enumerate}	
	
\end{definition}

Unless specified otherwise, for every $s$-supergale $d$, we assume that the initial capital $d(\lambda)=1$.

\begin{definition}[\cite{LutzDimension2003,diss}]
	The \emph{success set} of an $s$-supergale $d$ is 
	\begin{align*}
		S^\infty (d) = \left\{ X \in
		\N^\infty \mid \limsup \limits_{n\to\infty} d(X \restr n) =
		\infty\right\}.
	\end{align*}
\end{definition}

\begin{definition}[\cite{LutzDimension2003,diss}]
	The \emph{strong success set} of an $s$-supergale $d$ is 
	\begin{align*}
		\Sstr^\infty (d) = \left\{ X \in
		\N^\infty \mid \liminf \limits_{n\to\infty} d(X \restr n) =
		\infty\right\}.
	\end{align*}
\end{definition}


Now, we define time bounded supergales. We first define $t(n)$-time computable $s$-gales.

\begin{definition}[$t(n)$-time computability \cite{LutzDimension2003}]
	An $s$-supergale \( d : \Sigma^* \to [0, \infty) \cap \Q \) is  $t(n)$-time computable  if there exist a $t(n)$-time computable function $f:\Sigma^* \times 1^* \to \Q$ such that $\lvert d(w)-f(w,1^n)\rvert \leq 2^{-n}$ for every $w \in \Sigma^*$ and $n \in \N$.
\end{definition}

Another notion of time bounded computability is that of exact $t(n)$-time computability.

\begin{definition}[Exact $t(n)$-time computability \cite{LutzDimension2003}]
	An $s$-supergale \( d : \Sigma^* \to [0, \infty) \cap \Q \) is exact $t(n)$-time computable  if there exist a $t(n)$-time computable function $f:\Sigma^* \to \Q$ such that $d(w)=f(w)$ for every $w \in \Sigma^*$.
\end{definition}

Now, we define polynomial-time dimension and strong dimension.

\begin{definition}[Polynomial-time dimension \cite{LutzDimension2003}]
	The polynomial-time dimension of \( \F \subseteq \Sigma^\infty\) is defined as \[\cdimp(\F) = \inf \{s \mid (\exists k) \text{ there is an exact } n^k\text{-time computable } s\text{-gale } d \text{ such that } \F \subseteq S^\infty(d)\}.\] 
	For a sequence $X \in \Sigma^\infty$, define $ \cdimp(X) = \cdimp(\{X\})$.
\end{definition}

\begin{definition}[Polynomial-time strong dimension \cite{athreya2007effective,stullsurvey}]
	The polynomial-time strong dimension of \( \F \subseteq \Sigma^\infty\) is defined as \[\cDimp(\F) = \inf \{s \mid (\exists k) \text{ there is an exact } n^k\text{-time computable } s\text{-gale } d \text{ such that } \F \subseteq \Sstr^\infty(d)\}.\] 
	For a sequence $X \in \Sigma^\infty$, define $ \cDimp(X) = \cDimp(\{X\})$.
\end{definition}

We now make an important remark. Polynomial time dimension and strong dimension are often defined in terms of $n^k$-time computable $s$-gales that need not be exact computable. But these definitions are in fact equivalent. This is a direct consequence of the following lemma from \cite{LutzDimension2003} which is also useful in proving the main results in this paper.

\begin{lemma}[Exact Computation Lemma \cite{LutzDimension2003}]
	\label{lem:LutzExactComputation}
	If there is a polynomial time computable $s$-gale $d$ such that $2^s \in \Q$, then there exists an exact polynomial time computable $s$-gale $\tilde{d}$ such that $S^{\infty}(d) \subseteq S^{\infty} (\tilde{d})$.
\end{lemma}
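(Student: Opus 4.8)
The plan is to build $\tilde d$ by ``walking down the path'' from the root $\lambda$ to a queried string $w$, recomputing an exact rational bet at each of the $|w|$ nodes along the way, while maintaining the \emph{exact} $s$-gale identity $\tilde d(u0)+\tilde d(u1)=2^s\tilde d(u)$ at every node. The hypothesis $2^s\in\Q$ is what makes this possible: if $\tilde d(u)\in\Q$ and we choose a rational betting fraction $\tilde\beta\in\Q$, then $\tilde d(u0)=\tilde\beta\cdot 2^s\tilde d(u)$ and $\tilde d(u1)=(1-\tilde\beta)\cdot 2^s\tilde d(u)$ are again rational, so exactness propagates down the tree. Since each bet is read off from the approximator of $d$ and the path has only $|w|$ nodes, $\tilde d(w)$ will be computable in time polynomial in $|w|$.

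First I would extract from the polynomial-time computability of $d$ a rational approximator $f(w,1^r)$ with $|f(w,1^r)-d(w)|\le 2^{-r}$, computable in time $\poly(|w|+r)$. To avoid the instability of dividing by quantities that may be tiny, I would replace $d$ by $d+\eta$, where $\eta(w)=2^{(s-1)|w|}$ is the uniform $s$-gale; note $\eta$ is an exact $s$-gale (one checks $\eta(w0)+\eta(w1)=2^s\eta(w)$), is rational because $2^s\in\Q$, is strictly positive, and satisfies $S^\infty(d)\subseteq S^\infty(d+\eta)$ since adding capital never destroys success. At a node $u$ of length $k$, the true fraction of the scaled capital that $d+\eta$ sends to child $b$ is $\beta_b=(d(ub)+\eta(ub))/\big(2^s(d(u)+\eta(u))\big)$; I would set $\tilde\beta_b$ to be a rational approximation of $\beta_b$ obtained by evaluating $f$ at both children to a precision $r=r(k)$ chosen below, and then define $\tilde d(ub)=\tilde\beta_b\cdot 2^s\tilde d(u)$.

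The key step is choosing the precision schedule so that the cumulative distortion between $\tilde d$ and $d+\eta$ stays bounded. Because $\tilde d(w)/(d(w)+\eta(w))=\prod_{k=0}^{|w|-1}\rho_k$ telescopes along the path, where $\rho_k=\tilde\beta_b/\beta_b$ is the ratio of the chosen to the true bet at the length-$k$ prefix, it suffices to force each factor into $(1-2^{-k-2},\,1+2^{-k-2})$; the product then stays between two positive absolute constants $c_1,c_2$, giving $c_1(d(w)+\eta(w))\le\tilde d(w)\le c_2(d(w)+\eta(w))$ for all $w$. In particular $\tilde d(w)\ge c_1\,d(w)$, whence $\limsup_n\tilde d(w\restr n)=\infty$ whenever $d$ succeeds, so $S^\infty(d)\subseteq S^\infty(\tilde d)$. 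The remaining bookkeeping is to verify that this per-node multiplicative accuracy only needs precision $r(k)=O(k)$: using the crude bounds $\eta(u)\ge 2^{-k}$ (denominators are not too small) and $d(u)\le 2^{s|u|}$ (capital is not too large), the additive precision needed on the two children's values is $2^{-O(k)}$, i.e.\ $r(k)$ linear in $k$. With $r(k)=O(|w|)$, only $|w|$ nodes on the path, and rationals whose bit-length grows to at most $O(|w|^2)$, the whole evaluation of $\tilde d(w)$ runs in $\poly(|w|)$ time, so $\tilde d$ is an exact polynomial-time computable $s$-gale, as required.

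The main obstacle is precisely this tension between \emph{exactness} and \emph{success-preservation}: exactness forces us to round every bet to a rational, but uncontrolled rounding can let $\tilde d$ drift arbitrarily far below $d$ and lose the success set. Resolving it requires both the positivity floor supplied by the uniform gale $\eta$ (to keep the betting fractions out of the degenerate regime where additive approximations of $d$ are useless) and the geometrically shrinking per-level error budget $2^{-k-2}$ (to make the telescoping product converge); the nontrivial check is that together these still demand only polynomially bounded precision.
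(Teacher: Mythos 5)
Your proof is correct and is essentially the same argument as the one behind this lemma: the paper does not reprove it but quotes it from Lutz \cite{LutzDimension2003}, whose proof likewise maintains the exact $s$-gale identity by betting rational fractions (possible precisely because $2^s\in\Q$) read off from the approximator of $d$, with a strictly positive exactly computable gale added to keep the betting fractions bounded away from $0$ and $1$, and a geometrically shrinking per-level error budget so that the distortion telescopes to a constant multiplicative factor, preserving $S^\infty(d)\subseteq S^\infty(\tilde d)$. Your precision bookkeeping ($r(k)=O(k)$ using $\eta(u)\ge 2^{-|u|}$ and $d(u)\le 2^{s|u|}d(\lambda)$, with rationals of bit-length $O(|w|^2)$) is sound, so nothing is missing.
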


We will often use the terminology  polynomial-time computable gales to refer to exact $t(n)$-time computable $s$-gales where $t(n) \in \poly(n)$. Similar comment applies in the case of exact polynomial-time computability and other notions related to gales such as martingales and supergales.

Now, we define the asymptotic polynomial-time density of information of a subset $\F \subseteq \Sigma^\infty$. 

\begin{definition}[$\Kpoly$ and $\Kstrpoly$ \cite{dercjournal,stullsurvey,hitchcockthesis}]
	For any $\mathcal{F} \subseteq \Sigma^\infty$, define
	\begin{align*}
		\Kpoly (\F)=\inf_{t \in poly} \sup_{\S \in \F} \liminf_{\nifty} \frac{K_t(\S \restr n)}{n}. \\
		\Kstrpoly(\F)=\inf_{t \in poly} \sup_{\S \in \F} \limsup_{\nifty} \frac{K_t(\S \restr n)}{n}.
	\end{align*}
\end{definition}

For individual sequences, we define $\Kpoly(X)=\Kpoly(\{X\})$ and $\Kstrpoly(X)=\Kstrpoly(\{X\})$. Note that $\Kpoly$ and $\Kstrpoly$ are equivalently defined in the following way.

\begin{definition}[$\Kpoly$ and $\Kstrpoly$ \cite{dercjournal,stullsurvey,hitchcockthesis}]
	For any $X \in \Sigma^\infty$, define
	\begin{align*}
		\Kpoly (X)=\inf_{t \in poly}  \liminf_{\nifty} \frac{K_t(\S \restr n)}{n}
		\text{ and }
		\Kstrpoly(X)=\inf_{t \in poly}  \limsup_{\nifty} \frac{K_t(\S \restr n)}{n}.
	\end{align*}
\end{definition}

From the results we prove in the following sections, it is implied that if one-way functions exist, then $\Kpoly$ and $\Kstrpoly$ and not equivalent to $\cdimp$ and $\cDimp$.

\subsection{Measure over the Cantor Space}

\begin{definition}[Measure over the Cantor Space \cite{Downey10,Nies2009}]
	Let \( \Sigma^\infty \) be the Cantor space. The  measure of a cylinder set $C_w$ determined by a finite string \( w \in \Sigma^n \) is defined as:
	
	\[
	\mu(C_w) = 2^{-n}.
	\]

\end{definition}

Recall that \( C_w \) denotes the set of infinite sequences in \( \Sigma^\infty \) that begin with \( w \). Using routine measure theoretic arguments, $\mu$ uniquely extends to a probability measure over $(\Sigma^\infty,\mathcal{B}(\Sigma^\infty))$ (see \cite{Billingsley95}). In the rest of the paper we work in the measure space $(\Sigma^\infty,\mathcal{B}(\Sigma^\infty),\mu)$. 

\subsection{One-way functions}

The following is the definition of one-way functions secure against uniform probabilistic polynomial time adversaries.

\begin{definition} [One-way functions \cite{Goldreich2008,vadhan2012,GoldreichCryptography}]
	A function $f_n : \Sigma^* \to \Sigma^*$ is a one-way function (OWF) if:
	\begin{enumerate}
		\item There is a constant $b$ such that $f_n$ is computable in time $n^b$ for sufficiently large $n$.
		\item  For every probabilistic polynomial time algorithm $A$ and every constant $c>0$:
		\[
		\pr[A(f_n(U_n)) \in f_n^{-1}(f_n(U_n))] \leq \frac{1}{n^c}
		\]
		for all sufficiently large $n$, where the probability is taken over the choice of $U_n$ and the internal randomness of algorithm $A$.
	\end{enumerate}
\end{definition}

\subsection{Pseudorandom generators}

In order to define pseudorandom generators, we first define computational indisitinguishability.

\begin{definition}[Computational Indistinguishability \cite{Goldreich2008,vadhan2012}]
	The ensembles of random variables $\{X_n\}_{n \in \N}$ and  $\{Y_n \}_{n \in \N}$ are computationally indistinguishable if for every probabilistic polynomial time algorithm $T$ and every constant $c>0$,
	\begin{align*}
		\left\lvert \pr[T(X_n) = 1] - \pr[T(Y_n) = 1] \right\rvert \leq \frac{1}{n^c}
	\end{align*}
	where the probabilities are taken over the distributions of either $X_n$ or $Y_n$ and the internal randomness of the algorithm $T$.
\end{definition}

\begin{definition} [Pseudorandom generators \cite{Goldreich2008,vadhan2012,GoldreichCryptography}]
	A sequence of functions $G_m : \Sigma^{d(m)} \to \Sigma^m$ is a pseudorandom generator (PRG) if:
	\begin{enumerate}
		\item $d(m) < m$
		\item There exist an uniform and deterministic polynomial time algorithm $M$ such that $M(m,x)=G_m(x)$.
		\item $\{G(U_{d(n)})\}_{n \in \N}$ and $\{U_n\}_{n \in \N}$ are computationally indistinguishable.
	\end{enumerate}
\end{definition}

\subsection{One-way functions and Pseudorandom generators}

The equivalence theorem stated below demonstrates that one-way functions and pseudorandom generators are fundamentally equivalent.

\begin{theorem}[Equivalence of One-way Functions and PRGs \cite{HILL99,Goldreich2008,GoldreichCryptography}] \label{thm:OWFandPRG}
	The following are equivalent:
	\begin{enumerate}
		\item One-way functions exist.
		\item There exist a pseudorandom generator with seed length $d(m) = m-1$.
		\item For every polynomial-time computable constant $\epsilon \in (0,1)$, there exists a pseudorandom generator with seed length $d(m) = \epsilon.m$.
	\end{enumerate}
\end{theorem}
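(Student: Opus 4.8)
The plan is to close the cycle of implications $(1) \Rightarrow (2) \Rightarrow (3) \Rightarrow (1)$, which suffices for the three-way equivalence. The only genuinely hard step is $(1) \Rightarrow (2)$, which is the H\aa stad--Impagliazzo--Levin--Luby theorem: from an arbitrary one-way function $f$ one constructs a pseudorandom generator with some polynomial stretch. At a high level this extracts computational hardness as a hard-core predicate (Goldreich--Levin applied to $(x,r) \mapsto (f(x), r, \langle x, r\rangle)$), packages the hardness as a pseudo-entropy generator, and then converts the resulting entropy gap into genuine pseudorandomness via pairwise-independent hashing/extraction and amplification. To obtain exactly the seed length $d(m) = m-1$ I would truncate the output of the resulting generator down to a single bit of stretch, using the standard fact that any prefix of a pseudorandom string is itself pseudorandom (the reduction pads with fresh uniform bits). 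I would invoke this construction as a black box, since reproducing it is far beyond what is needed here.

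For $(2) \Rightarrow (3)$ I would use the standard stretching construction together with a hybrid argument. Given a one-bit-stretch generator $G_0 : \Sigma^n \to \Sigma^{n+1}$, write $G_0(s) = (\sigma(s), s')$ with $\sigma(s) \in \Sigma$ the produced bit and $s' \in \Sigma^n$, and iterate: set $G^{(1)} = G_0$ and $G^{(\ell+1)}(s) = \sigma(s) \cdot G^{(\ell)}(s')$, so that $G^{(\ell)}$ maps $\Sigma^n$ into $\Sigma^{n+\ell}$. A hybrid argument, replacing one application of $G_0$ at a time by truly random bits, shows $G^{(\ell)}$ is pseudorandom whenever $\ell$ is polynomially bounded, since the total distinguishing advantage is at most $\ell$ times the advantage against $G_0$. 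Given a target constant $\epsilon \in (0,1)$, to produce output length $m$ from a seed of length $\epsilon m$ I would set $n = \lceil \epsilon m \rceil$ and $\ell = m - n$, which is linear in $n$; the resulting generator runs in polynomial time and has seed length $d(m) = \epsilon m$, establishing $(3)$.

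For $(3) \Rightarrow (1)$ I would use the elementary fact that any length-increasing pseudorandom generator is itself a one-way function. Instantiating $(3)$ with, say, $\epsilon = 1/2$ yields a polynomial-time $G : \Sigma^{m/2} \to \Sigma^{m}$ whose output is computationally indistinguishable from $U_m$. If some probabilistic polynomial-time $I$ inverted $G$ with non-negligible probability, then the distinguisher that on input $y$ runs $I(y)$ and outputs $1$ iff $G(I(y)) = y$ would accept $G(U_{m/2})$ with non-negligible probability, yet accept $U_m$ with probability at most the density of the range, namely $2^{m/2}/2^{m} = 2^{-m/2}$, since $y \notin \range(G)$ forces rejection. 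This noticeable gap contradicts the pseudorandomness of $G$, so $G$ is one-way and $(1)$ holds.

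The main obstacle is entirely concentrated in $(1) \Rightarrow (2)$: building a pseudorandom generator from an \emph{arbitrary} (rather than one-to-one or regular) one-way function is the deep and technically delicate part, requiring careful entropy accounting and extraction. The other two implications are routine hybrid and reduction arguments, which I would present in full as above; accordingly, in the present paper it is natural to cite \cite{HILL99,Goldreich2008,GoldreichCryptography} for the hard direction and record only the easy steps.
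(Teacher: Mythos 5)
Your proposal is correct, and it matches the paper's treatment: the paper offers no proof of Theorem \ref{thm:OWFandPRG} at all, stating it purely as an import from \cite{HILL99,Goldreich2008,GoldreichCryptography}, and your decomposition --- invoking H\aa stad--Impagliazzo--Levin--Luby as a black box for $(1)\Rightarrow(2)$, the iterated Blum--Micali stretching with a hybrid argument for $(2)\Rightarrow(3)$, and the range-sparsity distinguisher showing any length-increasing PRG is one-way for $(3)\Rightarrow(1)$ --- is exactly the standard argument in those cited sources. The routine steps you write out are sound, modulo standard bookkeeping (choosing the hybrid index uniformly at random so the reduction stays uniform, rounding $\epsilon m$ to an integer seed length, and extending $G$ to a function defined on all input lengths before arguing one-wayness).
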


\subsection{Polynomial time samplable distributions on $\Sigma^\infty$}
\label{subsec:ptimesamplabledefinition}

\begin{definition} [ \cite{Royd88}]
	A function  \( \mu: \mathcal{M} \to [0, \infty] \), where $\mathcal{M} \subseteq P(\Sigma
	^\infty)$, is called a measure over $(\Sigma^\infty, \mathcal{M})$ if 
    \( \mu(\emptyset) = 0 \) and $\mu$ is countably additive. That is, for any countable disjoint collection \( \{E_k\}_{k=1}^\infty \) of measurable sets,  
	\[
	\mu\left(\bigcup_{k=1}^\infty E_k\right) = \sum_{k=1}^\infty \mu(E_k).
	\]
	
	We take $\mathcal{M}$ as the set of Borel Measurable sets.
	Note that $ \mathcal{B} \subseteq \mathcal{M} $, where $\mathcal{B}$ is the Borel sigma algebra over $\Sigma^\infty$, we omit $\mathcal{M}$ and say that $\mu$ is a measure over $\Sigma^\infty$. 
	
\end{definition}

Recall that for $w \in \Sigma^*$, $C_w$ denotes the \emph{cylinder set} of $w$, that is the set of all infinite sequences in \( \Sigma^\infty \) that begin with \( w \).
Note that $\mathcal{B}$ contains all open sets, and therefore all cylinder sets $C_w$ for all $w \in \Sigma^*$.

\begin{definition}
	Let $\nu$ be a measure over $\Sigma^\infty$.
	We say that $\nu$ is a probability distribution over $\Sigma^\infty$ if it  holds that  $\nu(\Sigma^\infty) = 1$. 
\end{definition}

\begin{definition} Let $\nu$ be a probability distribution over $\Sigma^\infty$. 
For every $n$, let $\nu_n$ denote the probability distribution induced by $\nu$ on $\Sigma^n$ defined as $\nu_n(w) = \nu(C_w)$ for every $w \in \Sigma^\infty$.
\end{definition}

Now we define polynomial time samplable distributions on $\Sigma^\infty$.

\ptimsampldistn*

%
%

One-way functions are secure against inversion by probabilistic polynomial time adversaries. However, polynomial time gales are defined in terms of computation using deterministic machines. In order to bridge this gap, we define martingales and $s$-gales that are approximable using probabilistic polynomial-time machines. 

 The support of a distribution is the set of points at which the distribution does not vanish. That is $\mathrm{supp}(\nu_n) = \{w \in \Sigma^n : \nu_n(w) > 0\}$.
 
 \ptimapproxdistn*

In other words, the set of strings at which the algorithm $M$ makes an error in approximating $d(w)$ outside $[c.d(w), d(w)]$ lies within the support of $\nu_n$, and the measure of the set of such strings according to $\nu_n$ is inverse polynomial.

\section{One-way functions and polynomial time samplable distributions}	
\label{sec:maintheoremintro}

We show that the existence of dimension gaps almost everywhere with respect to certain polynomial time samplable distributions over $\Sigma^\infty$ is equivalent to the existence of one-way functions. Towards this end, we consider polynomial time samplable distributions $\nu$ that operate on short seeds. Hence, the number of random bits used by the distribution to sample a string of length $n \in \N$ is at most $s.n$ where $s < 1$. This naturally creates a candidate space of strings having low descriptive complexity. For any $X \in range(\nu)$, we have that $\Kpoly(X) \leq s$. 

In this work, we define the notion of a ``significant" gap between Kolmogorov complexity ($\Kpoly)$ and gale based ($\cdimp$) formulations of polynomial-time dimension for polynomial time samplable distributions. The gap is said to be significant if $\cdimp$ and $\Kpoly$ differ for a large collection of the sequences in the range of the distribution. 
 We say that there is no ``significant" dimension gap induced by $\nu$ if for any $s' < s$, an $s'$-gale $d$ would win on a significant fraction of the sequences in $\F$, ensuring that $\nu(S^\infty(d)) > 0$. 
 
We first show that one-way functions imply the existence of distributions over which dimension gap exists almost everywhere.

\equivalenceforward*

We then show the result in the converse direction that the existence of distributions over which dimension gap exists almost everywhere implies the existence of infinitely often one-way functions.

\equivalenceconverse*

Together we obtain the following theorem between dimension gaps and existence of one-way functions.

\equivalencetheorem*

\textbf{Remark on short seed distributions and dimension gaps:} We note that in order to observe dimension gaps over polynomial time samplable distributions, it is essential to restrict the class of distributions to those which uses at most $sn$ random bits to sample a string of length $n$. If we consider the class of distributions which uses more than $n$ bits to sample $n$-length strings, then the uniform distribution $\mu$ is trivially included in this class. In this scenario, the second condition in Lemma \ref{lem:equivalenceconverse} is vacuously true as a consequence of the Kolmogorov inequality for martingales (See Lemma \ref{lem:KolmogorvInequality}). The fact that the sampler for the distribution in the hypothesis of Lemma \ref{lem:equivalenceconverse} uses $sn$-length seeds to sample strings of length $n$, is crucial in our proof of Lemma \ref{lem:equivalenceconverse}.

\section{A mapping between infinite sequences using PRGs}
\label{sec:constructionofg}
One of the major constructions we require in proving our results is an extension of pseudorandom generators with constant stretch to a mapping between infinite strings.

\subsection{Definition of $g$}
 Given a PRG $G_n: \Sigma^{sn} \to \Sigma^n$ where $s = 2^{-m}$ for some $m \in \N$, we construct a mapping $g$ from the Cantor set to itself, by using the family $\{G_n\}_{n \in \N}$ to map blocks within the input string to \emph{longer} blocks within the output string. Given an input $X \in \Sigma^\infty$,  $g(X)$ is constructed by applying $G_n$ to successive block $x$ of size $s.2^n$ from $X$ for all $n \in \N$. We first define a sequence of mappings $g_k$ from $\Sigma^k$ to $\Sigma^*$. These mappings preserves the prefix ordering among finite strings and hence we define $g(X)$ for $X \in \Sigma^\infty$ as the limit of $g_k(X \restr k)$. 
 
\begin{restatable}{definition}{gConstrn} \label{def:gConstrn}
	Let $G_n: \Sigma^{sn} \to \Sigma^n$ where $s = 2^{-m}$ for some $m \in \N$ and let $k \in \N$. For $w \in \Sigma^k$, define $g_k(w)$ as follows. For any $n$ satisfying any $m < n \leq \lfloor \log(k /s) \rfloor$ and $i \in \{2^{n-1}+1,2^{n-1}+2,\dots 2^n\}$, let
	\begin{align*}
	g_k(w)[i] = G_{2^{n-1}}(x)[i - 2^{n-1}]
	\end{align*}
	where $x = w[s \cdot 2^{n-1}, s \cdot 2^n-1]$. For $i \in \{1,2,\dots,2^m\}$, we set $g_k(w)[i] = 0$.
\end{restatable}

Note that $g_k(w)[2^{n-1}+1, 2^{n}] = G_{2^{n-1}}(w[s.2^{{n-1}}, s.2^{n}-1])$. 
 For $k \leq 2^m$, $g_k$ maps any $w \in \Sigma^k$ to $0^{2^m}$. Observe that the sequence of mappings $\langle g_k \rangle_{k \in \N}$ preserves the prefix order between strings. That is, If $w' \sqsubseteq w$, then $g_{\lvert w' \rvert}(w') \sqsubseteq g_{\lvert w \rvert}(w)$. Therefore, we define $g$ as the limit of the mappings $g_k$ as $k \to \infty$.
\begin{definition}
	For every $X \in \Sigma^*$ let $g(X) = \cap_{n=1}^\infty C_{g(X \restr n)}$.
\end{definition}

 Notice that for any $X \in \Sigma^\infty$ and $k \in \N$, $g_k(X \restr k) \sqsubseteq g(X)$. For convenience, we abuse the notation and use $g$ to denote the mapping between infinite strings defined above or a mapping between finite strings defined as follows: $g(w)=g_{\lvert w \rvert}(w)$ for any $w \in \Sigma^*$. The intended meaning is clear from the context. 


\subsection{Analysis of $K_t(g(X) \restr 2^n)$}

Now, we prove the following lemma that the polynomial time bounded Kolmogorov complexity of prefixes of sequences in the image set of $g$ is low.

\begin{lemma}\label{lem:Allender2}
	For $s=2^{-m}$ for some $m > 1$, let $g : \Sigma^\infty \to \Sigma^\infty$ be constructed from a PRG $G_n:\Sigma^{sn} \to \Sigma^n$ that runs in time $t_G(n)$ as in Definition \ref{def:gConstrn}. There exists a constant $c \in \N$ such that for all $X \in \Sigma^*$ and for all $n \in \N$,
	\begin{align*}
		K_{t_g}(g(X)\restr2^n) \leq s.2^n + c
	\end{align*}
	where $t_g(n)=t_G(n)n$.
\end{lemma}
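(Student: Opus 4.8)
The guiding idea is that $g(X)\restr 2^n$ is completely determined by a short prefix of $X$, so a near-optimal program for $g(X)\restr 2^n$ need only store that prefix and then recompute $g$ on it. First I would pin down exactly which bits of $X$ are consumed. By Definition \ref{def:gConstrn} the first $2^m$ bits of $g(X)$ are the fixed string $0^{2^m}$, and for each $n'$ with $m < n' \le n$ the block $g(X)[2^{n'-1}+1,\,2^{n'}]$ equals $G_{2^{n'-1}}\bigl(X[s\cdot 2^{n'-1},\,s\cdot 2^{n'}-1]\bigr)$. Concatenating these blocks shows that $g(X)\restr 2^n$ is a fixed function of the single substring $p = X[1,\,s\cdot 2^n-1]$ (using $s\cdot 2^m = 1$, the used blocks together cover exactly $X[1,\,s\cdot 2^n-1]$), whose length is $s\cdot 2^n - 1$.

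The program I would hand to the universal machine is $\Pi = \langle q\rangle\, p$, where $q$ is a fixed, constant-size decoder that has $m$ and the uniform PRG machine $M$ (with $M(N,\cdot)=G_N$) hard-coded. On input $p$ the decoder first recovers $n$ from $|p|$ alone: since $|p| = s\cdot 2^n - 1 = 2^{n-m}-1$, we have $n = m + \log_2(|p|+1)$, so no extra bits are needed to encode $n$. It then slices $p$ into the blocks $p[s\cdot 2^{n'-1},\,s\cdot 2^{n'}-1]$ for $m<n'\le n$, applies $G_{2^{n'-1}}$ to each, prepends $0^{2^m}$, and outputs the concatenation, which is exactly $g(X)\restr 2^n$. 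The total length is
\[
|\Pi| = |\langle q\rangle| + |p| \le (s\cdot 2^n - 1) + c \le s\cdot 2^n + c,
\]
where $c$ absorbs $|\langle q\rangle|$, giving the claimed program-length bound.

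For the time bound I would account for the cost of running the decoder under $\mathcal{U}$. The dominant work is the $n-m$ PRG evaluations, costing $\sum_{n'=m+1}^{n} t_G(2^{n'-1})$; since $t_G$ is nondecreasing this is at most $(n-m)\,t_G(2^{n-1}) < n\cdot t_G(2^n)$. The remaining bookkeeping — reading $p$, extracting blocks, writing $0^{2^m}$, and concatenating an output of length $2^n$ — is $O(2^n)$, which is subsumed because any PRG must at least write its own output, so $t_G(N)\ge N$ and hence $t_G(2^n)\ge 2^n$. As $n\le 2^n$, the whole computation runs in at most $2^n\cdot t_G(2^n) = t_g(2^n) = t_g(|g(X)\restr 2^n|)$ steps, which is exactly the budget defining $K_{t_g}$.

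I expect the only delicate point to be matching the running time \emph{exactly} to $t_g(n)=t_G(n)\,n$ rather than to a loose $O(\cdot)$ estimate: this forces me to charge the $n-m$ block evaluations against the factor $2^n$ (via $n\le 2^n$) and to absorb the linear input/output overhead into the inequality $t_G(2^n)\ge 2^n$. Everything else is routine. If the literal constant-free bound were to resist this accounting, I would note that all downstream uses of the lemma only require $t_g\in\poly$, so the fixed slowdown of $\mathcal{U}$ and the concatenation cost can be folded into a slightly larger polynomial without affecting any later argument.
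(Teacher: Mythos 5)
Your proposal is correct and is essentially the paper's own proof: the paper likewise hard-codes a constant-size machine that reads the prefix $X \restr s\cdot 2^n$, infers the number of blocks from the input length, outputs $0^{2^m}$ followed by the blockwise PRG images $G_{2^i}(\cdot)$, and concludes $K_{t_g}(g(X)\restr 2^n) \leq s\cdot 2^n + c$ with $c$ the description length of the fixed decoder. Your explicit time accounting (charging the $n-m$ evaluations via $n \le 2^n$, using monotonicity of $t_G$ and $t_G(N)\ge N$ to absorb the linear I/O overhead) is in fact more careful than the paper's one-line claim that $M$ runs in $t_g(n)=t_G(n)\,n$ time, but it is the same argument.
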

\begin{proof}
Consider the algorithm $M$ which implements the mapping in  Definition \ref{def:gConstrn} as follows.
\begin{algorithm}[H]
	\caption{Algorithm M}
	\begin{algorithmic}[1]
		\State \textbf{Input:} A string $w$.
		\If{$w=\lambda$}
		\State Output $\lambda$
		\Else
		\State Output $0^{2^m}$
		\EndIf
		\For{$i = m$ to $\lfloor \log(\lvert w \rvert 2^m)\rfloor-1$}
		\State Output $G_{2^i}(w[2^{-m}\cdot 2^{i}, 2^{-m} \cdot 2^{i+1}-1])$
		\EndFor
	\end{algorithmic}
\end{algorithm}
Since $G_n$ runs in $t_G(n)$ time, it follows that $M$ runs in $t_g(n)=t_G(n)\cdot n$ time. Also, for any $X \in \Sigma^\infty$ and $n \geq m$, $M$ outputs the string $g(X) \restr 2^n$ on input $X \restr 2^{-m} \cdot 2^n = X \restr s \cdot 2^n$. Therefore, it follows that $K_{t_g}(g(X) \restr 2^n) \leq s\cdot 2^n +c$ where the constant $c$ depends only on the length of any representation of $M$ in the prefix language of the fixed universal Turing machine.
\end{proof}

\begin{corollary}
\label{cor:StrongAllender2}
	For $s=2^{-m}$ for some $m > 1$, let $g : \Sigma^\infty \to \Sigma^\infty$ be constructed from a PRG $G_n:\Sigma^{sn} \to \Sigma^n$ that runs in time $t_G(n)$ as in Definition \ref{def:gConstrn}. There exists a constant $c \in \N$ such that for all $X \in \Sigma^*$ and for all $i \in \N$,
	\begin{align*}
		K_{t_g}(g(X)\restr i) \leq 2s \cdot i + 2\log(i) + c'
	\end{align*}
	where $t_g(n)=t_G(n)n$.
\end{corollary}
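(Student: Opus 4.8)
The goal is to bootstrap from Lemma~\ref{lem:Allender2}, which controls $K_{t_g}$ only at the dyadic lengths $2^n$, to a bound that holds at \emph{every} length $i$. The plan is to take an arbitrary $i$, locate the dyadic block containing it, and pay a small additional cost to describe the final partial block.

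First I would fix $i$ and let $n$ be the unique integer with $2^{n-1} \le i < 2^n$, so that $n \le \log(i)+1$. The prefix $g(X)\restr i$ is an initial segment of $g(X)\restr 2^n$. By Lemma~\ref{lem:Allender2}, the full block $g(X)\restr 2^n$ has a description of length at most $s\cdot 2^n + c$, and since $2^n < 2i$, this gives $s\cdot 2^n + c < 2s\cdot i + c$. To produce the truncation $g(X)\restr i$ from this, a program needs the seed $X\restr s\cdot 2^n$ (of length at most $s\cdot 2^n \le 2s\cdot i$) together with the value of $i$ itself, so that it can run the algorithm $M$ from the proof of Lemma~\ref{lem:Allender2} and then chop the output down to $i$ bits. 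Encoding $i$ self-delimitingly costs $2\log(i) + O(1)$ bits (e.g.\ by doubling the bits of $\lceil \log i \rceil$ and then writing $i$). Combining these, the total program length is at most $2s\cdot i + 2\log(i) + c'$ for a suitable constant $c'$ absorbing the description of $M$ and the self-delimiting overhead, and the running time is still $t_g$ up to the trivial truncation step, which is linear.

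The only subtlety worth checking is that the seed length is correctly bounded: the seed consumed by $M$ to output $2^n$ symbols is exactly $s\cdot 2^n$, and since $2^n \le 2i$ we get seed length $\le 2s\cdot i$, which is the dominant term $2s\cdot i$ in the claimed bound. The factor of $2$ here is precisely what accounts for the gap between the clean dyadic bound $s\cdot 2^n$ of Lemma~\ref{lem:Allender2} and the general bound: in the worst case $i$ is just above a power of two, forcing us to describe nearly a full block of twice the length. I do not expect any real obstacle; the argument is a routine ``round up to the next dyadic length and self-delimitingly encode the truncation point'' padding argument, and the main point is simply to verify that each of the three additive contributions ($s\cdot 2^n \le 2s\cdot i$, the encoding of $i$ costing $2\log(i)$, and the machine constant) lands in the stated $2s\cdot i + 2\log(i) + c'$.
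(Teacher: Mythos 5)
Your proposal is correct and is essentially the paper's own proof: the paper likewise rounds $i$ up to the dyadic length $2^{\lceil \log i \rceil} \leq 2i$, applies Lemma~\ref{lem:Allender2} there, and appends a self-delimiting encoding of $i$ costing $2\log(i) + O(1)$ bits to trim the output. Your only (immaterial) deviations are choosing $n$ via $2^{n-1} \le i < 2^n$ rather than $n = \lceil \log i \rceil$, and describing the dyadic prefix explicitly by the seed $X \restr s\cdot 2^n$ plus the machine $M$ instead of invoking the lemma's bound as a black box --- which is exactly what the lemma's proof does anyway.
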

\begin{proof}
	From Lemma \ref{lem:Allender2}, we know that
	\begin{align*}
	K_{t_g}(g(X)\restr 2^{\lceil log(i) \rceil}) \leq s  \cdot 2^{\lceil log(i) \rceil} + c'.	
	\end{align*}
	By providing $K_n(i)$ extra information specifying the value of the index $i$ at which $g(X)\restr 2^{\lceil log(i) \rceil}$ needs to be trimmed, we obtain a program that outputs $g(X)\restr i$ in $t_g(n)$-time. There is a trivial prefix free program which given $2\log(i)$ bits of information, runs in linear time and prints $i$. Therefore, the value of $K_n(i)$ is at most $2\log(i)$ up to an additive constant. Since $2^{\lceil log(i) \rceil} \leq 2 i$, there exists $c'>0$ such that
	\begin{align*}
		K_{t_g}(g(X)\restr i) \leq 2s \cdot i + K_n(i) + c \leq 2s \cdot i + 2\log(i) + c'.
	\end{align*}
\end{proof}

\section{Dimension Gap from one-way functions} \label{sec:forwardimplication}

In this section, we prove Lemma \ref{lem:equivalenceforward} by showing that if one-way functions exists, there exists a short polynomial time samplable distribution $\nu$ with a dimension gap, that is for any $s' > s$, and any polytime $\nu$-approximable gale $d$, $\nu(S^\infty(d)) = 0$.

 We start with the assumption that one-way functions exist. From Theorem [], we have that for $s=2^{-m}$ with $m\geq 2$, pseudorandom generators
  $\{G_n : \Sigma^{sn} \to \Sigma^{n}\}$ exist. Note that the mapping $g:\Sigma^\infty \to \Sigma^\infty$ constructed using the ensemble $\{G_n\}$ naturally generates a polynomial-time samplable distribution $\nu$ over $\Sigma^\infty$. The sampling algorithm $M$ for $\nu$ on input $1^n$, maps a random seed of length $s\cdot 2^{\lceil \log_2(n) \rceil}$ using $g$ to a string of length $2^{\lceil \log_2(n) \rceil}$, and thereafter trims the output to $n$ bits. The number of random bits used by $M$ on input $1^n$ is at most $2s \cdot n$. We show that the distribution $\nu$ defined by $g$ satisfies the required property.

Towards building a contradiction, let there exist some $s' \in (2s,1/2)$ with $2^{s'}\in \Q$ and a polynomial-time $\nu$-approximable $s'$-supergale $d'$ such that $\nu(S^\infty(d'))>0$. 
We now convert the polynomial time $\nu$-approximable $s'$-supergale $d'$ to a polynomial-time $\nu$-approximable martingale $\tilde{d}$ with a guarantee that its capital grows \emph{significantly} between blocks of length $2^n$ for infinitely many $n$ and $\nu(S^\infty(\tilde{d}))>0$.

\begin{lemma}
\label{lem:galetofastgrowingmgale}
For $s' \in [0,\infty)$ if $d': \Sigma^\infty \to \Rplus$ is a $t(n)$-time computable $\nu$-approximable $s'$-gale, then for any $s''$ satisfying $s' \leq s'' \leq 1$ and $2^{s''}\in \Q$, there exists an exact $t(n)poly(n)$-time $\nu$-approximable supermartingale $\tilde{d}: \Sigma^\infty \to \Rplus$ such that for any $X \in S^{\infty}(d')$, 
\begin{align*}
	\limsup\limits_{n \to \infty }\frac{\tilde{d}(X \restr n)}{2^{(1-s'')n}} = \infty.
\end{align*}
\end{lemma}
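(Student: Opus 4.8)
The plan is to produce $\tilde d$ by the two-step recipe ``rescale, then make exact.'' First I exploit the hypothesis $s'\le s''$: it lets me regard the $s'$-gale $d'$ as an $s''$-supergale, since $d'(w)=2^{-s'}(d'(w0)+d'(w1))\ge 2^{-s''}(d'(w0)+d'(w1))$. I then set $\hat d(w)=2^{(1-s'')|w|}d'(w)$. A one-line calculation shows $\hat d$ is a supermartingale: using $d'(w0)+d'(w1)=2^{s'}d'(w)$ one gets $\hat d(w0)+\hat d(w1)=2^{(1-s'')(|w|+1)}(d'(w0)+d'(w1))=2^{s'-s''}\cdot 2\,\hat d(w)\le 2\hat d(w)$, the last inequality being exactly $s'\le s''$. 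Crucially the rescaling is reversible by construction, $\hat d(w)/2^{(1-s'')|w|}=d'(w)$, so the target growth is automatic: for every $X\in S^\infty(d')$ we obtain $\limsup_n \hat d(X\restr n)/2^{(1-s'')n}=\limsup_n d'(X\restr n)=\infty$.

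For the resource bounds I use $2^{s''}\in\Q$, whence $2^{1-s''}\in\Q$ and the factor $2^{(1-s'')|w|}$ is a rational of bit-length $O(|w|)$ computable exactly in $\poly(|w|)$ time. Composing this exact factor with the $t(n)$-time approximator of $d'$ (requesting precision $2^{-(m+|w|)}$ to absorb the blow-up from multiplying by a number of size at most $2^{|w|}$) gives a $t(n)\poly(n)$-time approximation of $\hat d$. Note also that once $\tilde d$ is an \emph{exactly} $t(n)\poly(n)$-computable supermartingale, its $t(n)\poly(n)$-time $\nu$-approximability is automatic: the deterministic machine computing $\tilde d$ exactly is itself a $\nu$-approximator with empty error set, so the approximability clause costs nothing beyond exact computability.

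The remaining, and main, step is to replace $\hat d$ by an \emph{exact} rational-valued supermartingale $\tilde d$ without destroying the growth. Here I invoke the exact-computation technique underlying Lemma~\ref{lem:LutzExactComputation}: from a $t(n)\poly(n)$-time computable supermartingale whose base $2^{1}=2$ is rational, it manufactures an exact $t(n)\poly(n)$-time computable supermartingale $\tilde d$ lying within a fixed multiplicative constant of $\hat d$, say $\tilde d(w)\ge \gamma\,\hat d(w)$ for some $\gamma>0$. This constant factor is precisely what transfers the conclusion, since $\limsup_n \tilde d(X\restr n)/2^{(1-s'')n}\ge \gamma\limsup_n d'(X\restr n)=\infty$ on $S^\infty(d')$. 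I expect the delicate point to be exactly this exactness pass in the tight regime $s'=s''$, where $\hat d$ is already a genuine martingale and the supermartingale inequality $\hat d(w0)+\hat d(w1)=2^{s'-s''}\cdot 2\hat d(w)$ carries \emph{no} slack to absorb rounding error; there one must round the two children of each node jointly downwards so that their sum never exceeds $2\tilde d(w)$ while keeping each value within a constant factor of $\hat d$, which is the standard but careful content of the construction. When $s'<s''$ the factor $2^{s'-s''}<1$ supplies genuine multiplicative slack, and a direct relative-error rounding of $\hat d$ (with relative error below $1-2^{s'-s''}$) already yields an exact supermartingale within a constant factor, making that case comfortable.
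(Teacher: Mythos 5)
Your proposal is correct, and its core is exactly the paper's proof: view the $s'$-gale as an $s''$-supergale (the paper cites Observation 4.4 of Lutz for your one-line inequality), rescale by $2^{(1-s'')|w|}$ (Observation 3.2 of Lutz's thesis), observe that $2^{s''}\in\Q$ makes the factor an exactly computable rational so the rescaled function is $t(n)\poly(n)$-time computable, and note that the growth condition is automatic because the rescaling is invertible along every prefix sequence. The paper also handles $\nu$-approximability the same way in spirit (multiplying the approximator's output by the exact factor preserves the multiplicative $[c\cdot d(w),d(w)]$ guarantee). Where you diverge is your third step, the exact-computation pass: this is superfluous in the paper's setting, because the paper explicitly declares after Lemma~\ref{lem:LutzExactComputation} that ``$t(n)$-time computable'' gales are shorthand for \emph{exact} $t(n)$-time computable ones (and the definition of $\nu$-approximable already forces $d'$ to be rational-valued), so $\bar{d}(w)=d'(w)2^{(1-s'')|w|}$ is exact as it stands and the proof ends there. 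Under your looser reading (approximate computability in the $|d(w)-f(w,1^n)|\le 2^{-n}$ sense) your extra step is genuinely needed, and your diagnosis of the tight case $s'=s''$ is fair; note, though, that the cleanest repair is not joint downward rounding (which cannot stay within a multiplicative constant of arbitrarily small values) but the standard additive-slack trick used in the proof of the Exact Computation Lemma: set $\tilde{d}(w)=f(w,1^{|w|+2})+3\cdot 2^{-|w|}$, which is an exact supermartingale satisfying $\tilde{d}\ge \hat{d}$ pointwise, so the growth transfers with no constant-factor bookkeeping. Relatedly, citing the Exact Computation Lemma as delivering $\tilde{d}(w)\ge\gamma\,\hat{d}(w)$ slightly overstates its standard formulation, which guarantees only $S^\infty(d)\subseteq S^\infty(\tilde{d})$; for this lemma you need the quantitative $\limsup_n \tilde{d}(X\restr n)/2^{(1-s'')n}=\infty$, which the additive construction above supplies directly.
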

\begin{proof}
Let $s'' \geq s'$ such that $2^{s''} \in \Q$. We know that $d'$ is an exact $t(n)$-time $s''$-supergale that succeeds on $X$ (see Observation 4.4 from \cite{LutzDimension2003}). Let $\bar{d}$ be defined as $\bar{d}(w)=d'(w)2^{(1-s'')\lvert w \rvert}$. Note that as $d'$ is $\nu$-approximable, it follows that $\bar{d}$ is $\nu$-approximable.

From Observation 3.2 in \cite{diss}, it follows that $\bar{d}$ is a supermartingale such that,
\begin{align*}
\limsup\limits_{n \to \infty} \frac{\bar{d}(X \restr n)}{2^{(1-s'')n}} =\infty.	
\end{align*}
Since $2^{s''}$ is rational, so is $2^{(1-s'')}$. Hence, $\bar{d}$ is an exact $t(n)poly(n)$-time computable supermartingale. 
\end{proof}

\begin{lemma}
\label{lem:optimalmartingaleconstruction}
For $s' < 1/2$, let $d': \Sigma^\infty \to \Rplus$ be an  $t(n)$-time $\nu$-approximable $s'$-gale. Let $\tilde{s}$  be such that $\tilde{s}>s'$, $2^{\tilde{s}} \in \Q$. Then, there exists an  $t(n)poly(n)$-time $\nu$-approximable martingale $\tilde{d}: \Sigma^\infty \to \Rplus$ such that for any $X \in S^{\infty}(d')$, there exists infinitely many $n \in \N$ such that 
\begin{align} \label{eq:mgalegrowth2}
	\tilde{d}(X \restr 2^n) > 2^{(1-\tilde{s})2^{n-1}} \tilde{d}(X \restr 2^{n-1})
\end{align}
\end{lemma}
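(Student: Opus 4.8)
The plan is to first replace the winning $s'$-gale $d'$ by a martingale whose capital grows at a near-optimal exponential rate infinitely often, and then to \emph{concentrate} that growth into the dyadic blocks $(2^{n-1},2^n]$. For the first step I would invoke Lemma~\ref{lem:galetofastgrowingmgale} with a parameter $s''$ chosen in $(s',\tilde s)$ (and $2^{s''}\in\Q$), obtaining an exact $t(n)\poly(n)$-time $\nu$-approximable supermartingale $\bar d$ with $\limsup_{m\to\infty}\bar d(X\restr m)/2^{(1-s'')m}=\infty$ for every $X\in S^\infty(d')$. The difficulty is now entirely about $\bar d$: its capital is large relative to the benchmark $2^{(1-s'')m}$ at infinitely many lengths $m$, but these witnessing lengths need not be powers of two, whereas the conclusion asks for a large multiplicative gain precisely \emph{across consecutive dyadic scales}.

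To bridge this gap I would build, for each scale $n$, a block object $\hat d_n$ that is identically $1$ on prefixes of length at most $2^{n-1}$ and, on the block $(2^{n-1},2^n]$, runs an average over all stopping positions $2^{n-1}<m\le 2^n$ of the strategy ``simulate the conditional bets of $\bar d$ until length $m$, then freeze''. Each stop-and-freeze strategy, normalised by $\bar d(X\restr 2^{n-1})$, is again a (super)martingale, and a convex combination of them is too; the averaging guarantees
\[
\hat d_n(X\restr 2^n)\ \ge\ \frac{1}{2^{n-1}}\cdot\frac{\max_{2^{n-1}<m\le 2^n}\bar d(X\restr m)}{\bar d(X\restr 2^{n-1})}.
\]
I would then assemble a single object $\tilde d$ by a \emph{sweep}: it carries its full capital $c_{n-1}=\tilde d(X\restr 2^{n-1})$ into block $n$ and bets there as $c_{n-1}\cdot\hat d_n$, so that the multiplicative gain of $\tilde d$ across block $n$ is exactly $\hat d_n(X\restr 2^n)$. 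The reinvestment matches at block boundaries because each $\hat d_n$ starts at capital $1$, so $\tilde d$ remains a (super)martingale, and the target reduces to proving $\hat d_n(X\restr 2^n)>2^{(1-\tilde s)2^{n-1}}$ for infinitely many $n$.

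For the last step I would argue by contradiction. If only finitely many blocks were good, then for all large $n$ the ``bad block'' bound combined with the displayed inequality and the trivial estimate $\bar d(X\restr 2^{n-1})\le M_{n-1}$, where $M_j=\max_{2^{j-1}<m\le 2^j}\bar d(X\restr m)$, telescopes to $M_n\le 2^{(1-\tilde s)2^{n}+O(n^2)}$. On the other hand the $\limsup$ property yields, for every constant $K$, infinitely many $m$ with $\bar d(X\restr m)>K\,2^{(1-s'')m}$, hence infinitely many $n$ with $M_n>K\,2^{(1-s'')2^{n-1}}$. Comparing the two bounds forces an inequality of the form $(2\tilde s-s''-1)\,2^{n-1}<O(n^2)-\log K$, impossible for large $K$ once $\tilde s$ sits on the correct side of a fixed threshold strictly below $1$; this is exactly where the hypothesis $s'<1/2$ is used, as it leaves room to take $s''$ near $s'$ and still keep $\tilde s<1$.

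The step I expect to be the main obstacle is precisely this scale alignment: the length $m$ witnessing large capital of $\bar d$ may sit anywhere inside its dyadic block, so a peak benchmarked against $2^{(1-s'')m}$ must be turned into growth over a block of length $2^{n-1}$ whose right endpoint $2^n$ can be almost twice $m$. Controlling the denominator $\bar d(X\restr 2^{n-1})$ through the block maxima and tracking the exponent across this mismatch is the delicate part, and it is what dictates how far $\tilde s$ must be pushed above $s'$. The remaining bookkeeping should be routine: every ingredient ($\bar d$, the stop-and-freeze copies, the average, the sweep) is a polynomial-time arithmetic manipulation of finitely many values of $\bar d$, so $\tilde d$ is exact $t(n)\poly(n)$-time computable once $2^{\tilde s}\in\Q$; passing from the supermartingale to an honest martingale is done by redistributing the discarded capital, and since the approximating machine errs only on $\mathrm{supp}(\nu_n)$ and the combination merely multiplies these one-sided approximations, adjusting the constant $c$ shows $\tilde d$ inherits $\nu$-approximability from $\bar d$.
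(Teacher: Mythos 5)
Your first step (invoking Lemma~\ref{lem:galetofastgrowingmgale} with $s''\in(s',\tilde s)$) and the general shape of your contradiction argument match the paper, but there are two genuine gaps. The first is quantitative and fatal for part of the parameter range. Your telescoped bound $M_n\le 2^{(1-\tilde s)2^n+O(n^2)}$ is compared against the benchmark only at the \emph{left} endpoint of the block, $M_n> K\,2^{(1-s'')2^{n-1}}$, so your contradiction needs $2\tilde s-s''-1>0$, i.e.\ $\tilde s>(1+s')/2$ after letting $s''\downarrow s'$ --- you concede as much when you speak of ``the correct side of a fixed threshold''. But the lemma is quantified over every admissible $\tilde s$: the paper's own proof works for all $\tilde s>2s''$ (it chooses $s''$ with $s'<2s''<\tilde s$; note the printed hypothesis $\tilde s>s'$ appears to be a typo for $\tilde s>2s'$, which is what both the proof and the proof-outline use). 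Since $(1+s')/2>2s'$ exactly when $s'<1/3$, your argument establishes nothing for, say, $s'=0.1$ and $\tilde s=0.3$, a case the lemma covers. The factor-of-two loss comes from your block-maxima bookkeeping: you charge the bad-block bound all the way to the right endpoint $2^n$ while crediting the $\limsup$ witness only at $2^{n-1}$. The paper avoids this by comparing at the witness position itself and, crucially, by \emph{not constructing any new martingale}: it shows the martingale $\tilde d$ from Lemma~\ref{lem:galetofastgrowingmgale} already satisfies (\ref{eq:mgalegrowth2}). If the block condition fails for all $n>m$, then for a witness at length $2^n+\ell$ with $\ell\le 2^n$ the trivial per-step doubling bound for martingales gives $\tilde d(X\restr 2^n+\ell)\le 2^{\ell}\,2^{(1-\tilde s)(2^n-2^m)}\,\tilde d(X\restr 2^m)$, and comparing with $2^{(1-s'')(2^n+\ell)}$ yields $2^{2^n(\tilde s-2s'')}\le \tilde d(X\restr 2^m)/2^{(1-\tilde s)2^m}$, impossible for large $n$ whenever $\tilde s>2s''$. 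Your entire stop-and-freeze averaging and sweep machinery is thus unnecessary.

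The second gap is the $\nu$-approximability transfer, which you dismiss as bookkeeping but which actually fails for your construction as described. In the paper, $\tilde d(w)=d'(w)\,2^{(1-s'')\lvert w\rvert}$ is a \emph{single} evaluation of $d'$ times an exactly computable factor, so the same error set and the same constant $c$ carry over. Your $\tilde d(w)$ is a product over $\Theta(\log\lvert w\rvert)$ blocks of averages of ratios $\bar d(w\restr m)/\bar d(w\restr 2^{j-1})$, so it depends on $\bar d$ at essentially every prefix length of $w$. Two problems follow. First, the event that the approximator errs on \emph{some} prefix of $w\in\Sigma^n$ has $\nu_n$-measure up to $\sum_{j\le n}j^{-k'}$, whose contribution from short and mid-range prefixes does not decay with $n$; no choice of $k'$ (nor hardwiring finitely many lengths) makes this $\le n^{-k}$ uniformly in $n$, as the definition of $\nu$-approximability demands. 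Second, a ratio of two values known only up to $[c\,\bar d,\bar d]$ lies in $[c\,r, r/c]$ --- the approximation becomes two-sided, violating the required one-sided guarantee $M(w)\le d(w)$ --- and compounding over $\Theta(\log n)$ blocks degrades the multiplicative constant to $c^{\Theta(\log n)}=n^{-\Theta(1)}$, which is not a constant. So ``adjusting the constant $c$'' does not suffice; repairing this would need a genuinely new argument, whereas the paper's construction-free route sidesteps both issues entirely.
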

\begin{proof}
Choose $s''\in (s',1/2)$ such that $s'<2s'<2s''<\tilde{s}$ and $2^{s''}\in \Q$. Let $\tilde{d}$ be the exact $t(n)poly(n)$-time computable martingale from Lemma \ref{lem:galetofastgrowingmgale} such that,
\begin{align*}
	\limsup\limits_{n \to \infty }\frac{\tilde{d}(X \restr n)}{2^{(1-s'')n}} = \infty.
\end{align*}
We show that $\tilde{d}$ satisfies the required property. On the contrary, assume that there exists $m>0$ such that for every $n>m$,
\begin{align*} 
	\tilde{d}(X \restr 2^n) \leq 2^{(1-\tilde{s})2^{n-1}} \tilde{d}(X \restr 2^{n-1}).
\end{align*}
Consider any $n > m$. Let $\ell \leq 2^n$. We get that,
\begin{align*}
\tilde{d}(X \restr 2^n+\ell) &\leq 2^\ell 2^{(1-\tilde{s})\sum_{i=m}^{n-1}2^i}	\tilde{d}(X \restr 2^m)\\
&=2^\ell 2^{(1-\tilde{s})(2^n-2^m)} \tilde{d}(X \restr 2^m).
\end{align*}
If $\tilde{d}(X \restr 2^n+\ell) \geq 2^{(2^n+\ell)(1-s'')}$, then
\begin{align*}
2^{\ell s''} \geq \frac{2^{(1-\tilde{s})2^m}}{\tilde{d}(X \restr 2^m)} 2^{2^n(\tilde{s}-s'')}.	
\end{align*}
Since $\ell \leq 2^n$, we obtain
\begin{align}
\label{eq:2nupperbound}
2^{2^n (\tilde{s}-2s'')} \leq \frac{\tilde{d}(X \restr 2^m)}{2^{(1-\tilde{s})2^m}}.	
\end{align}
If there exist infinitely many $k$ such that $\tilde{d}(X \restr k) \geq 2^{(1-s'')k}$ then there exist infinitely many $n$ satisfying \ref{eq:2nupperbound}. But, since $\tilde{s}>2s''$, \ref{eq:2nupperbound} cannot be true for large enough $n$ and therefore we obtain a contradiction. Hence, it must be the case that for infinitely many $n$,
\begin{align*} 
	\tilde{d}(X \restr 2^n) > 2^{(1-\tilde{s})2^{n-1}} \tilde{d}(X \restr 2^{n-1})
\end{align*}
which is the required conclusion.
\end{proof}

\subsection{A Distinguisher Algorithm for PRGs} \label{sec:AlgmA}

Let $\tilde{d}$ be the $t(n).poly(n)$-time $\nu$-approximable martingale obtained from Lemma \ref{lem:optimalmartingaleconstruction}. Let $M$ be the Turing machine that $\nu$-approximates $d$. That is for all $n \in \N$, $\nu_n\{w \in \Sigma^n :M(w) \not\in [\constc \cdot d(w),d(w)]\} \leq n^{-k}$ where $\constc \in \Q,k \in \N$ are constants such that $\constc \leq 1$ and $k \geq 1$. We use $M$ to build a distinguisher algorithm $A$ that breaks the PRG $\{G_n\}$.

\begin{algorithm}[H]
	\caption{Algorithm A}
	\label{alg:distinguisher}
	\begin{algorithmic}[1]
		\State \textbf{Input:} A string $w$ where $|w| = 2^n$
		\State $r \gets \text{random}(s \cdot 2^n)$ \Comment{Generate a random seed of length $s \cdot 2^n$}
		\For{$i = m$ to $n-1$}
		\State $v_i \gets G_{2^i}(r[s \cdot 2^i, s \cdot 2^{i+1}-1])$ \Comment{Use $r$ and $G_{2^i}$ to get a PRG output of length $2^i$}
		\EndFor
		\State $w' \gets 0^{2^m}v_1 \dots v_n$ \Comment{Concatenate all $v_i$}
		\State \Return $1$ \textbf{if} $M(w'w) \geq \constc \cdot 2^{(1 - \tilde{s})\lvert w \rvert} \cdot M(w')$. 
	\end{algorithmic}
\end{algorithm}

\subsection{Analysis of the Distinguisher Algorithm} \label{sec:AlgmAanalys}

\subsubsection{Performance of the algorithm on outputs of PRGs}

Now, we define the function $f$.

\begin{restatable}{lemma}{BorelCant} \label{lem:BorelCant}
	Let $f: \Sigma^\infty \times \N \to \{0,1\}$ and $\mathcal{S} \subseteq \Sigma^\infty$ such that $\nu(\mathcal{S}) > 0$. 
	If for every $X \in \mathcal{S}$, there exist infinitely many $n \in \N$ such that $f(X , n)=1$. Then for any $c>1$, there exist infinitely many $n \in \N$, such that $\nu(\{X: f(X , n)=1\})> n^{-c}$.
\end{restatable}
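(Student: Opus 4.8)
The plan is to recognize this statement as nothing more than the contrapositive of the convergence (``first'') half of the Borel–Cantelli lemma, packaged so that it converts an ``infinitely often for every point'' guarantee into a ``the events are not too small infinitely often'' conclusion. First I would set $A_n = \{X \in \Sigma^\infty : f(X,n) = 1\}$ for each $n \in \N$. Then the hypothesis that for every $X \in \mathcal{S}$ there are infinitely many $n$ with $f(X,n) = 1$ is exactly the containment $\mathcal{S} \subseteq \limsup_{n} A_n = \bigcap_{N \geq 1} \bigcup_{n \geq N} A_n$. By monotonicity of $\nu$ this gives $\nu(\limsup_n A_n) \geq \nu(\mathcal{S}) > 0$, so the limsup event has positive measure.

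Next I would argue by contradiction. Suppose only finitely many $n$ satisfy $\nu(A_n) > n^{-c}$; then there is some $N_0$ with $\nu(A_n) \leq n^{-c}$ for all $n \geq N_0$. Since $c > 1$, the tail $\sum_{n \geq N_0} n^{-c}$ converges, so $\sum_{n \geq N_0} \nu(A_n) < \infty$. Applying the convergence direction of the Borel–Cantelli lemma to the family $\{A_n\}_{n \geq N_0}$ yields $\nu(\limsup_n A_n) = 0$, contradicting the positivity established above. Hence infinitely many $n$ must satisfy $\nu(A_n) > n^{-c}$, which is the claim.

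Two routine points round out the argument. The quantities $\nu(A_n)$ are well defined because the statement already presumes each $\{X : f(X,n) = 1\}$ is $\nu$-measurable, whence $\limsup_n A_n$ is measurable as a countable intersection of countable unions; and even without assuming $\mathcal{S}$ itself is measurable, the step $\nu(\mathcal{S}) \leq \nu(\limsup_n A_n)$ goes through at the level of outer measure, so no extra regularity of $\mathcal{S}$ is needed. The only genuine subtlety is invoking the \emph{convergence} half of Borel–Cantelli, which requires no independence among the $A_n$ and therefore applies to the arbitrary family arising here; this is precisely where the hypothesis $c > 1$ enters, through $\sum n^{-c} < \infty$, so the threshold cannot be relaxed to $n^{-1}$ by this method. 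I do not expect a serious obstacle: the content is entirely in choosing the right events and reading the hypothesis as a $\limsup$ containment.
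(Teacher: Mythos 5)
Your proof is correct and is essentially identical to the paper's: both define $A_n=\{X:f(X,n)=1\}$, read the hypothesis as $\mathcal{S}\subseteq\limsup_n A_n$ so that $\nu(\limsup_n A_n)\geq\nu(\mathcal{S})>0$, and then invoke the convergence half of the Borel--Cantelli lemma together with $\sum n^{-c}<\infty$ for $c>1$ to conclude that $\nu(A_n)>n^{-c}$ infinitely often. Your added remarks on measurability and outer measure are fine but not needed beyond what the paper implicitly assumes.
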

\begin{proof}
Let $A_n=\{X \in \Sigma^\infty : f(X , n)=1\}$. We know that,
\begin{align*}
	\mathcal{S} \subseteq \limsup\limits_{n \to \infty} A_n = \bigcap\limits_{i=0}^{\infty} \bigcup\limits_{n=i}^{\infty} A_n. 
\end{align*}
Therefore, $\nu(\limsup_{n \to \infty} A_n) \geq \nu(\mathcal{S}) >0$. Since for $c>1$, the series $\sum n^{-c}$ is convergent, using the Borel-Cantelli Lemma (see \cite{Billingsley95}), we obtain that $\nu(A_n) > n^{-c}$ for infinitely many $n$, which is the required conclusion. 	
\end{proof}

\begin{definition}\label{def:f}
Let $f : \Sigma^\infty\times \N \to \{0,1\}$ be defined as follows,
\label{def:fdefinition1}
\begin{align*}
    f(X,n) =
    \begin{cases*}
      1 & if $\tilde{d}(X \restr 2^n  ) > 2^{(1-\tilde{s})2^n}\tilde{d}(X \restr 2^{n-1})$ \\
      0        & otherwise.
    \end{cases*}
  \end{align*}
\end{definition}

\begin{restatable}{lemma}{BorelCantellionsgale} \label{lem:BorelCantellionsgale}%
Given $\tilde{d}: \Sigma^\infty \to \Rplus$, if for all sequences $X \in \mathcal{S} $ where $\nu(\mathcal{S} ) >0$, there exists infinitely many $n \in \N$ such that 
\begin{align*} 
	\tilde{d}(X \restr 2^n) > 2^{(1-\tilde{s})2^{n-1}} \tilde{d}(X \restr 2^{n-1})  
\end{align*}
Then it holds that for infinitely many $n$,
\begin{align} \label{eq:Cond_final}
	\pr_{x \sim \Sigma^{s.2^n}}  [\tilde{d}(g(x )) > 2^{(1-\tilde{s})2^n}\tilde{d}(g(x  \restr s.2^{n-1}))] \geq \frac{1}{n^2}.
\end{align}
\end{restatable}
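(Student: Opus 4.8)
The plan is to read this as a direct application of the converse Borel--Cantelli estimate already packaged as Lemma \ref{lem:BorelCant}, preceded by transporting the relevant event from the seed space to the Cantor space under $\nu$. Let $f$ be the indicator of Definition \ref{def:f} and set $E_n = \{X \in \Sigma^\infty : f(X,n) = 1\}$, so each $E_n$ is a finite union of cylinders $C_w$ with $w \in \Sigma^{2^n}$ and is therefore $\nu$-measurable. The hypothesis says verbatim that every $X \in \mathcal{S}$ satisfies $f(X,n)=1$ for infinitely many $n$, i.e. $\mathcal{S} \subseteq \limsup_{n} E_n$; combined with $\nu(\mathcal{S}) > 0$ this is exactly the input demanded by Lemma \ref{lem:BorelCant}.

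The conceptual bridge is a pushforward identity relating $\nu$ to the uniform distribution on seeds. Because $2^n$ is a power of two, the sampler for $\nu$ on input $1^{2^n}$ uses exactly $s\cdot 2^n$ uniform bits and applies $g$ to them without trimming, so $\nu_{2^n}(w) = \pr_{x \sim \Sigma^{s\cdot 2^n}}[g(x) = w]$ for every $w \in \Sigma^{2^n}$. Since $E_n$ depends only on $X \restr 2^n$, summing over the admissible prefixes gives
\[
\nu(E_n) = \pr_{x \sim \Sigma^{s\cdot 2^n}}\!\left[\tilde{d}(g(x)) > 2^{(1-\tilde{s})2^n}\tilde{d}(g(x)\restr 2^{n-1})\right].
\]
The prefix-preservation property of the construction (Definition \ref{def:gConstrn}, namely $w' \prefix w \Rightarrow g(w') \prefix g(w)$, together with $|g(x\restr s\cdot 2^{n-1})| = 2^{n-1}$) yields $g(x)\restr 2^{n-1} = g(x \restr s\cdot 2^{n-1})$, so the right-hand probability is precisely the quantity in the conclusion (\ref{eq:Cond_final}).

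With these pieces the argument closes in one line: applying Lemma \ref{lem:BorelCant} to $f$ and $\mathcal{S}$ with $c = 2$ produces infinitely many $n$ for which $\nu(E_n) > n^{-2}$, and substituting the pushforward identity converts each such inequality into $\pr_{x \sim \Sigma^{s\cdot 2^n}}[\tilde{d}(g(x)) > 2^{(1-\tilde{s})2^n}\tilde{d}(g(x \restr s\cdot 2^{n-1}))] \geq 1/n^2$, which is the required statement.

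The step I expect to demand the most care is the exponent bookkeeping: the growth factor in Definition \ref{def:f} must be literally the inequality that the hypothesis guarantees holds infinitely often, so that $\mathcal{S} \subseteq \limsup_n E_n$ holds as written. The hypothesis furnishes the factor $2^{(1-\tilde{s})2^{n-1}}$ whereas $f$ and the conclusion use the strictly larger $2^{(1-\tilde{s})2^{n}}$; reconciling these (by aligning the indices/exponents so the two inequalities coincide, or else by re-deriving the stronger factor) is the one genuine obstacle, since Borel--Cantelli preserves the event and cannot by itself upgrade a weaker growth rate to a stronger one. Everything else --- measurability of $E_n$, the law of $\nu$ as the image of uniform seeds under $g$, and the prefix identification --- is routine.
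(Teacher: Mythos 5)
Your proposal is correct and takes essentially the same route as the paper's own proof: apply Lemma \ref{lem:BorelCant} to the indicator $f$ of Definition \ref{def:f} with $c=2$, then transport the resulting bound $\nu(\{X : f(X,n)=1\}) > n^{-2}$ to the seed space using the fact that $\nu_{2^n}$ is the pushforward of the uniform distribution on $\Sigma^{s\cdot 2^n}$ under $g$, with the prefix-preservation property giving $g(x)\restr 2^{n-1} = g(x\restr s\cdot 2^{n-1})$. The exponent mismatch you flag ($2^{(1-\tilde{s})2^{n-1}}$ in the hypothesis versus $2^{(1-\tilde{s})2^{n}}$ in Definition \ref{def:f} and in \eqref{eq:Cond_final}) is real, but it is an inconsistency in the paper itself --- its proof silently passes from one factor to the other --- and it is harmless: redefining $f$ with the hypothesis's factor, or equivalently replacing $\tilde{s}$ by $(1+\tilde{s})/2 < 1$ in the conclusion, restores the argument verbatim and leaves the downstream distinguisher analysis intact.
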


\begin{proof}

We have that for all $Y \in \mathcal{S} $ where $\nu(\mathcal{S} ) >0$, there exists infinitely many $n \in \N$ such that $
	\tilde{d}(Y \restr 2^n) > 2^{(1-\tilde{s})2^{n-1}} \tilde{d}(Y \restr 2^{n-1}).$
From the definition of the function $f$ (Definition \ref{def:f}), we have that for all $X \in \mathcal{S}$, there exist infinitely many $n \in \N$ such that $f(X , n)=1$. 

Since $\nu(\mathcal{S}) >0$, using Lemma \ref{lem:BorelCant}, we obtain that for any $c>1$ there exist infinitely many $n \in \N$ such that $\nu(\{X: f(X , n)=1\})> n^{-c}$. Then, from the definition of $f$, we have $$\nu(\{X: \tilde{d}(X \restr 2^n  ) > 2^{(1-\tilde{s})2^n}\tilde{d}(X \restr 2^{n-1}) \} > n^{-c}.$$ 

Since $\nu$ is defined in terms of $g$, it follows that $$\pr_{x \sim \Sigma^{s.2^n}}  [\tilde{d}(g(x )) > 2^{(1-\tilde{s})2^n}\tilde{d}(g(x  \restr s.2^{n-1}))] > n^{-c}.$$

Putting $c = 2$ , we have

$$\pr_{x \sim \Sigma^{s.2^n}}  [\tilde{d}(g(x )) > 2^{(1-\tilde{s})2^n}\tilde{d}(g(x  \restr s.2^{n-1}))] > n^{-2}.$$

\end{proof}

%

Let $n+1$ be one of the lengths at which the condition in Lemma \ref{lem:BorelCantellionsgale}  holds. 
Let $w,w' \in \Sigma^{2^n}$ such that $w = G_{2^n}(x)$ for some $x \in \Sigma^{2^{sn}}$ and $w' = g(r)$ for some $r \in \Sigma^{2^{sn}}$. 

In order for the condition $M(w'w) \geq \constc \cdot 2^{(1 - \tilde{s})\lvert w \rvert} \cdot M(w')$ to hold, it suffices that $\tilde{d}(w'w) \geq 2^{(1 - \tilde{s})\lvert w \rvert} \cdot \tilde{d}(w')$ and $M(w'w) \geq \constc \cdot \tilde{d}(w'w)$ and $M(w) \leq \tilde{d}(w)$. 


From Lemma \ref{lem:BorelCantellionsgale}, it follows that the first condition holds with probability at least $(n+1)^{-2}$ among $x,r$  selected uniformly at random from $\Sigma^{s \cdot 2^n} \times \Sigma^{s \cdot 2^n}$ and taking $w = g(x), w' = g(r)$.
Since $\tilde{d}$ is $\nu$-approximable, for some $k \in \N$, on input $w \in \Sigma^{2^n}$,  the computations of all $M(w)$ in Algorithm \ref{alg:distinguisher} outputs a value in $[\constc \cdot \tilde{d}(w), \tilde{d}(w)]$ with probability at least $1-2^{-(kn-1)}$ according to the distribution $\nu_{2^n}$. Note that the distribution according to $\nu_{2^n}$ corresponds to $g$ applied to the uniform distribution on $2^{sn}$. 

Therefore it follows that for infinitely many n, 
\begin{align*}
\pr_{(x,r) \sim \Sigma^{s \cdot 2^n} \times \Sigma^{s \cdot 2^n}}[A(G_{2^n}(x),r)) = 1] &\geq \frac{1}{(n+1)^2}-\frac{1}{2^{kn}}.
\end{align*}

Hence, taking $N=2^n$, for infinitely many $n$,  
 \begin{align}
\pr_{x \sim \Sigma^{s \cdot N}}[A(G_{N}(x)) = 1] \geq \frac{1}{(\log N + 1)^2}-\frac{1}{N^k}. 
\end{align}

\subsubsection{Performance of the algorithm on uniformly random inputs}

We now analyze $\pr_{x \in \Sigma^{2^n}}[A(x) = 1]$. We use the Kolmogorov inequality for supermartingales to show that this is inverse exponentially small in input size.

\begin{lemma} [Kolmogorov Inequality \cite{LutzDimension2003}]\label{lem:KolmogorvInequality}
	Let $d : \Sigma^\infty \to \Rplus$ be a supermartingale. For any $w' \in \Sigma^*$ and $n,c \in \N$, the number of strings $w \in \Sigma^n$ such that $d(w'y) \geq c.d(w')$ for some $y \sqsubseteq w$ is less than or equal to $2^{n}/c$. 
\end{lemma}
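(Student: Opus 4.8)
The plan is to pass to the subtree rooted at $w'$ and reduce the count to a weighted cardinality that is controlled by a conservation property of supermartingales. First I would define $\tilde{d} : \Sigma^* \to \Rplus$ by $\tilde{d}(u) = d(w'u)$; since $d$ is a supermartingale, so is $\tilde{d}$, and the target condition ``$d(w'y) \geq c \cdot d(w')$ for some $y \sqsubseteq w$'' becomes ``$\tilde{d}(y) \geq c\,\tilde{d}(\lambda)$ for some prefix $y$ of $w$''. Let $B \subseteq \Sigma^{\leq n}$ be the set of $\sqsubseteq$-minimal strings $y$ with $|y| \leq n$ and $\tilde{d}(y) \geq c\,\tilde{d}(\lambda)$. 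Minimality makes $B$ a prefix-free set (an antichain), and a string $w \in \Sigma^n$ is counted precisely when it has a (necessarily unique) prefix in $B$, so the number of counted strings equals $\sum_{y \in B} 2^{\,n-|y|} = 2^n \sum_{y \in B} 2^{-|y|}$.

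The crux is the supermartingale conservation inequality
\[
\sum_{y \in B} 2^{-|y|}\,\tilde{d}(y) \;\leq\; \tilde{d}(\lambda),
\]
valid for every finite prefix-free $B$. I expect this to be the main obstacle, and I would prove it by a bottom-up merging induction. Repeatedly take a longest string in $B$: if its sibling also lies in $B$, replace the pair $u0, u1$ by their parent $u$, which does not decrease the weighted sum because the supermartingale inequality gives $2^{-|u|}\tilde{d}(u) \geq 2^{-|u|-1}\bigl(\tilde{d}(u0)+\tilde{d}(u1)\bigr)$; if the sibling is absent, first adjoin it, which only increases the sum since $\tilde{d} \geq 0$, and then merge. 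Each step strictly shortens the maximal length in the current set, so after finitely many steps the set collapses to $\{\lambda\}$, whose weighted sum is exactly $\tilde{d}(\lambda)$. Since no step decreased the sum, the original sum is at most $\tilde{d}(\lambda)$.

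To finish I would combine these two facts, where we may assume $\tilde{d}(\lambda) = d(w') > 0$ (if $d(w') = 0$ then nonnegativity forces $\tilde{d} \equiv 0$ on the subtree and the statement is read with the usual convention). For each $y \in B$ the defining inequality $\tilde{d}(y) \geq c\,\tilde{d}(\lambda)$ rearranges to $2^{-|y|} \leq \frac{1}{c\,\tilde{d}(\lambda)}\,2^{-|y|}\tilde{d}(y)$. Summing over $B$ and invoking the conservation inequality gives
\[
\sum_{y \in B} 2^{-|y|} \;\leq\; \frac{1}{c\,\tilde{d}(\lambda)} \sum_{y \in B} 2^{-|y|}\tilde{d}(y) \;\leq\; \frac{1}{c},
\]
so the number of counted strings is $2^n \sum_{y \in B} 2^{-|y|} \leq 2^n/c$, which is the claimed bound.
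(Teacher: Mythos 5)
Your proof is correct and follows essentially the same route as the paper's: both pass to the prefix-free antichain of minimal strings where the capital first reaches $c \cdot d(w')$, establish the supermartingale conservation inequality $\sum_{y} 2^{-|y|} d(w'y) \leq d(w')$ over that antichain, and rearrange the threshold condition to get the bound $2^n/c$. The only differences are cosmetic: you normalize to the subtree via $\tilde{d}(u) = d(w'u)$ and spell out the conservation inequality with an explicit sibling-merging induction, a step the paper simply asserts ``using an induction argument''; your handling of the degenerate case $d(w') = 0$ is also no less careful than the paper's.
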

\begin{proof}
Define, $$\mathcal{E}=\{x : (\exists w \in \Sigma^n (x \sqsubseteq w )) \land (d(w'x) \geq c \cdot d(w')) \land (\forall y \sqsubset x (d(w'y) < c \cdot d(w')))\}$$ 
Observe that $\mathcal{E}$ is a prefix free set such that for every $w \in \Sigma^n$ satisfying $d(w'y) \geq c.d(w')$ for some $y \sqsubseteq w$, there exists an $x \in \mathcal{E}$ such that $x \sqsubseteq y$. 
Since $d$ is a supermartingale, using an induction argument, it follows that
\begin{align*}
d(w') \geq  \sum\limits_{x \in \mathcal{E}} \frac{1}{2^{ \lvert x \rvert}} \cdot d(w'x) =  \sum\limits_{x \in \mathcal{E}} \frac{2^{n - \lvert x \rvert}}{2^{n}} \cdot d(w'x) \geq \frac{1}{2^{n}} \sum\limits_{x \in \mathcal{E}}2^{n - \lvert x \rvert}d(w'x).
\end{align*}
If $d(w')=0$, then the lemma follows trivially. Assume that $d(w')>0$. Let $M$ be the number of strings $w \in \Sigma^n$ such that $d(w'y) \geq c.d(w')$ for some $y \sqsubseteq w$. It follows that,
\begin{align*}
d(w') \geq  \frac{1}{2^{n}} \cdot M \cdot c \cdot d(w').
\end{align*}
Therefore, we obtain that $M \leq 2^{n}/c$, which is the required conclusion.
\end{proof}



 We need an upper bound on the number of strings $w \in \Sigma^{2^n}$ and $r \in \Sigma^{2^{sn}}$ such that $A(w,r) = 1$. This happens when $M(w'w) \geq \constc \cdot 2^{(1 - \tilde{s})\lvert w \rvert} \cdot M(w')$.  

 For any $n \in \N$, take any $r \in \Sigma^{s.2^n}$, and let $w' = g(r)$. 
 Using the Kolmogorov Inequality for Martingales (Lemma \ref{lem:KolmogorvInequality}), the number of strings $w \in \Sigma^{2^{n}}$ such that $\tilde{d}(w'w) \geq \constc^2 \cdot 2^{|w|(1-\tilde{s})} \tilde{d}(w')$ is less than $2^{|w|} / (\constc^2 \cdot 2^{|w| ( 1- \tilde{s})})$. 
 
 For the remaining set of strings, we have that $\tilde{d}(w'w) < \constc^2 \cdot 2^{|w|(1-\tilde{s})} \tilde{d}(w')$. If we have that $M(w'w) \leq \tilde{d}(w'w)$ and $M(w') \geq \constc \cdot \tilde{d}(w')$, it follows that $A(w,r) = 0$.

 We know that, $\nu\{w \in \Sigma^{2^n}:M(w) \not\in [\constc \cdot \tilde{d}(w),\tilde{d}(w)]\} \leq 2^{-nk}$ for some $k \in \N$. 
 From this, it follows that $ \pr_{r \in \Sigma^{s.N}} [M(w') < \constc \cdot \tilde{d}(w')] \leq 2^{-nk}$.
 
 Notice that $\{w \in \Sigma^j:M(w) \not\in [\constc\cdot \tilde{d}(w),\tilde{d}(w)]\} \subseteq \mathrm{supp}(\nu_j)$ for every $j$. From the block-wise definition of the mapping $g$, for any fixed $w' \in g(\Sigma^{2^n})$, we obtain that there are at most $2^{s \cdot 2^n}$ strings in the set $\{w'w\}_{w \in \Sigma^{2^n}}$ contained in $\mathrm{supp}(\nu_{2^{n+1}})$. Therefore, for any fixed $w'$, the number of strings $w \in \Sigma^{2^n}$ such that $M(w'w) \not\in [\constc\cdot \tilde{d}(w'w),\tilde{d}(w'w)]$ is at most $2^{s\cdot 2^n}$.

  Therefore, we have that $$\pr_{x \sim \Sigma^N} \pr_{r \in \Sigma^{s.N}} [A(x,r) = 1] < \constc^{-2} \cdot 2^{- N ( 1- \tilde{s})} +N^{-k}+2^{-(1-s)N}.$$ Therefore, it follows that for any $n \in \N$,

\begin{align} \label{eq:analysisofA2}
	\pr_{x \sim \Sigma^N } [A(x) = 1] < \frac{1}{\constc^2 \cdot 2^{ N  ( 1- \tilde{s})}}+\frac{1}{N^k}+\frac{1}{2^{ N ( 1- {s})}}.
\end{align}

\subsection{Proof of Lemma \ref{lem:equivalenceforward}}

Now, we conclude the proof of Lemma \ref{lem:equivalenceforward}.

\begin{proof}[Proof of Lemma \ref{lem:equivalenceforward}]
If one-way functions exist then there exists PRG $G_n:\Sigma^{sn} \to \Sigma^n$ running in time $t_G(n) \in \poly(n)$ [Theorem \ref{thm:OWFandPRG}]. 
For all $s=2^{-m}$ with $m\geq 2$, using $\{G_n\}$, we construct a polynomial time samplable distribution $\nu$ over $\Sigma^\infty$  such that the number of random bits used by the sampler for $\nu$ on input $1^n$ is at most $sn$. 

We further show that if we assume to the contrary that for some $s' \in (s,1/2)$ and some polynomial-time $\nu$-approximable $s'$-supergale $d$, $\nu(S^\infty(d)) > 0$, then there  exists a distinguisher algorithm $A$ that runs in time $t_A(n) \in \poly(n)$, such that for infinitely many $N \in \N$, for all $k \in \N$,
\begin{align*} 
	\pr_{x \in \Sigma^{s.N}} [A(G_N(x)) = 1] -  &	\pr_{x \in \Sigma^{N}} [A(x) = 1] > \\ & \frac{1}{(\log N + 1)^2}-\frac{2}{N^k} -  \frac{1}{\constc^2 \cdot 2^{ N  ( 1- \tilde{s})}}-\frac{1}{2^{ N ( 1- {s})}}. 
\end{align*}

 Let $c$ be any constant such that $t_A(n) \leq n^c$ and for all large enough $N$,
 \begin{align*}
 	\frac{1}{(\log N + 1)^2}-\frac{2}{N^k} -  \frac{1}{\constc^2 \cdot 2^{ N  ( 1- \tilde{s})}}-\frac{1}{2^{ N ( 1- {s})}} \geq \frac{1}{N^c}
 \end{align*}
It follows that $G_N$ is not a $(N^c, 1/N^c)$ PRG for infinitely many $N$. Therefore the family $\{G_n\}_{n \in \N}$ is not a PRG. Hence we have shown that PRG's $G_n:\Sigma^{sn} \to \Sigma^n$ do not exist for $s=2^{-m}$ where $m\geq 2$, which contradicts the assumption that One-way functions exist. This completes the proof of the lemma.

 \end{proof}

\section{Infinitely often one-way functions from Dimension Gaps}
\label{sec:backwardimplication}

 We require the following result (Theorem 20) from \cite{ilango2022robustness}. 


\begin{theorem}[\cite{ilango2022robustness}]
\label{thm:irsextrapolation}
Assume that no infinitely-often one-way functions exist. Let $\mathcal{D}=\{\mathcal{D}_n\}$ be a polynomial time samplable distribution and $q \geq 1$ be an arbitrary constant. Then, there exists a probabilistic polynomial time algorithm $\mathcal{A}$ and constant $c<1$ such that for all $n$,
\begin{align*}
\pr_{x \sim \mathcal{D}_n} [c \cdot\mathcal{D}_n(x) \leq \mathcal{A}(x) \leq \mathcal{D}_n(x)]	\geq 1-O\left(\frac{1}{n^q}\right).
\end{align*}
\end{theorem}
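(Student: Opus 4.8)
The plan is to prove this statement by reducing the estimation of the probability mass function $\mathcal{D}_n(x)$ to the problem of approximately counting preimages under the sampler, and then to solve that counting problem using the fact that the failure of infinitely-often one-way functions yields efficient inverters. Write the sampler for $\mathcal{D}$ as $S : \{0,1\}^{m} \to \{0,1\}^{n}$ with $m = m(n) = \mathrm{poly}(n)$, so that $\mathcal{D}_n(x) = |S^{-1}(x)| / 2^{m}$. Thus it suffices to produce, with high probability over $x \sim \mathcal{D}_n$, a number $\hat{N}$ with $c \cdot |S^{-1}(x)| \le \hat{N} \le |S^{-1}(x)|$ for an absolute constant $c < 1$; dividing by $2^{m}$ then yields exactly the claimed one-sided multiplicative approximation of $\mathcal{D}_n(x)$.

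First I would invoke the Impagliazzo--Luby characterization: if infinitely-often one-way functions do not exist, then for every polynomial-time computable function and every polynomial $p$ there is a probabilistic polynomial-time \emph{distributional inverter} that, on input $f(U)$, outputs a preimage whose distribution is $1/p$-close to the uniform distribution over $f^{-1}(f(U))$. I would apply this not to $S$ itself but to the augmented family $F_k(r,h) = (S(r), h, h(r))$, where $h$ ranges over a pairwise-independent hash family mapping $\{0,1\}^{m} \to \{0,1\}^{k}$ and $k$ is a parameter. For a target $x$ with $|S^{-1}(x)| \approx 2^{k}$, a random hash value $z$ is hit by exactly one preimage of $x$ with constant probability, so the inverter for $F_k$ succeeds noticeably; once $2^{k}$ is substantially larger than $|S^{-1}(x)|$, essentially no preimage survives and the inverter can only fail. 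Locating the largest $k$ at which the inverter reliably returns a genuine preimage (checked by recomputing $S$ and $h$) therefore pins down $\log_2 |S^{-1}(x)|$ up to an additive constant, which is precisely a multiplicative-constant estimate of the count. Because a returned-and-verified preimage certifies that the count is at least the corresponding value, the estimate is naturally a \emph{lower} bound, matching the one-sided requirement $\mathcal{A}(x) \le \mathcal{D}_n(x)$.

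Finally I would amplify the success probability. The distributional-inversion guarantee holds on the input distribution induced by uniform seeds, which is exactly $\mathcal{D}_n$ paired with a uniform hash, so the whole analysis is carried out under precisely the distribution $x \sim \mathcal{D}_n$ demanded by the statement. To drive the failure probability below $O(1/n^{q})$ I would take $p$ in the inverter to be a sufficiently large polynomial (depending on $q$), repeat the inversion at each level $k$ polynomially many times and keep the most optimistic verified answer, and apply a union bound over the $m = \mathrm{poly}(n)$ candidate levels together with a Markov/averaging argument to pass from ``inverts well on average over seeds'' to ``estimates $\mathcal{D}_n(x)$ correctly for all but an $O(1/n^{q})$ fraction of $x \sim \mathcal{D}_n$.'' The main obstacle I anticipate is controlling this error budget simultaneously across all hash levels while keeping the multiplicative slack a fixed constant $c<1$: the pairwise-independence tail bounds underlying the ``noticeable success versus near-certain failure'' dichotomy degrade near the threshold, so the delicate point is to show that the crossover level is detected to within an additive constant except with probability $O(1/n^{q})$, rather than merely with constant probability.
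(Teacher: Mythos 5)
The paper does not actually prove this theorem: it is imported verbatim (as Theorem~20) from Ilango--Ren--Santhanam, so there is no in-paper argument to compare against. Measured against the known proof of universal extrapolation in that lineage (Impagliazzo--Levin, Impagliazzo--Luby), your outline is the standard route and essentially the right one: write $\mathcal{D}_n(x)=|S^{-1}(x)|/2^m$, augment the sampler to $F_k(r,h)=(S(r),h,h(r))$ with pairwise-independent hashing, use the fact that non-existence of \emph{infinitely-often} one-way functions (and hence of io-weak ones, after amplification) yields inverters succeeding on all but finitely many lengths --- exactly the move this paper itself makes in the proof of Lemma~\ref{lem:equivalenceconverse} --- and locate the hash level $k\approx\log_2|S^{-1}(x)|$ at which inversion transitions from reliable success to reliable failure. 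Your closing Markov/averaging step, passing from inversion on average over seeds to correctness on a $1-O(1/n^q)$ fraction of $x\sim\mathcal{D}_n$, and your observation that feeding the inverter a uniformly random $z\in\Sigma^k$ in place of the unknown $h(r)$ is only statistically faithful below the threshold, are both the right delicate points to flag.

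There is, however, one concretely wrong step: your one-sidedness argument. A returned-and-verified preimage at level $k$ certifies only that $S^{-1}(x)\neq\emptyset$, \emph{not} that $|S^{-1}(x)|\geq 2^k$; when $2^k\gg|S^{-1}(x)|$ and $z$ is uniform, a consistent pair $(r',h)$ still exists with probability roughly $|S^{-1}(x)|/2^k$, which is small but far from negligible. Consequently your amplification --- ``repeat the inversion at each level polynomially many times and keep the most optimistic verified answer'' --- systematically \emph{inflates} the estimate: at a level where $2^k=n^{10}|S^{-1}(x)|$, about $n^{20}$ trials will typically produce a lucky verified preimage, so the output would exceed $\mathcal{D}_n(x)$ by a polynomial factor and violate the upper bound $\mathcal{A}(x)\leq\mathcal{D}_n(x)$ that the theorem requires. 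The standard repair is to estimate the \emph{empirical success frequency} of verified inversion at each level, take $k^{\ast}$ to be the largest level where this frequency exceeds a fixed constant threshold, and output $2^{k^{\ast}-a-m}$ for a suitable constant shift $a$, so that the deliberate constant underestimate absorbs the $O(1)$-level uncertainty at the crossover; Chernoff bounds over polynomially many trials localize the crossover to within $O(1)$ levels except with exponentially small probability per level, and a union bound over the $m=\mathrm{poly}(n)$ levels keeps the total failure probability within $O(1/n^q)$, with the multiplicative slack $c=2^{-O(1)}$ remaining a fixed constant as required. With that substitution your proof plan goes through and matches the cited source's approach.
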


 Now, we prove Lemma \ref{lem:equivalenceconverse}.

\equivalenceconverse*

\begin{proof}
	Assume that for some $s<1$ there exist a polynomial time samplable distribution $\nu$ over $\Sigma^\infty$ such that the number of random bits used by the sampler for $\nu$ on input $1^n$ is at most $sn$ and there exists $\hat{s} \in (s,1]$ such that for every $s' \in (s,\hat{s})$ and polynomial-time $\nu$-approximable $s'$-supergale $d$, $\nu(S^\infty(d))=0$.

	Fix arbitrary $s'$ such that $s<s'<\hat{s}$ and $2^{s'}\in \Q$. Consider the function $d:\Sigma^* \to [0,\infty)$ defined as $d(w)=2^{s'\lvert w \rvert} \nu(w)$. It is easy to verify that $d$ is an $s'$-supergale. Since the number of random bits used by the sampler for $\nu$ on input $1^n$ is at most $sn$, for any $w \in \supp(\nu_n)$, 
	\begin{align*}
		\nu_{n}(w) \geq \frac{1}{2^{n \cdot s}}.
	\end{align*}
	From the above, we obtain
	\begin{align*}
		d(w) \geq 2^{n (s'-s)}.
	\end{align*}
	Take any $X \in \Sigma^\infty$, such that $X \in \range(\nu)$. That is, for all $n \in \N$,  $X \restr n \in \supp(\nu_n)$. It follows that $\limsup_{n \to \infty} d(X \restr n) = \infty$.  From the continuity of measure\cite{Billingsley95}, we have that $\nu(\range(\nu)) = 1$.

	Therefore, it follows that,
	\begin{align*}
		\nu\left\{X \in \Sigma^\infty: \limsup_{n \to \infty} d(X \restr n) = \infty\right\}=1.	
	\end{align*}

	We now show that if infinitely-often one-way functions do not exist, there exists an algorithm $M$ that $\nu$-approximates $d$. Hence, we obtain a contradiction to our assumption that for every $s' \in (s,\hat{s})$ and any polynomial-time $\nu$-approximable $s'$-supergale $d$, $\nu(S^\infty(d))=0$. Therefore, it must be the case that infinitely-often one-way functions exist. 
	
	Let $S$ be the machine that samples the distribution $\nu$ such that for every $n$, $\pr_{r} \left[ S(1^n,r) = w\right]=\nu_n(w)$ for every $w \in \Sigma^n$. Let $c'$ be the constant such that $M$ uses $n^{c'}$ random bits to sample a string of length $n$.
	
	  Assume that no infinitely-often one-way functions exist.  Then, the function $f:\Sigma^* \to \Sigma^*$ defined as $$f(x)=M(1^{\lvert x \rvert^{1/c'}},x)$$ is not an infinitely-often weak one-way function. Hence for any $q \geq 1$ there exist an algorithm $\mathcal{I}$ such that for all $n$,	\begin{align}
	\label{eq:invertercondition}
\pr_{r \sim \Sigma^{n^{c'}}} [ f(\mathcal{I}(f(r)))=f(r)] = \pr_{w \sim \nu_n} [ f(\mathcal{I}(w))=w]	\geq 1-O\left(\frac{1}{n^q}\right).
\end{align}
	
	From our assumption that infinitely-often one-way functions do not exist and Theorem \ref{thm:irsextrapolation}, we obtain that for any $q \geq 1$, there exists a probabilistic polynomial time algorithm $\mathcal{A}$ and constant $c<1$ such that for all $n$,
\begin{align}
\label{eq:extrapolatorcondition}
\pr_{w \sim \nu_n} [c \cdot\nu_n(w) \leq \mathcal{A}(w) \leq \nu_n(w)]	\geq 1-O\left(\frac{1}{n^q}\right).
\end{align}


 Finally, we give the algorithm $M$ that $\nu$-approximates $d$.
 $M$ on input $w \in \Sigma^n$ runs $\mathcal{I}$ on $w$. If $f(\mathcal{I}(w)) \neq w$, $M$ outputs $0$. Else, $M$ runs $\mathcal{A}$ on $w$ and outputs $2^{s'\lvert w \rvert}\mathcal{A}(w)$. Since $2^{s'}\in\Q$, $M$ runs in polynomial time. For any $w$ such that $\nu(w)=0$ (i.e. $w \not\in \mathrm{supp}(\nu_n)$), $\mathcal{I}$ will never succeed in finding a pre-image $r$ such that $f(r) = w$ and hence $M$ always outputs $0$. Therefore for every $n$, $\{w:M(w) \not\in [c\cdot d(w),d(w)]\} \subseteq \mathrm{supp}(\nu_n)$. 

Now, from (\ref{eq:invertercondition}) and (\ref{eq:extrapolatorcondition}) we obtain that,
\begin{align*}
\pr_{x \sim \nu_n} [c \cdot d(x) \leq M(x) \leq d(x)]	\geq 1-O\left(\frac{1}{n^q}\right)-O\left(\frac{1}{n^q}\right) \geq  1-O\left(\frac{1}{n^q}\right).
\end{align*}
Therefore for every $n$, $\nu\{w:M(w) \not\in [c\cdot d(w),d(w)]\} \leq O\left(n^{-q}\right)$. Since $\mathcal{A}$ runs in randomized polynomial time and $q$ is arbitrary, $M$ witness the fact that $d$ is a polynomial time $\nu$-approximable martingale.
\end{proof}

\section{One-way functions and polynomial-time dimension of sets}
\label{sec:setsconjecture}
We prove the following.

\thmforSetspsrg*

In fact, we prove the following stronger assertion.
\begin{theorem}
\label{thm:owfimpliesdimensiongaponsets}
	If one-way functions exist, then for every $s<1/2$, there exist $\F \subseteq \Sigma^\infty$ such that $\Kstrpoly(\F) \leq s$ and $\cdimp(\F)\geq 1/2$.
\end{theorem}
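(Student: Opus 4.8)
The plan is to instantiate Lemma~\ref{lem:equivalenceforward} and read off both bounds from the resulting distribution. Fix $s<1/2$ and let $\nu$ be the polynomial-time samplable distribution furnished by Lemma~\ref{lem:equivalenceforward}, so that (i) the sampler for $\nu$ uses at most $sn$ random bits on input $1^n$, and (ii) for every $s'\in(s,1/2)$ and every polynomial-time $\nu$-approximable $s'$-supergale $d$ we have $\nu(S^\infty(d))=0$. As the witnessing set I would take
\[
\F \;=\; \bigcap_{n\geq 1}\ \bigcup_{w\in\supp(\nu_n)} C_w,
\]
the set of sequences all of whose prefixes have positive $\nu$-probability. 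First I would record that $\nu(\F)=1$ and $\F\neq\emptyset$: each $U_n=\bigcup_{w\in\supp(\nu_n)}C_w$ is a finite union of clopen cylinders, hence a closed (compact) subset of $\Sigma^\infty$ with $\nu(U_n)=\sum_{w\in\supp(\nu_n)}\nu_n(w)=1$; the sets $U_n$ are nested decreasing (if $w\in\supp(\nu_{n+1})$ then $w\restr n\in\supp(\nu_n)$), so continuity of measure from above gives $\nu(\F)=\lim_n\nu(U_n)=1$, while Cantor's intersection theorem on the compact space $\Sigma^\infty$ gives $\F\neq\emptyset$.

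For the complexity bound I would argue that $\Kstrpoly(\F)\le s$. Let $S$ be the sampler for $\nu$, running in polynomial time $t(n)$. For any $X\in\F$ and any $n$, the prefix $X\restr n$ lies in $\supp(\nu_n)$, so there is a seed $r$ of length at most $sn$ with $S(1^n,r)=X\restr n$. Describing $X\restr n$ by a self-delimiting encoding of the pair $(n,r)$ together with a constant-size program running $S$ yields $K_t(X\restr n)\le sn+O(\log n)$ for a suitable polynomial time bound $t$, and crucially this holds for \emph{every} $n$ (not merely infinitely many). Hence $\limsup_{n}K_t(X\restr n)/n\le s$ for each $X\in\F$, so that $\Kstrpoly(\F)\le \sup_{X\in\F}\limsup_n K_t(X\restr n)/n\le s$.

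For the dimension bound I would show $\cdimp(\F)\ge 1/2$ by contradiction using property (ii). Suppose $\cdimp(\F)<1/2$. Then by definition of $\cdimp$ there is an exact polynomial-time computable $s'$-gale $d$ with $s'<1/2$ and $\F\subseteq S^\infty(d)$. Being exactly computable, $d$ is trivially $\nu$-approximable, since the approximating machine can compute $d(w)$ with no error at all. Moreover every $s'$-gale satisfies the $s''$-supergale inequality for all $s''\ge s'$, so $d$ is an $s''$-supergale for any such $s''$; since $s<1/2$ and $s'<1/2$, the interval $(\max(s,s'),1/2)$ is nonempty, and choosing $s''$ in it makes $d$ a polynomial-time $\nu$-approximable $s''$-supergale with $s''\in(s,1/2)$. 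Because $S^\infty(d)$ is Borel and $\F\subseteq S^\infty(d)$, we obtain $\nu(S^\infty(d))\ge\nu(\F)=1>0$, contradicting property (ii). Hence $\cdimp(\F)\ge 1/2$, completing the proof.

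The only genuinely delicate points are the parameter bookkeeping in the last paragraph --- promoting the exact $s'$-gale to an $s''$-supergale so that $s''$ lands strictly inside $(s,1/2)$, which is precisely where property (ii) applies --- and the compactness/continuity argument establishing $\nu(\F)=1$; everything else is routine. Once Theorem~\ref{thm:owfimpliesdimensiongaponsets} is in hand, Theorem~\ref{thm:forSetspsrg} follows at once in contrapositive form: for any fixed $s<1/2$ the set $\F$ satisfies $\Kpoly(\F)\le\Kstrpoly(\F)\le s<1/2\le\cdimp(\F)$, so $\cdimp(\F)\neq\Kpoly(\F)$.
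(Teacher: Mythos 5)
Your proposal is correct and takes essentially the same route as the paper's proof: the same witness set $\F=\bigcap_{n\geq 1}\bigcup_{w\in\mathrm{supp}(\nu_n)}C_w$, the same compactness and continuity-of-measure argument giving $\F\neq\emptyset$ and $\nu(\F)=1$, the same short-seed encoding for $\Kstrpoly(\F)\leq s$, and the same contradiction in which an exact polynomial-time gale is trivially $\nu$-approximable and so would force $\nu(S^\infty(d))=1>0$. Your explicit promotion of the $s'$-gale to an $s''$-supergale with $s''\in(\max(s,s'),1/2)$ is a slightly more careful rendering of a step the paper leaves implicit, but it is the same argument.
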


\begin{proof}
Assume that one-way functions exist. From Lemma \ref{lem:equivalenceforward}, for every $s<1/2$ there exist a polynomial time samplable distribution $\nu$ over $\Sigma^\infty$ and polynomial $t$ such that:
\begin{enumerate}
\item The number of random bits used by the sampler for $\nu$ on input $1^n$ is at most $sn$.
\item For every $s' \in (s,1/2)$ and any polynomial-time $\nu$-approximable $s'$-supergale $d$, $\nu(S^\infty(d))=0$.
\end{enumerate}
Consider the following set,
\begin{align*}
\mathcal{F}'=\bigcap\limits_{n \geq 1} \bigcup\limits_{w \in \mathrm{supp}(\nu_n)} C_w.
\end{align*}
Observe that $\cup_{w \in \mathrm{supp}(\nu_n)} C_w$ is a finite union of cylinder sets in $\Sigma^\infty$ and hence is a closed set. Since $\nu$ is a polynomial time samplable probability distribution on $\Sigma^\infty$, $\cup_{w \in \mathrm{supp}(\nu_n)} C_w$ is a superset of $\cup_{w \in \mathrm{supp}(\nu_{n+1})} C_w$ for every $n$. Since $\mathcal{F}'$ is an intersection of a decreasing sequence of non-empty closed sets in the compact space $\Sigma^\infty$, $\mathcal{F}'$ is non-empty (see \cite{simmons1963introduction}). Since $\nu(\cup_{w \in \mathrm{supp}(\nu_n)} C_w)=1$ for every $n$, using the continuity of probability measures from above \cite{Billingsley95}, we get $\nu(\mathcal{F}')=1$. Since the number of random bits used by the sampler for $\nu$ on input $1^n$ is at most $sn$, $\Kstrpoly(\F')\leq s$. Since the number of random bits used by the sampler for $\nu$ on input $1^n$ is at most $sn$, $\Kstrpoly(\F)\leq s$. For the sake of contradiction, assume that $\cdimp(\F)<1/2$. Therefore, for some $s'\in (\cdimp(\F),1/2)$, there exist a polynomial-time  $s'$-supergale $d$ such that $\F \subseteq S^{\infty}(d)$. Since $d$ is polynomial-time computable, $d$ is trivially a polynomial-time $\nu$-approximable $s'$-supergale. Now, since we know that $\nu(S^\infty(d)) \geq \nu(\F)=1$, we obtain a contradiction. Hence, it must be the case $\cdimp(\F)\geq 1/2$.
\end{proof}

\section{One-way functions and polynomial-time dimension of sequences}
\label{sec:sequenceconjecture}

Towards this, we prove the following lemma.

\begin{lemma}
\label{lem:kstrpolyofdistribution}
If $\nu$ is a polynomial time samplable distribution over $\Sigma^\infty$ such that the number of random bits used by the sampler for $\nu$ on input $1^n$ is at most $sn$, then $\nu\{X \in \Sigma^\infty : \Kstrpoly(X) \leq s\}=1$. 
\end{lemma}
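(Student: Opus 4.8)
The plan is to show that the set $\{X : \Kstrpoly(X) \le s\}$ contains $\range(\nu) = \{X \in \Sigma^\infty : X \restr n \in \supp(\nu_n) \text{ for every } n\}$, and then to invoke the fact that $\nu(\range(\nu)) = 1$. This full-measure statement is already established (in the proof of Lemma~\ref{lem:equivalenceconverse}, and again via the set $\mathcal{F}'$ in Section~\ref{sec:setsconjecture}): it follows from the continuity of $\nu$ from above applied to the decreasing sequence of closed sets $\bigcup_{w \in \supp(\nu_n)} C_w$, each of which has $\nu$-measure $1$. So it suffices to bound $\Kstrpoly(X)$ for $X \in \range(\nu)$.

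Next I would exhibit a single polynomial time bound $t$ witnessing $\Kstrpoly(X) \le s$ for every such $X$. Let $M$ be the probabilistic machine sampling $\nu$, so that on input $1^n$ it uses at most $sn$ random bits and $M(1^n, \cdot)$ realizes $\nu_n$, and let $p(n) \in \poly(n)$ bound its running time. Fix $X \in \range(\nu)$ and $n \in \N$. Since $X \restr n \in \supp(\nu_n)$ we have $\nu_n(X \restr n) > 0$, hence there is a seed $r_n$ of length at most $sn$ with $M(1^n, r_n) = X \restr n$. I would then describe a program $\Pi_n$ that hard-codes the fixed description of $M$, carries a self-delimiting encoding of $n$ (costing $2\log n + O(1)$ bits) followed by the seed $r_n$, and on execution simulates $M(1^n, r_n)$ to print $X \restr n$.

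This yields $K_t(X \restr n) \le sn + 2\log n + O(1)$, where $t(n)$ — the time to decode $n$, read the seed, and run $M$ — is bounded by a fixed polynomial (essentially $p(n)$ plus $O(n)$ parsing overhead) independent of both $X$ and $n$. Consequently,
\begin{align*}
\Kstrpoly(X) \le \limsup_{n \to \infty} \frac{K_t(X \restr n)}{n} \le \limsup_{n \to \infty} \frac{sn + 2\log n + O(1)}{n} = s,
\end{align*}
so $\range(\nu) \subseteq \{X : \Kstrpoly(X) \le s\}$ and the lemma follows from $\nu(\range(\nu)) = 1$.

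The only real subtlety — rather than a genuine obstacle — is ensuring that $t$ is one polynomial serving all $n$ and all $X \in \range(\nu)$ at once; this is immediate since $M$ is a fixed polynomial-time machine and the additive $O(\log n)$ term for encoding $n$ contributes nothing to the asymptotic rate. I note that the same bound controls the $\liminf$ and so in fact gives $\Kpoly(X) \le s$ as well, but only the $\limsup$ (i.e.\ $\Kstrpoly$) statement is needed here.
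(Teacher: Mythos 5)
Your proof is correct and takes essentially the same route as the paper: the paper obtains $\nu(\mathcal{F})=1$ for $\mathcal{F}=\bigcap_{n\geq 1}\bigcup_{w\in\supp(\nu_n)}C_w$ (which is exactly your $\range(\nu)$) via the same decreasing-closed-sets/continuity-of-measure argument, and then asserts $\Kstrpoly(X)\leq s$ on this set directly from the short-seed property, just as you do. Your write-up simply makes explicit the description argument (hard-coding the sampler $M$, a self-delimiting $2\log n$-bit encoding of $n$, and the seed $r_n$ with $M(1^n,r_n)=X\restr n$) that the paper leaves implicit in the proof of Theorem~\ref{thm:sequencestrongtheorem}.
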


We prove the following.

\thmforSeqpsrg*

It is enough to show the following.
\begin{theorem}
\label{thm:sequencestrongtheorem}
	If one-way functions exist, then for every $s<1/2$, there exist $X \in \Sigma^\infty$ such that $\Kstrpoly(X) \leq s$ and $\cdimp(X)\geq 1/2$.
\end{theorem}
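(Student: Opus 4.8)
The plan is to extract a single witnessing sequence from the almost-everywhere dimension gap provided by Lemma~\ref{lem:equivalenceforward}. For the given $s<1/2$, I would first invoke Lemma~\ref{lem:equivalenceforward} to fix a short-seed polynomial-time samplable distribution $\nu$ over $\Sigma^\infty$ whose sampler uses at most $sn$ random bits on input $1^n$ and for which $\nu(S^\infty(d))=0$ for every polynomial-time $\nu$-approximable $s'$-supergale $d$ with $s'\in(s,1/2)$. The $\Kstrpoly$ half of the statement is then immediate from Lemma~\ref{lem:kstrpolyofdistribution}: the set $\mathcal{G}=\{X:\Kstrpoly(X)\le s\}$ has $\nu(\mathcal{G})=1$. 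It therefore remains to produce a single $X\in\mathcal{G}$ with $\cdimp(X)\ge1/2$, which I would do by contradiction.

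Suppose $\cdimp(X)<1/2$ for every $X\in\mathcal{G}$; since $\nu(\mathcal{G})=1$ this gives $\nu\{X:\cdimp(X)<1/2\}=1$. The difficulty, compared with the sets statement (Theorem~\ref{thm:owfimpliesdimensiongaponsets}), is that $\cdimp(\F)<1/2$ produces one gale covering all of $\F$ at once, whereas here each sequence is only guaranteed its own gale, with its own rate and its own polynomial time bound, giving a priori an uncountable family of gales with unbounded running times; property~(2) of $\nu$, however, can only be violated by a single polynomial-time gale. To localize, I would stratify by rate and time: for each $s'\in(s,1/2)$ with $2^{s'}\in\Q$ and each $k\in\N$, put
\[
\mathcal{G}_{s',k}=\{X\in\Sigma^\infty:\text{some exact }n^k\text{-time }s'\text{-gale }d\text{ satisfies }X\in S^\infty(d)\}.
\]
Using the boosting identity $d'(w)=2^{(s'-s_1)|w|}d(w)$, which converts a poly-time $s_1$-gale succeeding on $X$ into a poly-time $s'$-gale succeeding on $X$ whenever $s'>s_1$, followed by the Exact Computation Lemma~\ref{lem:LutzExactComputation} (available since $2^{s'}\in\Q$) to make it exact, and using that $\{s':2^{s'}\in\Q\}$ is dense in $(s,1/2)$, I would show that every $X$ with $\cdimp(X)<1/2$ lies in some $\mathcal{G}_{s',k}$. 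Hence $\{X:\cdimp(X)<1/2\}=\bigcup_{s',k}\mathcal{G}_{s',k}$ is a countable union of Borel sets of total $\nu$-measure $1$, so by countable additivity some fixed pair $(s^*,k^*)$ has $\nu(\mathcal{G}_{s^*,k^*})>0$.

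Finally I would collapse the countably many $n^{k^*}$-time exact $s^*$-gales covering $\mathcal{G}_{s^*,k^*}$ into a single polynomial-time gale. Applying the gale combination technique of Lemma~\ref{lem:Mgalecombination} yields one $n^{k^*}\!\cdot n\cdot\log n$-time $s^*$-gale $d^*$, which is in particular an $s^*$-supergale, succeeding wherever any $n^{k^*}$-time $s^*$-gale succeeds, so $\mathcal{G}_{s^*,k^*}\subseteq S^\infty(d^*)$. Being exactly polynomial-time computable, $d^*$ is trivially polynomial-time $\nu$-approximable (the approximating machine computes $d^*$ exactly, so its error set is empty, lies in $\mathrm{supp}(\nu_n)$, and has measure $0$). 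Then $\nu(S^\infty(d^*))\ge\nu(\mathcal{G}_{s^*,k^*})>0$ for $s^*\in(s,1/2)$, contradicting property~(2) of $\nu$. This contradiction produces $X\in\mathcal{G}$ with $\Kstrpoly(X)\le s$ and $\cdimp(X)\ge1/2$, proving the theorem.

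\textbf{Main obstacle.} The crux is precisely the passage from the pointwise hypothesis ``$\cdimp(X)<1/2$ for all $X$'' to one contradicting gale: the measure-theoretic pigeonhole over the countable index set $(s',k)$ pins a single rate and time bound onto a positive-measure set, and the combination lemma then merges the resulting countably many gales into one while keeping the running time polynomial and preserving the success set. The auxiliary points---boosting between rates, Borel-measurability of each $\mathcal{G}_{s',k}$, and the triviality of $\nu$-approximability for exactly computable gales---are routine.
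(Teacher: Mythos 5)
Your proposal is correct and follows the same skeleton as the paper's proof: invoke Lemma \ref{lem:equivalenceforward}, reduce to a positive-$\nu$-measure set of sequences covered by gales of one fixed rate and one fixed polynomial time bound, merge those countably many gales into a single polynomial-time gale via Lemma \ref{lem:Mgalecombination}, observe that an exact polynomial-time gale is trivially $\nu$-approximable, and contradict property (2) of $\nu$. There are two points of divergence, both minor but worth noting. First, for the $\Kstrpoly$ half you cite Lemma \ref{lem:kstrpolyofdistribution}, whereas the paper works inside the explicit set $\F=\bigcap_{n\geq 1}\bigcup_{w\in\mathrm{supp}(\nu_n)}C_w$, shown non-empty and of $\nu$-measure $1$ by compactness and continuity of measure; the content is the same, and the paper states Lemma \ref{lem:kstrpolyofdistribution} for exactly this purpose. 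Second, and more substantively, the paper's contradiction hypothesis is \emph{uniform}: it assumes a single $s'\in(s,1/2)$ with $\cdimp(X)<s'$ for all $X\in\F$, partitions $\F$ only by running time ($R_i$, $S_i=R_i\setminus R_{i-1}$), and closes with ``since $s'$ was arbitrary, the theorem follows.'' Strictly, negating the uniform hypothesis for each $s'$ only yields witnesses $X_{s'}$ depending on $s'$, so extracting one $X$ with $\cdimp(X)\geq 1/2$ needs the measure-zero refinement the paper records separately as Lemma \ref{lem:owfimpliesdimensiongapwithprobability1}. Your double stratification over the countable family of pairs $(s',k)$ with $2^{s'}\in\Q$ handles the pointwise quantifier directly and delivers the single witness in one pass; likewise, your explicit boosting step $d'(w)=2^{(s'-s_1)\lvert w\rvert}d(w)$ followed by Lemma \ref{lem:LutzExactComputation} makes precise what the paper leaves implicit when it asserts, from $\cdimp(X)<s'$, an exact polynomial-time $s'$-gale succeeding on $X$. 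In short, yours is a tightened rendering of the paper's argument rather than a different route, and what it buys is a self-contained derivation of the single-sequence conclusion without appeal to the companion lemma.
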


We require the following construction.

\begin{restatable}{lemma}{Mgalecombination}  \label{lem:Mgalecombination}
For any time $t(n) \in \poly(n)$, and $s>0$ with $2^{s}\in \Q$, there exists an exact $ t(n)\cdot n \cdot \log(n)$-time computable $s$-gale $\tilde{d}$ such that for all exact $t(n)$-time computable $s$-gales $d$, there exist a constant $c$ such that for all $X \in \Sigma^\infty$ and $n \in \N$, $\tilde{d}(X \restr n ) \geq c \cdot d(X \restr n)$. 
\end{restatable}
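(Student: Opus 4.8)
The plan is to realize $\tilde d$ as a single weighted combination of \emph{all} candidate $t(n)$-time gales in which the infinite tail is replaced by fair (even) betting, arranged so that the object is still an \emph{exact} $s$-gale. First I would fix an enumeration $M_1,M_2,\dots$ of all machines clocked to $t(n)$ steps, so that every exact $t(n)$-time $s$-gale $d$ (with $d(\lambda)=1$) is computed by some $M_i$. Since one cannot decide which $M_i$ compute legitimate gales, I would \emph{sanitize} each into a genuine $s$-gale $d_i$ with $d_i(\lambda)=1$: descending the tree, whenever the proposed values $M_i(w0),M_i(w1)$ fail to be nonnegative and to satisfy $M_i(w0)+M_i(w1)=2^s M_i(w)$, overwrite the bet at $w$ by even betting, and otherwise copy $M_i$'s (renormalized) relative bets. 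This produces an $s$-gale $d_i$ with $d_i=M_i$ whenever $M_i$ was already a valid exact gale. Because the override depends on the history, computing $d_i(w)$ costs $O(|w|)$ evaluations of $M_i$, each of cost $t(|w|)$.

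Next I would combine them. Taking weights $\alpha_i=2^{-i}$ and activation lengths $\ell_i=2^i$, define $\hat d_i$ to be the $s$-gale that starts with capital $\alpha_i$, bets evenly on the first $\ell_i-1$ bits, and thereafter imitates the relative bets of $d_i$:
\[
\hat d_i(w)=
\begin{cases}
\alpha_i\,2^{(s-1)|w|} & \text{if } |w|<\ell_i-1,\\
\alpha_i\,2^{(s-1)(\ell_i-1)}\,\dfrac{d_i(w)}{d_i(w\restr(\ell_i-1))} & \text{if } |w|\ge \ell_i-1.
\end{cases}
\]
A direct check shows each $\hat d_i$ is an exact $s$-gale with $\hat d_i(\lambda)=\alpha_i$: both branches satisfy the $s$-gale equation (the imitation branch because $d_i$ does), and they agree at the seam $|w|=\ell_i-1$ precisely because the imitation inherits the capital $\alpha_i 2^{(s-1)(\ell_i-1)}$ that even betting had built up. Then $\tilde d=\sum_i \hat d_i$ is an $s$-gale with $\tilde d(\lambda)=\sum_i\alpha_i=1$, taking rational values since $2^s\in\Q$.

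For exact computability in the stated time, note that at length $n=|w|$ only the gales with $\ell_i\le n+1$, i.e. $i\le\log_2(n+1)$, are in their imitation phase; for all larger $i$ one has $\hat d_i(w)=\alpha_i 2^{(s-1)n}$, so their total contribution is the single rational $\big(\sum_{i>\log_2(n+1)}\alpha_i\big)2^{(s-1)n}$. Hence $\tilde d(w)$ is a finite sum of $O(\log n)$ rationals; each imitation term needs $d_i(w)$ and $d_i(w\restr(\ell_i-1))$, furnished by the $O(n)$ prefix evaluations of $M_i$, so the total is $O(t(n)\,n\,\log n)$ (with the schedule chosen slightly sparser if needed to absorb universal-simulation overhead, which does not affect domination since only eventual activation of each gale matters). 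For domination, since every $\hat d_j\ge 0$ we have $\tilde d\ge\hat d_i$; using $d_i(w)\le 2^{s|w|}$, in the range $|w|\ge\ell_i-1$ one gets $\hat d_i(w)\ge\alpha_i 2^{-(\ell_i-1)}d_i(w)$, and in the range $|w|<\ell_i-1$ one gets $\hat d_i(w)=\alpha_i 2^{(s-1)|w|}\ge\alpha_i 2^{-(\ell_i-1)}d_i(w)$, so $\tilde d(w)\ge c\,d_i(w)$ for all $w$ with $c=\alpha_i 2^{-(\ell_i-1)}>0$ depending only on $i$ (hence on $d$).

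I expect the main obstacle to be exactly the gale-consistency of the \emph{hand-off}. The naive truncation $\sum_{i\le k(n)}\alpha_i d_i$ plus an even-betting tail is easy to compute but \emph{violates} the equality $\tilde d(w)=(\tilde d(w0)+\tilde d(w1))/2^s$ at every length where a new gale is promoted from the tail into the explicit sum, and for $s<1$ the reserve decays exponentially, so one cannot fund late gales out of a running surplus. The resolution I would emphasize is to pre-allocate the weight $\alpha_i$ from the root and let the activated gale inherit precisely the even-betting capital the tail had reserved for it; this is what makes $\tilde d$ a genuine $s$-gale rather than a mere supergale while keeping it exactly rational-valued and computable within the $t(n)\,n\,\log n$ budget.
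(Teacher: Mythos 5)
Your proposal is correct and follows essentially the same route as the paper's proof: an enumeration of clocked machines sanitized into genuine $s$-gales, weights $2^{-i}$ with activation length $2^i$, an even-betting reserve for not-yet-activated gales that the activated gale inherits at the seam, the same $O(t(n)\cdot n\cdot\log n)$ accounting over the $O(\log n)$ active machines, and the same domination bound (your uniform constant $\alpha_i 2^{-(\ell_i-1)}$ via $d_i(w)\le 2^{s|w|}$ matches the paper's $2^{-k}2^{-2^k(1-s)}/d(w\restr 2^k)$). The only detail you should make explicit is the zero-denominator case in the imitation branch, i.e.\ set $\hat d_i(w)=0$ when $d_i(w\restr(\ell_i-1))=0$ (then $d_i(w)=0$ as well, so the gale equation and domination survive), exactly as the paper does.
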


\begin{proof}
	Let $\{M_i\}$ be a standard computable enumeration of all Turing machines. The idea of the proof is that we consider the $t(n)$-time martingales among the first $\log n$ machines to place bets on strings of length $n$. The remaining martingales are assumed to bet evenly on all strings of length $n$. 
	
	We will therefore define the martingale, conditioned on the length of the input string $x$.  When the length of the input string $x$ satisfies $2^n \leq |x| < 2^{n+1}$, we consider the machines $M_1 \dots M_n$ to decide $\tilde{d}(x)$. 
	
	The remaining machines $M_i$ for $i \geq n+1$ are assumed to bet \emph{evenly} on $x$. So, for $i \geq n+1$, we force the capital placed by $M_i$ on $x$ to be $2^{|x| \cdot (s - 1)}$. 

	Therefore, given $x \in \Sigma^*$, such that $2^n \leq |x| < 2^{n+1}$, we define 
	\begin{align*}
		\tilde{d}(x) &= \sum_{i =1}^n 2^{-i} \cdot M_i'(x) + \sum_{i=n+1} ^\infty {2^{-i}} \cdot 2^{|x| \cdot (s - 1)} \\
		&= 2^{- n + |x| \cdot (s - 1)} + \sum_{i =1}^n  2^{-i} \cdot M_i'(x). 
	\end{align*}
	
	We now define $M_i'(x)$. For $n \in N$, let $\D_n(x)$ be the set of machines in $\{M_1, \dots M_n\}$ such that for each $j \leq |x|$, $M(x \restr j)$ runs in time $t(j)$ and $M(\lambda) = 1$ and $2^s \cdot M(x \restr j) = M((x\restr j).0) + M((x\restr j).1)$. 
	
	\textbf{Case 1: $M_i\in \D_n(x)$}. The actual bets placed by martingale $M_i$ is taken into consideration for strings of length more than $2^i$. In that case, we need to ensure that  the capital of $M_i$ at $x \restr 2^i$ is forced to be $2^{2^i \cdot (s - 1)}$.
	
	In this case, when $M_i(x \restr 2^i) > 0$, we define $M_i'(x)= 2^{2^i \cdot (s - 1)} (M_i(x)/M_i(x \restr 2^i))$. If $M_i(x \restr 2^i) = 0$, then it holds that $M_i(x) = 0$, so we just take $M_i'(x) = 0$. 
	
	\textbf{Case 2: $M_i\not\in \D_n(x)$}. In this case, let $j < |x|$ be the least number such that $M_i(x\restr j)$ takes more than $t(j)$ time to run, or the $s$-gale condition gets violated, that is $2^s \cdot M_i(x \restr j) \neq M_i((x\restr j)0) + M_i((x\restr j)1)$. In this case, we freeze $M_i's$ capital at index $j$, and $M_i$ is forced to bet evenly after that. Note that we have to force $M_i$ to bet evenly up to $x \restr 2^i$. Therefore if $j < 2^i$, we take $M_i'(x) = 2^{|x| \cdot (s - 1)}$. Otherwise, $j \geq 2^i$ and  $M_i'(x) = M_i'(x \restr j) \cdot 2^{(|x|-j) \cdot (s - 1)}$. Note that from the previous case, we have $M_i'(x \restr j) = 2^{2^i \cdot (s - 1)} (M_i(x \restr j)/M_i(x \restr 2^i))$.
	
	\medskip
	
	Consider any exact $t$-time computable martingale $d$. Let $M_k$ be any $t$-time machine such that $d(w)=M_k(w)$ for every $w \in \Sigma^*$. Then, it follows from the construction of $\tilde{d}$ that for all $w$ with $\lvert w \rvert > 2^k$,	 \begin{align*}
		\tilde{d}(w) \geq \frac{2^{-k} \cdot 2^{-2^k (1-s)}  }{d(w \restr2^k)} \cdot d(w).
	\end{align*}
	
	Let $c$ be the constant on the right hand side. Note that the membership of any machine from $\{M_1,M_2,\dots M_n\}$ in $\D_n(w)$ is decidable in time $t(|w|) \cdot n \cdot |w|$. Since $n \leq \log(|w|)$, and $2^s \in \Q$, it follows that, $d$ is an exact $t(n) \cdot n \cdot \log(n)$-time computable $s$-gale. 
	%
	
\end{proof}

Now, we prove Theorem \ref{thm:sequencestrongtheorem}.

\begin{proof}[Proof of Theorem \ref{thm:sequencestrongtheorem}]
	Assume that one-way functions exist. From Lemma \ref{lem:equivalenceforward}, for every $s<1/2$ there exist a polynomial time samplable distribution $\nu$ over $\Sigma^\infty$ and polynomial $t$ such that:
\begin{enumerate}
\item The number of random bits used by the sampler for $\nu$ on input $1^n$ is at most $sn$.
\item For every $s' \in (s,1/2)$ and any polynomial-time $\nu$-approximable $s'$-supergale $d$, $\nu(S^\infty(d))=0$.
\end{enumerate}
Consider the following set,
\begin{align*}
\mathcal{F}=\bigcap\limits_{n \geq 1} \bigcup\limits_{w \in \mathrm{supp}(\nu_n)} C_w.
\end{align*}
Observe that $\cup_{w \in \mathrm{supp}(\nu_n)} C_w$ is a finite union of cylinder sets in $\Sigma^\infty$ and hence is a closed set. Since $\nu$ is a polynomial time samplable probability distribution on $\Sigma^\infty$, $\cup_{w \in \mathrm{supp}(\nu_n)} C_w$ is a superset of $\cup_{w \in \mathrm{supp}(\nu_{n+1})} C_w$ for every $n$. Since $\mathcal{F}$ is an intersection of a decreasing sequence of non-empty closed sets in the compact space $\Sigma^\infty$, $\mathcal{F}$ is non-empty (see \cite{simmons1963introduction}). Since $\nu(\cup_{w \in \mathrm{supp}(\nu_n)} C_w)=1$ for every $n$, using the continuity of probability measures from above \cite{Billingsley95}, we get $\nu(\mathcal{F})=1$. Since the number of random bits used by the sampler for $\nu$ on input $1^n$ is at most $sn$, $\Kstrpoly(X)\leq s$ for every $X \in \F$. For the sake of contradiction, assume that there exist $s' \in (s,1/2)$ such that $\cdimp(X)<s'$ for every $X \in \F$. So, every $X \in \F$ there exists an exact computable $s'$-gale $d'_X$ that runs in time $t_{d'_X}(n) \in \poly(n)$, such that $d'_X$ succeeds on $X$.

 We first partition the set $\F$, based on the running time of the $s'$-gales that succeeds on it. 
Define 
\begin{align*}
R_i = \{X \in \F : \text{ there exists an $n^i$-time $s'$-gale $d$ such that $X \in S^{\infty}(d)$} \}
\end{align*}
and let $S_i = R_i \setminus R_{i-1}$. Note that $\{S_i\}_{i \in \N}$ is a partition of $\F$ into countably many disjoint subsets. As $\sum_{i=1}^{\infty} \nu(S_i) = 1$, there exists an $k \in \N$ such that $\sum_{i=1}^{k} \nu(S_i) > 0$. Define $\mathcal{S} = \bigcup_{i=1}^{k} S_i$.
Using Lemma \ref{lem:Mgalecombination}, there exists an exact $n^{k+c}$-time computable $s'$-gale $d'$ that succeeds on all sequences in $\mathcal{S}$. Since $d$ is polynomial-time computable, $d$ is trivially a polynomial-time $\nu$-approximable $s'$-supergale. Now, since we know that $\nu(S^\infty(d)) \geq \nu(\mathcal{S})>0$, we obtain a contradiction. Hence, it must be the case that there exist a sequence $X\in \F$ with $\cdimp(X)>s'$. Since $s'$ was arbitrary, the theorem follows.
\end{proof}

We also note the following lemma which follows using the same ideas in the proof of Theorem \ref{thm:sequencestrongtheorem}.

\begin{lemma}
\label{lem:owfimpliesdimensiongapwithprobability1}
	If for some $s<1/2$, $\nu$ is a polynomial time samplable distribution over $\Sigma^\infty$ such that:
\begin{enumerate}
\item The number of random bits used by the sampler for $\nu$ on input $1^n$ is at most $sn$.
\item For every $s' \in (s,1/2)$ and any polynomial-time $\nu$-approximable $s'$-supergale $d$, $\nu(S^\infty(d))=0$.
\end{enumerate}
Then, $$\nu\{X \in \Sigma^\infty : \Kstrpoly(X) \leq s\text{ and } \cdimp(X) \geq 1/2>s\}=1.$$
\end{lemma}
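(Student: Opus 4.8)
The plan is to establish the two defining constraints as separate $\nu$-full-measure events and then intersect them. The bound $\Kstrpoly(X)\le s$ holds $\nu$-almost everywhere by condition~1 together with Lemma~\ref{lem:kstrpolyofdistribution}, whose hypothesis is precisely that $\nu$ is short-seed polynomial time samplable with seed length at most $sn$; this yields $\nu\{X:\Kstrpoly(X)\le s\}=1$ with no further work. Since the hypothesis forces $s<1/2$, the clause ``$1/2>s$'' in the conclusion is automatic, so it remains only to prove $\nu\{X:\cdimp(X)\ge 1/2\}=1$, equivalently that the bad set $B=\{X:\cdimp(X)<1/2\}$ is $\nu$-null. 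Intersecting the two full-measure events then gives the lemma.

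The core step is to exhibit $B$ as a countable union of $\nu$-null success sets and invoke countable subadditivity, mirroring the universal-gale argument in the proof of Theorem~\ref{thm:sequencestrongtheorem}. The key point is that condition~2 applies to every relevant witness: if $d$ is an exact polynomial-time computable $t$-gale with $t<1/2$, then $d$ is also a polynomial-time $\nu$-approximable $s'$-\emph{supergale} for any $s'>t$ (the gale equality relaxes to the supergale inequality, and exact computability makes $d$ trivially $\nu$-approximable), and because $s<1/2$ one may choose $s'\in(\max(s,t),1/2)$, so condition~2 forces $\nu(S^\infty(d))=0$. To make the union countable I would fix the countable index set $Q=\{s'\in\Q\cap(s,1/2):2^{s'}\in\Q\}$ and, for each $s'\in Q$ and $i\in\N$, take the universal $s'$-gale $\tilde d_{i,s'}$ of Lemma~\ref{lem:Mgalecombination} for time bound $n^i$; each $\tilde d_{i,s'}$ runs in polynomial time, is trivially $\nu$-approximable, and hence satisfies $\nu(S^\infty(\tilde d_{i,s'}))=0$. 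It then suffices to verify the containment
\[
B\subseteq\bigcup_{s'\in Q}\bigcup_{i\in\N}S^\infty(\tilde d_{i,s'}).
\]
Given $X$ with $\cdimp(X)<1/2$, the definition of $\cdimp$ supplies an exact $n^k$-time computable $t$-gale $d$ with $t<1/2$ succeeding on $X$; choosing $s'\in Q$ with $s'>\max(s,t)$ and passing from $d$ to an exact polynomial-time $s'$-gale via the rescaling $d(w)\mapsto d(w)2^{(s'-t)|w|}$ (which only increases capital and preserves success) followed by Lemma~\ref{lem:LutzExactComputation}, the universal gale $\tilde d_{k',s'}$ for the resulting degree $k'$ dominates it up to a multiplicative constant and therefore also succeeds on $X$. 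Countable subadditivity then gives $\nu(B)=0$, and intersecting with the event $\{\Kstrpoly\le s\}$ completes the argument.

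The step I expect to be the main obstacle is exactly this reduction of the sequence-dependent witness $d$ of real parameter $t$ and unknown running-time degree to a member of the fixed countable family: one must round the exponent up into the admissible window $(s,1/2)$ so that condition~2 becomes applicable (which is precisely where $s<1/2$ is used), restore exact rational computability through Lemma~\ref{lem:LutzExactComputation}, and absorb the unknown degree using the domination property of Lemma~\ref{lem:Mgalecombination}. I note that the relaxation of gales to supergales in condition~2 makes these manipulations essentially free, and in fact permits a shortcut that avoids the universal gale altogether by indexing the union directly over the countably many machines that compute an exact polynomial-time $t$-gale with $t<1/2$. Either way, once the countable family is in place the measure-theoretic conclusion is a single application of subadditivity, and the whole argument reads as the measure-$1$ analogue of Theorem~\ref{thm:sequencestrongtheorem}, with its ``contradiction on a positive-measure set'' replaced by a direct null-set bound.
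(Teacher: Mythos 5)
Your proposal is correct, but it reaches the conclusion by a genuinely different route than the paper. The paper argues by contradiction on a positive-measure set: it works inside the support set $\F$ of Theorem \ref{thm:sequencestrongtheorem}, observes that if $\mathcal{H}=\{X\in\F:\cdimp(X)<1/2\}$ had positive $\nu$-measure then so would some $\hat{\mathcal{H}}=\{X\in\F:\cdimp(X)<\hat{s}\}$ with $\hat{s}<1/2$, and then re-runs the running-time partitioning trick (the sets $R_i$ and $S_i$) together with the universal gale of Lemma \ref{lem:Mgalecombination} to manufacture a \emph{single} polynomial-time gale succeeding on a positive-$\nu$-measure set, contradicting hypothesis 2. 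You instead bound the bad set directly: you cover $\{X:\cdimp(X)<1/2\}$ by a countable family of success sets, each $\nu$-null by hypothesis 2, and finish with countable subadditivity. The partitioning trick, which the paper needs in Theorem \ref{thm:sequencestrongtheorem} precisely because a single positive-measure success set must be exhibited there, is unnecessary for a null-set conclusion, so your argument is arguably more direct; your machine-indexed shortcut even dispenses with Lemma \ref{lem:Mgalecombination} entirely, at the cost of indexing over a less tidy family. Both routes rest on the same enabling observations: an exact polynomial-time $t$-gale is, unchanged, an $s'$-supergale for every $s'>t$ and is trivially $\nu$-approximable, and $s<1/2$ leaves room to round $t$ up into $(s,1/2)$. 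Two small repairs to your write-up: the index set $Q=\{s'\in\Q\cap(s,1/2):2^{s'}\in\Q\}$ is empty as literally written, since a rational $s'\in(0,1)$ never has $2^{s'}$ rational; take instead the countable set $\{s'\in(s,1/2):2^{s'}\in\Q\}$, i.e.\ $s'=\log_2 q$ for rational $q$. Also, the appeal to Lemma \ref{lem:LutzExactComputation} after the rescaling $d(w)\mapsto d(w)2^{(s'-t)|w|}$ is not needed: any exact rational-valued $t$-gale with $d(\lambda)=1$ forces $2^{t}=d(0)+d(1)\in\Q$, so with $2^{s'}\in\Q$ the rescaled gale is already exactly computable in polynomial time. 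For the $\Kstrpoly$ half, invoking Lemma \ref{lem:kstrpolyofdistribution} is equivalent in content to the paper's use of the set $\F$ and its seed-length bound, so nothing is lost there.
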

\begin{proof}
Consider the set $\F$ in the proof of Theorem \ref{thm:sequencestrongtheorem}. It is enough to show that, $\nu\{X \in \Sigma^\infty : \Kstrpoly(X) > s\text{ or } \cdimp(X) < 1/2\}=0$. It follows directly from the argument in the proof of Theorem \ref{thm:sequencestrongtheorem} that $\Kstrpoly(X) \leq s$ for every $X \in \F$. Consider the set $\mathcal{H}=\{X \in \F :  \cdimp(X) < 1/2\}$. If $\mu(\mathcal{H})>0$ then there exist some $\hat{s}<1/2$ such that $\mu(\hat{\mathcal{H}})>0$ where $\hat{\mathcal{H}}=\{X \in \F :  \cdimp(X) < \hat{s}\}$. By using the partitioning argument along with the gale combination technique in the in the proof of Theorem \ref{thm:sequencestrongtheorem} over the set $\hat{\mathcal{H}}$ instead of $\F$, we obtain a contradiction to the second statement in the hypothesis of the lemma. Therefore, it must be the case that $\mu(\mathcal{H})=0$ which completes the proof of the lemma.
\end{proof}

\section{Sets and sequences with maximal gap between $\Kstrpoly$ and $\cdimp$}
\label{sec:maximalgaps}

In this section we prove that if one-way functions exist, the gap between $\cdimp$ and $\Kstrpoly$ can be arbitrarily close to $1$.
\begin{theorem}
\label{thm:strongdimensionandgapcombined}
If one-way functions exist, then for every $\epsilon>0$ there exist set $\F$ and sequence $X$ such that $\cdimp(\F)-\Kstrpoly(\F) \geq 1-\epsilon$ and $\cdimp(X)-\Kstrpoly(X) \geq 1-\epsilon$ respectively.	
\end{theorem}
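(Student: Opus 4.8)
The plan is to lift the forward construction of Lemma \ref{lem:equivalenceforward} from seed rate $s < 1/2$ to every $s < 1$, and then run it through the set and sequence arguments of Theorems \ref{thm:owfimpliesdimensiongaponsets} and \ref{thm:sequencestrongtheorem} essentially verbatim. Concretely, I would first establish the strengthening: \emph{if one-way functions exist then for every $s < 1$ there is a polynomial-time samplable distribution $\nu$ over $\Sigma^\infty$ whose sampler on $1^n$ uses at most $sn$ random bits and such that for every $s' \in (s,1)$ and every polynomial-time $\nu$-approximable $s'$-supergale $d$ we have $\nu(S^\infty(d)) = 0$.} The distribution $\nu$ is the short-seed distribution induced by the map $g$ of Section \ref{sec:constructionofg}, built from a PRG $G_n : \Sigma^{sn} \to \Sigma^n$ (available for all $s = 2^{-m}$ by Theorem \ref{thm:OWFandPRG}). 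This is precisely the modification already sketched for Theorems \ref{thm:gaptheoremsequences} and \ref{thm:strongdimensiongapsequences}, now carried out for the pair $(\cdimp, \Kstrpoly)$; note that the prefix-complexity bound of Corollary \ref{cor:StrongAllender2} controls the \emph{limsup} rate, so it yields $\Kstrpoly(X) \leq 2s$ and not merely a $\Kpoly$ bound.

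The only point in the proof of Lemma \ref{lem:equivalenceforward} that genuinely forced $s' < 1/2$ is the passage to a martingale $\tilde{d}$ with fast block-to-block growth, which demanded $\tilde{s} > 2s'$ and hence $\tilde{s} < 1$ only when $s' < 1/2$. To remove this I would replace Lemma \ref{lem:optimalmartingaleconstruction} by its two-condition generalization: for $\tilde{s} > s'' > s'$ and every $Y \in g(\Sigma^\infty)$ there are infinitely many $n$ for which \emph{either} $\tilde{d}(Y \restr 2^n) > 2^{(1-\tilde{s})2^{n-1}} \tilde{d}(Y \restr 2^{n-1})$, \emph{or} there is an $\ell$ with $((\tilde{s}-s'')/s'')\,2^{n-1} \leq \ell \leq 2^{n-1}$ and $\tilde{d}(Y \restr 2^{n-1}+\ell) > 2^{(1-s'')(2^{n-1}+\ell)}$. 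The derivation is the one in the proof of Lemma \ref{lem:optimalmartingaleconstruction} applied to $\bar{d}(w) = d'(w)2^{(1-s'')|w|}$: if the block-ratio inequality fails persistently, then $\bar{d}$ is forced to cross the absolute threshold at an interior position whose relative location is bounded below by $(\tilde{s}-s'')/s''$.

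Next I would modify the distinguisher of Algorithm \ref{alg:distinguisher} so that on input $w$ with $|w| = 2^n$ and seed-generated prefix $w'$ it also tests, for each admissible $\ell$, whether $M(w'\cdot(w\restr\ell)) \geq \constc \cdot 2^{(1-s'')(2^{n-1}+\ell)}$, and outputs $1$ if any of the (block-ratio and threshold) tests fires. On PRG outputs the analysis is unchanged in spirit: apply the Borel--Cantelli argument of Lemma \ref{lem:BorelCant} to the disjunctive event $f(X,n)=1$ to obtain an inverse-polynomial fraction of seeds at which one alternative holds for infinitely many $n$, and combine with the $\nu$-approximation guarantee. On uniform inputs I would bound each threshold test by the Kolmogorov inequality (Lemma \ref{lem:KolmogorvInequality}): for fixed $w'$ the fraction of $w$ with $\tilde{d}(w'(w\restr\ell)) \geq 2^{(1-s'')(2^{n-1}+\ell)}$ is at most $\constc^{-2}2^{-(1-s'')\ell}$ relative to the typical value of $\tilde{d}(w')$, and since the lower bound $\ell \geq ((\tilde{s}-s'')/s'')\,2^{n-1} = \Theta(2^{n-1})$ makes every term inverse-exponential, the union over the $\Theta(2^{n-1})$ values of $\ell$ remains inverse-exponential. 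This produces an inverse-polynomial distinguishing advantage, contradicting the security of $\{G_n\}$ and proving the strengthened lemma.

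With the strengthened lemma, both conclusions follow by the arguments already in place. For sets, repeating Theorem \ref{thm:owfimpliesdimensiongaponsets} with $\F = \bigcap_n \bigcup_{w \in \supp(\nu_n)} C_w$ gives $\Kstrpoly(\F) \leq s$, while the fact that no polynomial-time $s'$-gale (trivially $\nu$-approximable) can succeed on the measure-one set $\F$ for any $s' \in (s,1)$ forces $\cdimp(\F) \geq 1$; choosing $s < \epsilon$ yields $\cdimp(\F) - \Kstrpoly(\F) \geq 1-\epsilon$. For sequences, the partition-plus-gale-combination argument of Theorem \ref{thm:sequencestrongtheorem}, refined as in Lemma \ref{lem:owfimpliesdimensiongapwithprobability1}, shows $\nu\{X : \Kstrpoly(X) \leq s \text{ and } \cdimp(X) \geq 1\} = 1$, so such an $X$ exists with $\cdimp(X) - \Kstrpoly(X) \geq 1 - s \geq 1-\epsilon$. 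I expect the entire difficulty to lie in the strengthened forward lemma, and within it in the single step of controlling the aggregate false-positive probability of the interior-threshold tests so that the inverse-exponential bound survives the union over all $\Theta(2^{n-1})$ choices of $\ell$; everything downstream is a faithful replay of the $s < 1/2$ development.
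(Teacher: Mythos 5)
Your proposal follows the paper's proof essentially step for step: your two-condition growth lemma is exactly the paper's Lemma \ref{lem:optimalmartingaleconstructionforgenerals}, and the modified distinguisher with interior-threshold tests, the Borel--Cantelli argument applied to the disjunctive event (the paper's Definition \ref{def:fdefinition2} and Lemma \ref{lem:BorelCantellionsgale2}), the Kolmogorov-inequality analysis on uniform inputs, and the downstream partition-plus-gale-combination arguments all coincide with Section \ref{sec:maximalgaps}. The only cosmetic deviations are that the paper argues directly on the set $\F'_{2s}$ of sequences with $\limsup$ complexity rate at most $2s$ (via Corollary \ref{cor:StrongAllender2}) and partitions the uniform seed space before running the distinguisher, rather than restating a strengthened Lemma \ref{lem:equivalenceforward}, and it avoids your union bound over the $\Theta(2^{n-1})$ values of $\ell$ by a single application of Lemma \ref{lem:KolmogorvInequality} at the weakest threshold $\ell = q \cdot 2^n$, since the ``for some $y \sqsubseteq w$'' form of that inequality absorbs the existential over $\ell$ in one shot.
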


We prove the existence of a sequence $X$ such that $\cdimp(X)-\Kstrpoly(X) \geq 1-\epsilon$. The assertion for sets follows trivially from this claim. Let $s=2^{-m}$ for some $m>1$. We have $s<1/2$ and therefore $2s < 1$. Consider the set $\F'_{2s} = \{X \in \Sigma^\infty : \limsup_{\nifty} \frac{K_{t_g}(\S \restr n)}{n} \leq 2s\}$. For any $X \in \F'_{2s}$, 
\begin{align*}
\Kstrpoly(X)=\inf_{t \in poly}  \limsup_{\nifty} \frac{K_t(X \restr n)}{n} \leq 2s.	
\end{align*}

We argue that there does exist any $s'<1$ satisfying $2s<s'<1$ and $2^{s'}\in \Q$ such that $\cdimp({X})<s'$ for every $X \in \F'_{2s}$. Assume that for some such $s'$ and every $X \in \F'_{2s}$ there exists an exact computable $s'$-gale $d'_X$ that runs in time $t_{d'_X}(n) \in \poly(n)$, such that $d'_X$ succeeds on $X$. 

 As in the proof of Theorem \ref{thm:forSetspsrg}, we construct the mapping $g:\Sigma^\infty \to \Sigma^\infty$ as in Definition \ref{def:gConstrn}. For any $X \in \Sigma^\infty$, from Corollary \ref{cor:StrongAllender2}, it follows that $g(X) \in \F_{2s'}$. Therefore, $g(\Sigma^\infty) \subseteq \F_{2s'}$. 
 
  So, every $X \in \Sigma^\infty$ there exists an exact computable $s'$-gale $d'_X$ that runs in time $t_{d'_X}(n) \in \poly(n)$, such that $d'_X$ succeeds on $g(X)$. Similar to the proof of Theorem \ref{thm:thmforSeqpsrg}, using a gale combination technique, we construct a set $\mathcal{S} \subseteq \Sigma^\infty$ such that $\mu(\mathcal{S})>0$ and
 a single $s$-gale $d'$ that succeeds on $g(X)$ for all $X \in \mathcal{S}$.
 
 We first partition sequences in the input space of $g$ into sets, based on the running time of the $s'$-gales that succeeds on it. 
Define 
\begin{align*}
R_i = \{X \in \Sigma^\infty : \text{ there exists an $n^i$-time $s'$-gale $d$ such that $g(X) \in S^{\infty}(d)$} \}
\end{align*}
and let $S_i = R_i \setminus R_{i-1}$. Note that $\{S_i\}_{i \in \N}$ is a partition of $\Sigma^\infty$ into countably many disjoint subsets. As $\sum_{i=1}^{\infty} \mu(S_i) = 1$, there exists an $k \in \N$ such that $\sum_{i=1}^{k} \mu(S_i) > 0$. Define $\mathcal{S} = \bigcup_{i=1}^{k} S_i$.
Now, using Lemma \ref{lem:Mgalecombination}, there exists an exact $t(n)$-time computable $s'$-gale $d'$ that succeeds on all sequences in $g(\mathcal{S})$ for some polynomial $t(n)$.

Now, using Lemma \ref{lem:Mgalecombination}, there exists an exact $t(n)$-time computable $s'$-gale $d'$ that succeeds on all sequences in $g(\mathcal{S})$ for some polynomial $t(n)$. The major tool we require is the following generalization of Lemma \ref{lem:optimalmartingaleconstruction} to any $s' < 1$. 

\begin{lemma}
\label{lem:optimalmartingaleconstructionforgenerals}
For $s' < 1$, let $d': \Sigma^\infty \to \Rplus$ be an exact $t(n)$-time computable $s'$-gale. Let $\tilde{s}$ and $s''$ be such that $\tilde{s}>s''>s'$, $2^{\tilde{s}} \in \Q$ and $2^{s''}\in \Q$. Then, there exists an exact $t(n)poly(n)$-time computable martingale $\tilde{d}: \Sigma^\infty \to \Rplus$ such that for any $X \in S^{\infty}(d')$, there exists infinitely many $n \in \N$ satisfying at least one of the following:
\begin{enumerate}
	\item\label{item:optimalmgale1} $\tilde{d}(X \restr 2^n) > 2^{(1-\tilde{s})2^{n-1}} \tilde{d}(X \restr 2^{n-1})$.
	\item\label{item:optimalmgale2} There exists $\ell$ satisfying $ ((\tilde{s}-s'')/s'') \cdot 2^{n-1} \leq \ell \leq 2^{n-1}$ such that $\tilde{d}(X \restr 2^{n-1} + \ell) > 2^{(1-s'')(2^{n-1}+\ell)}$. 
\end{enumerate}
\end{lemma}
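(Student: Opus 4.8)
The plan is to follow the skeleton of the proof of Lemma~\ref{lem:optimalmartingaleconstruction}, but to replace its single ``fast-growth'' conclusion by the stated dichotomy so that the argument survives when $s'' \ge 1/2$. First I would apply Lemma~\ref{lem:galetofastgrowingmgale} to the exact $t(n)$-time computable $s'$-gale $d'$ with the exponent $s''$ (this is legal since $s' \le s'' \le 1$ and $2^{s''} \in \Q$), obtaining an exact $t(n)\poly(n)$-time computable supermartingale $\tilde{d} : \Sigma^* \to \Rplus$ such that $\limsup_{n\to\infty} \tilde{d}(X \restr n)/2^{(1-s'')n} = \infty$ for every $X \in S^\infty(d')$; rationality of $2^{s''}$ makes $2^{1-s''}$ rational, which is what guarantees the claimed time bound. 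Only the threshold and growth conclusions of $\tilde{d}$ are used below, so the (super)martingale wording is immaterial. I then fix $X \in S^\infty(d')$ and argue by contradiction: suppose only finitely many $n$ satisfy condition~(\ref{item:optimalmgale1}) or~(\ref{item:optimalmgale2}), so there is an $m_0$ with both failing for all $n > m_0$.

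Next I would exploit the failure of condition~(\ref{item:optimalmgale1}). For every $n > m_0$ this gives $\tilde{d}(X \restr 2^n) \le 2^{(1-\tilde{s})2^{n-1}} \tilde{d}(X \restr 2^{n-1})$, and telescoping these inequalities from $2^{m_0}$ up to $2^{n-1}$ yields $\tilde{d}(X \restr 2^{n-1}) \le 2^{(1-\tilde{s})(2^{n-1}-2^{m_0})}\tilde{d}(X \restr 2^{m_0})$. Since $\tilde{d}$ is a supermartingale it can at most double per symbol, so for any offset $0 \le \ell \le 2^{n-1}$ we have $\tilde{d}(X \restr 2^{n-1}+\ell) \le 2^{\ell}\,\tilde{d}(X \restr 2^{n-1})$. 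Combining, for all $n > m_0$ and all admissible $\ell$, $\tilde{d}(X \restr 2^{n-1}+\ell) \le 2^{\ell}\,2^{(1-\tilde{s})(2^{n-1}-2^{m_0})}\tilde{d}(X \restr 2^{m_0})$; taking logarithms, the right-hand side contributes a fixed additive constant $C_0 := \log \tilde{d}(X \restr 2^{m_0}) - (1-\tilde{s})2^{m_0}$ together with the terms $\ell + (1-\tilde{s})2^{n-1}$.

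Finally I would feed in the divergence of the $\limsup$ to cancel this constant. Because $\limsup_{n}\tilde{d}(X \restr n)/2^{(1-s'')n} = \infty$, for the fixed $C_0$ there are infinitely many lengths $k$ with $\tilde{d}(X \restr k) > 2^{C_0+1}\,2^{(1-s'')k}$, and these necessarily fall into infinitely many blocks. Take such a $k$ in a block $n > m_0$ and write $k = 2^{n-1}+\ell$ with $0 \le \ell \le 2^{n-1}$. Inserting the lower bound $\tilde{d}(X \restr k) > 2^{C_0+1}2^{(1-s'')k}$ into the upper bound from the previous paragraph, the $C_0$ cancels and the exponents simplify to $(\tilde{s}-s'')2^{n-1} - s''\ell < -1 < 0$, hence $\ell > \frac{\tilde{s}-s''}{s''}2^{n-1}$. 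Together with $\ell \le 2^{n-1}$ this exhibits $\ell$ as a witness for condition~(\ref{item:optimalmgale2}) at block $n$ (and in the degenerate regime $\tilde{s} > 2s''$, where the admissible interval $[\frac{\tilde{s}-s''}{s''}2^{n-1},\,2^{n-1}]$ is empty, the derived $\ell > 2^{n-1} \ge \ell$ is itself absurd, so condition~(\ref{item:optimalmgale1}) must have held instead); either way the choice of $m_0$ is contradicted, proving the dichotomy. The crux, and the point where the argument genuinely departs from Lemma~\ref{lem:optimalmartingaleconstruction}, is precisely the handling of the additive constant $C_0$: once $s'' \ge 1/2$ one can no longer drive $(\tilde{s}-2s'')2^{n-1}$ to a contradiction as before, and the extra additive slack $2^{C_0+1}$ harvested from the $\limsup$ being infinite is exactly what pushes $\ell$ strictly past the threshold $\frac{\tilde{s}-s''}{s''}2^{n-1}$ rather than merely near it.
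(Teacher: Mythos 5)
Your proposal is correct and follows the same skeleton as the paper's proof: apply Lemma~\ref{lem:galetofastgrowingmgale} with exponent $s''$, assume both conditions fail beyond some $m_0$, telescope the failures of condition~(1) across dyadic blocks, use the per-symbol doubling bound of a supermartingale to reach offsets $2^{n-1}+\ell$, and play this upper bound against the high-capital points supplied by $\limsup_n \tilde{d}(X\restr n)/2^{(1-s'')n}=\infty$. Where you genuinely diverge is the endgame, and your version is in fact the tighter one. The paper selects high points only with $\tilde{d}(X\restr k)\geq 2^{(1-s'')k}$ (constant $1$) and, after locating them at offsets $\ell \leq \frac{\tilde{s}-s''}{s''}2^{n-1}$, claims the growing inequality $2^{2^{n-1}(\tilde{s}-s'')} \leq \tilde{d}(X\restr 2^m)/2^{(1-\tilde{s})2^m}$; but at that exact threshold the exponentials $s''\ell$ and $(\tilde{s}-s'')2^{n-1}$ cancel, and the algebra only yields the fixed statement $\tilde{d}(X\restr 2^m)\geq 2^{(1-\tilde{s})2^m}$, which is not a contradiction on its own. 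Your move of absorbing the block constant $C_0$ by demanding $\tilde{d}(X\restr k) > 2^{C_0+1}\,2^{(1-s'')k}$ (legitimate, since the $\limsup$ is infinite) is precisely what pushes $\ell$ \emph{strictly} past $\frac{\tilde{s}-s''}{s''}2^{n-1}$ and makes the dichotomy airtight, including the degenerate regime $\tilde{s}>2s''$ where you correctly fall back on condition~(1), recovering Lemma~\ref{lem:optimalmartingaleconstruction}. One small patch: if $C_0 < -1$ the bound $2^{C_0+1}2^{(1-s'')k}$ is weaker than $2^{(1-s'')k}$, so the located point need not witness condition~(2); take the threshold $\max\bigl(2,\,2^{C_0+1}\bigr)\cdot 2^{(1-s'')k}$ instead (again available from the infinite $\limsup$), and note that $\tilde{d}(X\restr 2^{m_0})>0$ may be assumed since otherwise the supermartingale vanishes forever, contradicting divergence.
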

\begin{proof}
Let $\tilde{d}$ be the exact $t(n)poly(n)$-time computable martingale from Lemma \ref{lem:galetofastgrowingmgale} such that,
\begin{align*}
	\limsup\limits_{n \to \infty }\frac{\tilde{d}(X \restr n)}{2^{(1-s'')n}} = \infty.
\end{align*}
We show that $\tilde{d}$ satisfies the required property. On the contrary, assume that there exists $m>0$ such that for every $n>m$, $\tilde{d}(X \restr 2^n) \leq 2^{(1-\tilde{s})2^{n-1}} \tilde{d}(X \restr 2^{n-1})$ and $\tilde{d}(X \restr 2^n + \ell) \leq 2^{(1-s'')(2^{n-1}+\ell)}$ for every $\ell$ such that $ ((\tilde{s}-s'')/s'') \cdot 2^{n-1} \leq \ell \leq 2^{n-1}$. Since, there exist infinitely many $k$ such that $\tilde{d}(X \restr k) \geq 2^{(1-s'')k}$, we obtain that infinitely many $n$ and $\ell \leq ((\tilde{s}-s'')/s'') \cdot 2^{n-1}$ should satisfy,
\begin{align}
\label{eq:optimalmartingaleeq1}
\tilde{d}(X \restr 2^{n-1} + \ell) > 2^{(1-s'')(2^{n-1}+\ell)}.
\end{align}
But for any large enough $n$ and $\ell$ satisfying $\ell \leq ((\tilde{s}-s'')/s'') \cdot 2^{n-1}$, we have
\begin{align*}
\tilde{d}(X \restr 2^{n-1}+\ell) &\leq 2^\ell 2^{(1-\tilde{s})\sum_{i=m}^{n-2}2^i}	\tilde{d}(X \restr 2^m)\\
&=2^\ell 2^{(1-\tilde{s})(2^{n-1}-2^m)} \tilde{d}(X \restr 2^m).
\end{align*}
From \ref{eq:optimalmartingaleeq1}, we obtain 
\begin{align*}
2^{\ell s''} \geq \frac{2^{(1-\tilde{s})2^m}}{\tilde{d}(X \restr 2^m)} 2^{2^{n-1}(\tilde{s}-s'')}.	
\end{align*}
Since $\ell \leq ((\tilde{s}-s'')/s'') \cdot 2^n$,
\begin{align}
2^{2^{n-1} (\tilde{s}-s'')} \leq \frac{\tilde{d}(X \restr 2^m)}{2^{(1-\tilde{s})2^m}}.	
\end{align}
Since $\tilde{s} > s''$, the above cannot be true for large enough $n$. So, we obtain a contradiction. Therefore, it must be the case that there exist infinitely many $n$ satisfying conditions \ref{item:optimalmgale1} or \ref{item:optimalmgale2}.
\end{proof}

\begin{definition}
\label{def:fdefinition2}
Let $f' : \Sigma^\infty \times \N \to \{0,1\}$ be defined as follows. $f'(X,n)=1$ if either of the following are true:
\begin{enumerate}
	\item $\tilde{d}(g(X ) \restr 2^n) > 2^{(1-\tilde{s})2^n}\tilde{d}(g(X ) \restr 2^{n-1})$.
	\item There exists $\ell$ satisfying $ ((\tilde{s}-s'')/s'') \cdot 2^{n-1} \leq \ell \leq 2^{n-1}$ such that $\tilde{d}(g(X) \restr 2^{n-1} + \ell) > 2^{(1-s'')(2^{n-1}+\ell)}$.
\end{enumerate}
Otherwise we define $f'(X,n)=0$.
\end{definition}
Now, we prove an analogue of Lemma \ref{lem:BorelCantellionsgale} for the new function $f'$. The proof is identical to that of Lemma \ref{lem:BorelCantellionsgale} and uses the Borel-Cantelli lemma.

\begin{restatable}{lemma}{BorelCantellionsgale2} \label{lem:BorelCantellionsgale2}%
	Given $\tilde{d}: \Sigma^\infty \to \Rplus$, if for all sequences $Y \in g(\mathcal{S} )$ where $\mu(\mathcal{S} ) >0$, there exists infinitely many $n \in \N$ satisfying at least one of the following:
	\begin{enumerate}
		\item\label{item:optimalmgale11} $\tilde{d}(X \restr 2^n) > 2^{(1-\tilde{s})2^{n-1}} \tilde{d}(X \restr 2^{n-1})$.
		\item\label{item:optimalmgale21} There exists $\ell$ satisfying $ ((\tilde{s}-s'')/s'') \cdot 2^{n-1} \leq \ell \leq 2^{n-1}$ such that $\tilde{d}(X \restr 2^{n-1} + \ell) > 2^{(1-s'')(2^{n-1}+\ell)}$. 
	\end{enumerate}
	For any $n > 0$, let $\mathcal{H}_n $ be the set of strings in $ \Sigma^{s.2^n}$ such that $\tilde{d}(g(x )) > 2^{(1-\tilde{s})2^n}\tilde{d}(g(x  \restr s.2^{n-1})) $ or there exists $\ell$ satisfying $ ((\tilde{s}-s'')/s'') \cdot 2^{n-1} \leq \ell \leq 2^{n-1}$ such that $\tilde{d}(x \restr 2^{n-1} + \ell) > 2^{(1-s'')(2^{n-1}+\ell)}$. Then it holds that for infinitely many $n$,
	\begin{align*} 
		\pr_{x \in \Sigma^{s.2^n}} \left[x \in \mathcal{H}_n \right] \geq \frac{1}{n^2}.
	\end{align*}
\end{restatable}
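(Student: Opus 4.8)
The plan is to mirror the proof of Lemma \ref{lem:BorelCantellionsgale} almost verbatim, replacing the single-condition function $f$ by the two-condition function $f'$ of Definition \ref{def:fdefinition2}, and replacing the samplable distribution by the uniform measure $\mu$. First I would translate the hypothesis into a statement purely about $f'$: by assumption every $Y \in g(\mathcal{S})$ satisfies condition \ref{item:optimalmgale11} or condition \ref{item:optimalmgale21} for infinitely many $n$, and unwinding Definition \ref{def:fdefinition2} this is exactly the assertion that for every $X \in \mathcal{S}$ there are infinitely many $n$ with $f'(X,n)=1$. Since $\mu(\mathcal{S})>0$, I would then invoke the Borel--Cantelli argument of Lemma \ref{lem:BorelCant}, whose proof only uses that the underlying measure is a probability measure and that $\sum_n n^{-c}$ converges for $c>1$, so it applies verbatim with $\mu$ in place of $\nu$. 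This yields: for every $c>1$ there are infinitely many $n$ with $\mu(\{X : f'(X,n)=1\}) > n^{-c}$.

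Next I would pass from the measure on $\Sigma^\infty$ to a probability over finite seeds. The key observation is that, by the block structure of $g$ (Definition \ref{def:gConstrn}), the prefix $g(X)\restr 2^n$ is a function of $X \restr s\cdot 2^n$ alone. Both disjuncts defining $f'(X,n)$ read $g(X)$ only up to position $2^n$ (note $2^{n-1}+\ell \leq 2^n$ whenever $\ell \leq 2^{n-1}$), so the event $\{X : f'(X,n)=1\}$ is determined by the first $s\cdot 2^n$ bits of $X$. Hence its $\mu$-measure equals $\pr_{x \in \Sigma^{s\cdot 2^n}}[x \in \mathcal{H}_n]$, with $\mathcal{H}_n$ precisely the set named in the statement. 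Specializing the Borel--Cantelli conclusion to $c=2$ then gives $\pr_{x \in \Sigma^{s\cdot 2^n}}[x \in \mathcal{H}_n] \geq 1/n^2$ for infinitely many $n$, as required.

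The step requiring the most care is this reduction from the infinite-sequence event to the finite-seed probability: one must verify that both conditions defining $\mathcal{H}_n$ inspect $g(x)$ only within its first $2^n$ output bits, hence depend on no more than the $s\cdot 2^n$ input bits, so that the measure identity is \emph{exact} rather than merely approximate. Everything else is a routine repetition of the reasoning in Lemma \ref{lem:BorelCantellionsgale}.
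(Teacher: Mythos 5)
Your proposal is correct and matches the paper, which gives no separate argument for this lemma beyond declaring it ``identical to that of Lemma \ref{lem:BorelCantellionsgale}'': in both cases one encodes the two disjuncts as the function $f'$ of Definition \ref{def:fdefinition2}, applies the Borel--Cantelli argument of Lemma \ref{lem:BorelCant} (which, as you note, works for any probability measure, so $\mu$ in place of $\nu$ is harmless), transfers the event to the uniform distribution on seeds via the block structure of $g$, and sets $c=2$. Your explicit verification that both conditions inspect $g(X)$ only up to position $2^n$ (since $2^{n-1}+\ell \leq 2^n$), so that the event is determined by $X \restr s\cdot 2^n$ and the measure identity is exact, is a point the paper leaves implicit, and it is handled correctly.
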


Let $q$ be a rational number such that 
\begin{align*}
0<\frac{(\tilde{s}-s'')}{2s''} \leq q \leq \frac{(\tilde{s}-s'')}{s''}.
\end{align*}
Now we modify Algorithm A (see Algorithm \ref{alg:distinguisher}) such that the last step is,
\begin{align*}
	\textbf{return } \tilde{d}(w'w) \geq 2^{(1 - \tilde{s})\lvert w \rvert} \cdot \tilde{d}(w') \textbf{ or } \tilde{d}(w'w) \geq 2^{(1 - \tilde{s})(2^n + \ell)} \text{ for some } \ell \text{ such that } q \cdot 2^n \leq \ell \leq 2^n. 
\end{align*} 

The analysis to prove that for $N=2^n$,
 \begin{align*}
\pr_{(x,r) \in \Sigma^{s \cdot N} \times \Sigma^{s \cdot N}}[A(G_{N}(x)) = 1] \geq \frac{1}{N^2}.
\end{align*}
is quite similar to the one in the proof of Lemma \ref{lem:equivalenceforward}. The only difference is that instead of Lemma \ref{lem:BorelCantellionsgale}, we use Lemma \ref{lem:BorelCantellionsgale2}. 

Now, for any $n \in \N$, take any $r \in \Sigma^{s.2^n}$, and let $w' = g(r)$. 
Using Lemma \ref{lem:KolmogorvInequality}, for any $w'  \in \Sigma^{2^{n}}$, the number of strings $w \in \Sigma^{2^{n}}$ such that $d(w'w) \geq 2^{2^n.(1-\tilde{s})} d(w')$ is less than $2^{2^n} / 2^{2^n ( 1- \tilde{s})}$. Again, using Lemma \ref{lem:KolmogorvInequality}, the number of strings $w'w \in \Sigma^{2^{n+1}}$ such that $\tilde{d}(w'w) \geq 2^{(1 - \tilde{s})(2^n + \ell)}$  for some $\ell$ such that $q \cdot 2^n \leq \ell \leq 2^n$ is at most $2^{2^{n+1}}/2^{(1 - \tilde{s})(2^n + q 2^n)}$. Let $N = 2^n$. Now, using the union bound it follows that,

\begin{align*}
	\pr_{(x,r) \in \Sigma^N \times \Sigma^N} [A(x) = 1] < \frac{1}{2^{ N ( 1- \tilde{s})}} + \frac{1}{2^{(1 - \tilde{s})(1+q)N}}.
\end{align*}
Notice that $q$ is a constant which depends only on $\tilde{s}$ and $s''$.

It follows that if one-way functions exist, for every $s'<1$ with $2^{s'}\in \Q$ there exist $X \in g(\Sigma^\infty)$ such that $\cdimp({X})\geq s'$. Since $\cDimp (X) \geq \cdimp(X)$, for every $s' \in (2s,1)$ with $2^{s'}\in \Q$ there exist $X \in g(\mathcal{S})$ such that $\cDimp({X})\geq s'$. Now, Theorem \ref{thm:strongdimensionandgapcombined} follows from the observation that by taking $m$ large enough and choosing an appropriate $s'$, the quantity $s'-2s=s'-2^{-{m-1}}$ can be made arbitrarily close to $1$.

We obtain the following corollary of Theorem \ref{thm:strongdimensionandgapcombined}. These results provide a strong negative answer to the open question posed by Stull in \cite{stullsurvey} under the assumption that one-way functions exist.

 \begin{corollary}
If one-way functions exist then for every $\epsilon \in (0,1)$ there exists $\mathcal{F} \subseteq \Sigma^\infty$ such that,
\begin{align*}
\Kpoly(\F)\leq \Kstrpoly(\F) <\epsilon<1-\epsilon< \cdimp(\F) \leq \cDimp(\F)
\end{align*}
\end{corollary}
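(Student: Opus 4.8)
The plan is to read the displayed inequality as a chain of five quantities and to observe that the two outermost comparisons hold for \emph{every} $\F \subseteq \Sigma^\infty$ by definition, so that the entire content of the corollary reduces to producing a single $\F$ whose strong Kolmogorov rate is below $\epsilon$ while its ordinary gale dimension exceeds $1-\epsilon$. For the outer inequalities, I would note that $\Kpoly(\F) \leq \Kstrpoly(\F)$ is immediate because for each fixed sequence and each time bound the relevant $\liminf$ is at most the corresponding $\limsup$, and this termwise inequality is preserved under the $\sup$ over sequences and the $\inf$ over polynomial time bounds; and $\cdimp(\F) \leq \cDimp(\F)$ is immediate because any $s$-gale that strongly succeeds on $\F$ (i.e.\ with $\liminf_{\nifty} d(X\restr n) = \infty$) in particular succeeds on $\F$ (i.e.\ with $\limsup_{\nifty} d(X\restr n) = \infty$), so the infimum defining $\cDimp(\F)$ ranges over a subset of the exponents admissible for $\cdimp(\F)$.

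Since the displayed chain requires $\epsilon < 1-\epsilon$, I would take $\epsilon \in (0,1/2)$, the intended and interesting regime. The core claim is then extracted directly from the proof of Theorem \ref{thm:strongdimensionandgapcombined}. Concretely, I would fix $m$ large enough that $2s = 2^{-(m-1)} < \epsilon$ with $s = 2^{-m}$, build the mapping $g$ from the constant-stretch PRG $\{G_n\}$ exactly as in Definition \ref{def:gConstrn}, and invoke Corollary \ref{cor:StrongAllender2}: for every $X \in \Sigma^\infty$ and every $i$, $K_{t_g}(g(X)\restr i) \leq 2s\cdot i + 2\log i + c'$, whence $\limsup_{\nifty} K_{t_g}(g(X)\restr i)/i \leq 2s$ and therefore $\Kstrpoly(g(X)) \leq 2s < \epsilon$. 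This pins down the left half of the chain for any sequence drawn from $g(\Sigma^\infty)$.

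For the right half, I would appeal to the distinguisher-plus-partition-plus-gale-combination argument already carried out in the proof of Theorem \ref{thm:strongdimensionandgapcombined}: choosing an exponent $s'$ with $2^{s'}\in\Q$ and $1-\epsilon < s' < 1$ (possible since $\{s' : 2^{s'}\in\Q\}$ is dense in $\R$), that argument shows, via Lemma \ref{lem:optimalmartingaleconstructionforgenerals} and the modified Algorithm~A, that no exact polynomial-time $s'$-gale can succeed on all of $g(\Sigma^\infty)$ up to a positive-measure set, which yields a sequence $Y = g(X) \in g(\Sigma^\infty)$ with $\cdimp(Y) \geq s' > 1-\epsilon$. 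Setting $\F = \{Y\}$ then gives $\Kstrpoly(\F) \leq 2s < \epsilon < 1-\epsilon < s' \leq \cdimp(\F)$, and combining this with the two free inequalities produces the full chain.

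The genuinely substantive work --- the Borel--Cantelli amplification, the Kolmogorov-inequality bound on uniformly random inputs, and the generalized martingale construction needed to push $s'$ past $1/2$ toward $1$ --- is all discharged inside Theorem \ref{thm:strongdimensionandgapcombined}, so the only real obstacle at the corollary level is bookkeeping: verifying that one and the same witness sequence $Y$ simultaneously realizes the small $\Kstrpoly$ bound and the large $\cdimp$ bound, and that the threshold $s'$ can be chosen with $2^{s'}\in\Q$ inside the narrow window $(1-\epsilon,1)$. Both become routine once the parameters $m$ and $s'$ are fixed as above.
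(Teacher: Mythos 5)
Your proposal is correct and follows essentially the same route as the paper, which derives this corollary directly from Theorem \ref{thm:strongdimensionandgapcombined} via exactly your parameter choice --- taking $m$ large enough that $2s = 2^{-(m-1)} < \epsilon$ and picking $s' > 1-\epsilon$ with $2^{s'} \in \Q$, so that the witness $Y \in g(\Sigma^\infty)$ produced there simultaneously satisfies $\Kstrpoly(Y) \leq 2s$ (Corollary \ref{cor:StrongAllender2}) and $\cdimp(Y) \geq s'$ --- with the outer inequalities $\Kpoly(\F) \leq \Kstrpoly(\F)$ and $\cdimp(\F) \leq \cDimp(\F)$ holding for every $\F$ by definition, as you observe. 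Your restriction to $\epsilon < 1/2$ also correctly flags a minor imprecision in the paper's ``for every $\epsilon \in (0,1)$'' phrasing, since the displayed chain requires $\epsilon < 1-\epsilon$.
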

\begin{corollary}
If one-way functions exist then for every $\epsilon \in (0,1)$ there exists $ X \in \Sigma^\infty$ such that,
\begin{align*}
 \Kpoly(X) \leq \Kstrpoly(X) <\epsilon<1-\epsilon<  \cdimp(X) \leq \cDimp(X).
 \end{align*}
\end{corollary}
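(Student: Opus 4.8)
The plan is to read this corollary as a repackaging of Theorem~\ref{thm:strongdimensionandgapcombined} together with two completely general monotonicity facts, so that essentially no new argument is needed beyond a careful choice of constants. The displayed chain has four links: the outer two, $\Kpoly(X)\leq\Kstrpoly(X)$ and $\cdimp(X)\leq\cDimp(X)$, hold for \emph{every} sequence $X$, while the inner separation $\Kstrpoly(X)<\epsilon<1-\epsilon<\cdimp(X)$ is exactly the kind of gap produced in the proof of Theorem~\ref{thm:strongdimensionandgapcombined}. So first I would dispose of the two trivial inequalities, and then invoke the construction underlying that theorem with its constants tuned to $\epsilon$.

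For the outer inequalities, $\Kpoly(X)\leq\Kstrpoly(X)$ follows because for each fixed polynomial time bound $t$ one has $\liminf_{n}K_t(X\restr n)/n\leq\limsup_{n}K_t(X\restr n)/n$, and since this holds termwise, taking the infimum over $t$ on both sides preserves the inequality. For $\cdimp(X)\leq\cDimp(X)$, I would observe that $\Sstr^\infty(d)\subseteq S^\infty(d)$ for every $s$-gale $d$ (strong success implies success), so any admissible witness for $\cDimp(X)$, namely an exact polynomial-time $s$-gale $d$ with $X\in\Sstr^\infty(d)$, is also a witness $X\in S^\infty(d)$ for $\cdimp(X)$; hence the infimum defining $\cdimp(X)$ ranges over a superset of the $s$ admissible for $\cDimp(X)$, giving $\cdimp(X)\leq\cDimp(X)$.

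For the inner separation I would run the machinery of Theorem~\ref{thm:strongdimensionandgapcombined} with explicit constants. Fix $\epsilon\in(0,1)$, choose $m>1$ large enough that $2s=2^{-(m-1)}<\epsilon$ where $s=2^{-m}$, and pick a value $s'$ with $2^{s'}\in\Q$ and $1-\epsilon<s'<1$. The proof of Theorem~\ref{thm:strongdimensionandgapcombined} produces, from the assumed one-way function and the mapping $g$ of Definition~\ref{def:gConstrn}, a sequence $X\in g(\Sigma^\infty)$ for which Corollary~\ref{cor:StrongAllender2} gives $\Kstrpoly(X)\leq 2s<\epsilon$, while the modified distinguisher argument forces $\cdimp(X)\geq s'>1-\epsilon$. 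Combining this with the two monotonicity facts yields the full chain $\Kpoly(X)\leq\Kstrpoly(X)<\epsilon<1-\epsilon<\cdimp(X)\leq\cDimp(X)$.

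The only point requiring care is the simultaneity of the constants, and the genuine conceptual obstacle has already been absorbed into Theorem~\ref{thm:strongdimensionandgapcombined}: the lower bound $\cdimp(X)\geq s'$ for $s'$ arbitrarily close to $1$ is precisely what that theorem establishes, via Lemma~\ref{lem:optimalmartingaleconstructionforgenerals}, the modified distinguisher, Lemma~\ref{lem:BorelCantellionsgale2}, and the Kolmogorov inequality. Thus at the level of the corollary the ``hard part'' reduces to checking that one can pick $m$ and $s'$ meeting $2^{-(m-1)}<\epsilon$ and $1-\epsilon<s'<1$ with $2^{s'}\in\Q$ simultaneously, which is immediate since the set of $s'$ with $2^{s'}\in\Q$ is dense in $(0,1)$; everything else is inherited from the theorem and the two general inequalities.
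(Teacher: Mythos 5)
Your proposal is correct and follows essentially the same route as the paper: the paper also derives this corollary directly from Theorem~\ref{thm:strongdimensionandgapcombined}, whose proof already yields a sequence $X \in g(\Sigma^\infty)$ with $\Kstrpoly(X) \leq 2s = 2^{-(m-1)}$ (via Corollary~\ref{cor:StrongAllender2}) and $\cdimp(X) \geq s'$ for any $s' < 1$ with $2^{s'} \in \Q$, so choosing $m$ and $s'$ exactly as you do gives the strict inner separation, while the two outer inequalities $\Kpoly(X) \leq \Kstrpoly(X)$ and $\cdimp(X) \leq \cDimp(X)$ are the same general monotonicity facts the paper invokes. No gaps; your explicit verification of the simultaneity of the constants and the density of $\{s' : 2^{s'} \in \Q\}$ is sound.
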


\section{Remarks on dimension gaps and infinitely-often one-way functions}
\label{sec:ioowfs}

Our main theorem (Theorem \ref{thm:equivalencetheorem}) shows that one-way functions implies the existence of polynomial time samplable distributions with respect to which dimension gaps exist on a  probability $1$ set. This assertion in turn implies the existence of infinitely-often one-way functions. In this section we remark on the possibility of proving that either one-way functions or infinitely-often one-way functions are equivalent to the existence of polynomial time samplable distributions with respect to which dimension gaps exist on a  probability $1$ set. 

Consider the statement of Lemma \ref{lem:equivalenceforward}. It is unlikely that the weaker hypothesis of existence of infinitely-often one-way functions yields the conclusion of Lemma \ref{lem:equivalenceforward} using the methods in the proof of the main theorem. In order to prove the conclusion of the lemma from the existence of infinitely-often one-way functions we need to break the PRG $\{G_n\}$ at all but finitely many lengths $n$. In our proof method, this requires that the following condition from Lemma \ref{lem:BorelCantellionsgale},
\begin{align*}
	\pr_{x \sim \Sigma^{s.2^n}}  [\tilde{d}(g(x )) > 2^{(1-\tilde{s})2^n}\tilde{d}(g(x  \restr s.2^{n-1}))] \geq \frac{1}{n^2}
\end{align*}
is true for all but finitely many $n$. This in turn requires that the following condition from Lemma \ref{lem:optimalmartingaleconstruction},
\begin{align*}
	\tilde{d}(X \restr 2^n) > 2^{(1-\tilde{s})2^{n-1}} \tilde{d}(X \restr 2^{n-1})
\end{align*}
is true for all but finitely many $n$ whenever there exist an $s'$-gale $d'$ that wins on $X$. However, the existence of an $s'$-gale $d'$ that wins on $X$ need not imply that there exists a martingale that roughly doubles it capital on half of the bits of $X$ from index $2^{n-1}$ to $2^n$ for all but finitely many $n$. If this happens, then by transforming the martingale $\tilde{d}$ to an $s'$-gale $d'$ (using the inverse of the transformation in Lemma \ref{lem:galetofastgrowingmgale}), we obtain that
\begin{align*}
	d'(X \restr 2^n) > \frac{1}{2^{(\tilde{s}-s')2^{n-1}}} \cdot d'(X \restr 2^{n-1})
\end{align*}
for all but finitely many $n$. However, an $s'$-gale $d'$ winning on $X$ need not respect this bound for all but finitely many $n$. $d'$ may violate this inequality infinitely many times and still manage to gain sufficient capital infinitely often to win on $X$. 

Now, consider the statement of Lemma \ref{lem:equivalenceconverse}. The main technical hurdle in obtaining a one-way function in the conclusion of Lemma \ref{lem:equivalenceconverse} is that assuming the non-existence of one-way functions, the approximation algorithm $\mathcal{A}$ from Theorem \ref{thm:irsextrapolation} may only succeed in approximating $\mathcal{D}_n$ for infinitely many values of $n$. Therefore, the $s'$-supergale $d$ in the proof of Lemma \ref{lem:equivalenceconverse} can only be approximated on infinitely many input string lengths $n$. 

However using the same techniques as in the proof of Lemma \ref{lem:equivalenceconverse}, we can obtain the following lemma which shows that if dimension gaps exists between $\cdimp$ defined using gales which are only approximable on infinitely many lengths and $\Kpoly$, then one-way functions exist. 

\begin{lemma}
\label{lem:equivalenceowfconverse}
If for some $s<1$ there exist a polynomial time samplable distribution $\nu$ over $\Sigma^\infty$ such that:
	\begin{enumerate}
		\item The number of random bits used by the sampler for $\nu$ on input $1^n$ is at most $sn$.
		\item There exist some $\hat{s} \in(s,1]$ such that for every $s' \in (s,\hat{s})$ and any polynomial-time $s'$-supergale $d$ that is \textbf{$\mathbf{\nu}$-approximable on infinitely many input lengths}, we have $\nu(S^\infty(d))=0$.
	\end{enumerate}
	Then, \textbf{one-way functions} exist.	
\end{lemma}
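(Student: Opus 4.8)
The plan is to replay the proof of Lemma~\ref{lem:equivalenceconverse} almost verbatim, changing only the strength of the hypothesis used to derive a contradiction. Assume the hypothesis of the present lemma and suppose, toward a contradiction, that \textbf{no} one-way functions exist (rather than no infinitely-often one-way functions). Fix a rational $s'$ with $s<s'<\hat{s}$ and $2^{s'}\in\Q$, and define the $s'$-supergale $d(w)=2^{s'\lvert w\rvert}\nu(w)$. Exactly as before, the short-seed hypothesis gives $\nu_n(w)\ge 2^{-ns}$ for $w\in\supp(\nu_n)$, hence $d(w)\ge 2^{n(s'-s)}$ on $\range(\nu)$, and continuity of measure yields
\[
\nu\left\{X\in\Sigma^\infty:\limsup_{n\to\infty} d(X\restr n)=\infty\right\}=1,
\]
so $\nu(S^\infty(d))=1$. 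This part of the argument is unchanged.

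The difference appears only in producing the approximating machine $M$. Under the weaker assumption that no one-way functions exist, the two ingredients of Lemma~\ref{lem:equivalenceconverse} survive only in their infinitely-often forms. First, since no one-way functions (hence, by Yao's amplification, no weak one-way functions) exist, the sampling function $f$ admits a probabilistic polynomial-time inverter $\mathcal{I}$ succeeding with probability $1-O(n^{-q})$ on \emph{infinitely many} input lengths. Second, the infinitely-often analogue of Theorem~\ref{thm:irsextrapolation} (which follows from the same proof with ``no i.o. one-way functions'' weakened to ``no one-way functions'') supplies an approximator $\mathcal{A}$ with $c\cdot\nu_n(w)\le\mathcal{A}(w)\le\nu_n(w)$ with probability $1-O(n^{-q})$ on infinitely many lengths. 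Combining them as in Lemma~\ref{lem:equivalenceconverse} --- output $0$ when $f(\mathcal{I}(w))\neq w$, and $2^{s'\lvert w\rvert}\mathcal{A}(w)$ otherwise --- I note that the support condition $\{w:M(w)\notin[c\cdot d(w),d(w)]\}\subseteq\supp(\nu_n)$ holds for \emph{all} $n$ unconditionally, since $f(\mathcal{I}(w))\neq w$ is a \emph{verified} check and $w\notin\supp(\nu_n)$ has no preimage under $f$. Thus only the measure bound $\nu_n\{w:M(w)\notin[c\cdot d(w),d(w)]\}\le O(n^{-q})$ requires the ``good'' lengths.

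The main obstacle, and the crux of the argument, is coordination: $\mathcal{I}$ and $\mathcal{A}$ are each guaranteed only on \emph{some} infinite set of lengths, and a priori these two sets could share only finitely many points, leaving no common length at which the measure bound holds. My plan to resolve this is to avoid two independent appeals to the ``no one-way functions'' assumption: the universal extrapolator of Theorem~\ref{thm:irsextrapolation} is itself built by inverting the sampler defining $\nu$, and the support-detection inverter $\mathcal{I}$ inverts that same function $f$. Hence both $\mathcal{A}$ and $\mathcal{I}$ can be obtained from a \emph{single} inverter for $f$ furnished by the non-existence of one-way functions, so they succeed on one common infinite set $S$ of lengths. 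A union bound over $w\sim\nu_n$ then gives the measure bound for all $n\in S$, for every constant $k$, which is exactly the definition of $d$ being polynomial-time $\nu$-approximable on infinitely many input lengths.

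Finally, $d$ is an $s'$-supergale with $s'\in(s,\hat{s})$ that is $\nu$-approximable on infinitely many lengths, so hypothesis~(2) of the lemma forces $\nu(S^\infty(d))=0$, contradicting $\nu(S^\infty(d))=1$ established in the first paragraph. Therefore the assumption that no one-way functions exist is untenable, and one-way functions exist. I expect everything except the coordination step to be a direct transcription of the proof of Lemma~\ref{lem:equivalenceconverse}, with ``all but finitely many $n$'' replaced throughout by ``infinitely many $n$''.
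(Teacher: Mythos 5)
Your proposal is correct and takes essentially the same route as the paper: the paper's entire proof of this lemma is the remark that one reruns the proof of Lemma~\ref{lem:equivalenceconverse} with the infinitely-often universal extrapolation theorem (Theorem 19 of \cite{ilango2022robustness}) in place of Theorem 20, which is exactly your substitution of infinitely-often guarantees for almost-everywhere ones, with the first half (the supergale $d(w)=2^{s'\lvert w\rvert}\nu(w)$ and $\nu(S^\infty(d))=1$) unchanged. The coordination issue you isolate --- aligning the good lengths of $\mathcal{I}$ and $\mathcal{A}$, each obtained from a separate appeal to the no-OWF assumption --- is a real subtlety that the paper's one-line proof passes over silently, and your repair (a single appeal to the assumption on one function derived from the sampler, yielding both guarantees on a common infinite set of lengths) is the standard and sound way to handle it, with the minor caveat that the extrapolator strictly requires inverting a hash-augmented version of the sampler rather than $f$ itself, so the combined function should include that augmentation.
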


The proof is essentially identical to that of Lemma \ref{lem:equivalenceconverse} except that we use the universal extrapolation result Theorem 19 instead of Theorem 20 from \cite{ilango2022robustness}.

\section{Additional Proofs}
\label{sec:technicallemmas}

Below we give an alternative proof of the fact that $\cdimp$ dominates $\Kpoly$ from \cite{dercjournal,dercconference}. We also prove the analogous inequality for polynomial time strong dimension.

\begin{theorem}[\cite{dercconference,dercjournal}] \label{thm:hitchcockvinodchandran2}%
	For every $\F \subseteq \Sigma^\infty$, $\Kpoly (\F) \leq \cdimp(\F)$ and $\Kstrpoly (\F) \leq \cDimp(\F)$.
\end{theorem}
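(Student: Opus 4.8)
The plan is to prove the inequalities directly, by showing that any exact $n^k$-time computable $s$-gale witnessing $\F \subseteq S^\infty(d)$ (resp. $\F \subseteq \Sstr^\infty(d)$) forces $\Kpoly(\F) \le s$ (resp. $\Kstrpoly(\F) \le s$); since $\cdimp(\F)$ and $\cDimp(\F)$ are exactly the infima of the $s$ admitting such a witness, taking the infimum over all witnessing $s$ at the end yields both inequalities at once. So I would fix such an $s$ together with its gale $d$, and fix any rational $s' > s$ with $2^{s'} \in \Q$. Then $d(w0)+d(w1) = 2^{s} d(w) \le 2^{s'} d(w)$, so $d$ is also an exact $n^k$-time computable $s'$-supergale, and the key object is $\rho(w) = 2^{-s'|w|} d(w)$. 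Using $d(\lambda)=1$ and the supergale inequality one checks $\rho(\lambda)=1$ and $\rho(w) \ge \rho(w0)+\rho(w1)$, so $\rho$ is a rational-valued, exactly polynomial-time computable subprobability measure on $\Sigma^\infty$.

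The next step is to compress prefixes by arithmetic (Shannon--Fano--Elias) coding against $\rho$. Encoding $w = X \restr n$ amounts to descending the binary tree, at each node subdividing the current interval into sub-intervals of lengths $\rho(\cdot\,0)$ and $\rho(\cdot\,1)$ (leaving a gap where $\rho$ loses mass); the final interval has length $\rho(w)$, so a point inside it is named with $\lceil -\log \rho(w)\rceil + 2$ bits. The crucial point is that both encoder and decoder evaluate $\rho$ — equivalently $d$ — only along the single path $\lambda, X\restr 1, \dots, X\restr n$ and at the children of these nodes, i.e. $O(n)$ evaluations of $d$, each of cost $n^k$; hence decoding runs in $O(n^{k+1})$ time. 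Prepending a self-delimiting description of $n$ ($O(\log n)$ bits) and folding the fixed programs for $d$ and the decoder into the universal machine, I obtain a single polynomial time bound $t(n) \in \poly(n)$ with
\[
K_t(X \restr n) \;\le\; -\log \rho(X\restr n) + O(\log n) \;=\; s' n - \log d(X\restr n) + O(\log n)
\]
for every $n$ with $d(X\restr n) > 0$.

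Finally I would translate success into compression rate. In the weak case $X \in S^\infty(d)$ gives $\limsup_n d(X\restr n) = \infty$, so there are infinitely many $n$ with $d(X\restr n) \ge 1$; for each the displayed bound gives $K_t(X\restr n)/n \le s' + O(\log n)/n$, whence $\liminf_n K_t(X\restr n)/n \le s'$, and since $X \in \F$ was arbitrary, $\Kpoly(\F) \le s'$. In the strong case $X \in \Sstr^\infty(d)$ gives $\liminf_n d(X\restr n) = \infty$, so $d(X\restr n) \ge 1$ for all large $n$, giving $\limsup_n K_t(X\restr n)/n \le s'$ and hence $\Kstrpoly(\F) \le s'$. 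Letting $s' \downarrow s$ and then taking the infimum over witnessing $s$ completes both inequalities. I expect the main obstacle to lie precisely in the polynomial-time decoding step: one must check that the arithmetic coding is carried out with the exact rationals produced by $d$ without the bit-lengths growing super-polynomially, and that naming a point inside the final interval — rather than enumerating the exponentially large level set $\{w : d(w) \ge \theta\}$ — is what keeps the whole reconstruction within $\poly(n)$ time, which is exactly what distinguishes the polynomial-time statement from its unbounded analogue in Theorem \ref{thm:cdimAndliminf}.
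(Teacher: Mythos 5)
Your proposal is correct and is essentially the paper's own argument: both compress prefixes by arithmetic (Shannon--Fano--Elias) coding against the gale-induced (sub)measure ($\rho(w)=2^{-s'|w|}d(w)$ in your version, $p(x)=d(x)2^{-sn}$ in the paper's), encode a dyadic interval inside the resulting interval of length $\approx \rho(w)$, decode by a polynomial-time tree descent evaluating $d$ only along the path, and convert success ($d(X\restr n)\ge 1$ infinitely often, resp.\ almost everywhere) into the $\liminf$, resp.\ $\limsup$, rate bounds. Your passage to a rational $2^{s'}$ via the supergale relaxation and your explicit $O(\log n)$-bit self-delimiting description of $n$ (which the paper's decoder leaves implicit) are only cosmetic refinements of the same proof.
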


We need the following lemma to prove the above theorem.
\begin{lemma}
\label{lem:dyadicintervallemma}
Let $[a,b]$ be a sub-interval of $[0,1]$. Then, there exist $m \leq \lfloor \log (1/(b-a)) \rfloor$ and $j < 2^{m+2}$ such that,
\begin{align*}
	\left[\frac{j}{2^{m+2}},\frac{j+1}{2^{m+2}}\right] \subseteq [a,b].
\end{align*}
\end{lemma}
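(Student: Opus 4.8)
The plan is to fix the resolution level $m=\floor{\log(1/(b-a))}$ and to look for the desired cell among the dyadic intervals of length $L=2^{-(m+2)}$, that is, among the intervals $[jL,(j+1)L]$ for integer $j$. The whole argument rests on the single quantitative observation that at this resolution the grid is already fine enough: I will show that $b-a>2L$, so $[a,b]$ is strictly more than twice as long as a grid cell, and then invoke the elementary fact that any interval strictly longer than twice the grid spacing must contain a full grid cell. Since $[a,b]\subseteq[0,1]$ forces $b-a\le 1$, we have $\log(1/(b-a))\ge 0$, so $m\ge 0$ and the level is well defined and nonnegative.

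First I would verify the length inequality. By the definition of the floor, $m>\log(1/(b-a))-1$, hence $m+1>\log(1/(b-a))$, and exponentiating base $2$ gives $2^{m+1}>1/(b-a)$, i.e. $b-a>2^{-(m+1)}=2L$.

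Next I would produce the index $j$ explicitly. Take $j=\ceil{a/L}=\ceil{a\cdot 2^{m+2}}$, the least integer with $jL\ge a$. Because $\ceil{x}<x+1$ for every real $x$, we get $jL<a+L$, and combining with $b-a>2L$ yields $(j+1)L=jL+L<a+2L<b$. Together with $jL\ge a$ this gives $[jL,(j+1)L]\subseteq[a,b]$, which is exactly the required containment. The bound on $j$ then follows from $[jL,(j+1)L]\subseteq[a,b]\subseteq[0,1]$: since $(j+1)L\le b\le 1$ we obtain $j+1\le 2^{m+2}$, hence $j<2^{m+2}$, while $j\ge 0$ because $jL\ge a\ge 0$.

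There is no serious obstacle here; the only points that require care are the strictness of the two floor/ceiling estimates $\floor{x}>x-1$ and $\ceil{x}<x+1$, which are precisely what let the chain of inequalities close without an off-by-one loss (a weaker non-strict bound would only give $b-a\ge 2L$, which is still enough but requires slightly more care), together with the boundary bookkeeping ensuring $0\le j<2^{m+2}$.
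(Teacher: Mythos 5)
Your proof is correct, and it takes a genuinely different route from the paper's. You fix the same resolution $m=\lfloor\log(1/(b-a))\rfloor$, but where the paper argues by contradiction---choosing the level-$m$ dyadic cell $I_m$ that maximizes the length of $I_m\cap[a,b]$, subdividing it into four level-$(m+2)$ cells, and showing that if none were contained in $[a,b]$ then $b-a\le 2^{-(m+1)}$, contradicting the choice of $m$---you construct the witness directly: the strict floor estimate $\lfloor x\rfloor>x-1$ gives $b-a>2^{-(m+1)}=2L$ with $L=2^{-(m+2)}$, and $j=\lceil a/L\rceil$ then satisfies $a\le jL<(j+1)L<a+2L<b$, with $0\le j<2^{m+2}$ from $(j+1)L< b\le 1$. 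Every step checks out, including the two strict floor/ceiling estimates on which your chain of inequalities hinges. Your argument buys brevity and constructivity: it exhibits $j$ by an explicit formula, which sits well with the algorithmic encoding/decoding context in which the lemma is applied, and it avoids the implicit case analysis in the paper's maximality argument---the paper's intermediate claim that a failed subdivision forces $I_m\cap[a,b]$ to have length at most $2^{-(m+2)}$ is accurate only when the overlap abuts a boundary of $I_m$; when $[a,b]\subseteq I_m$ the correct bound is $2^{-(m+1)}$, which still yields the paper's contradiction but is a subtlety your direct computation sidesteps entirely. What the paper's two-scale subdivision viewpoint offers in exchange is a picture that generalizes to pigeonhole arguments across dyadic scales, but for this particular lemma your route is the cleaner one.
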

\begin{proof}
Let $m=\lfloor -\log(b-a) \rfloor$. Among the set of all intervals of the form $[k/2^m,k+1/2^m]$, let $I_m$ be the interval which maximizes the length of $I_m \cap [a,b]$ (if there are two such intervals, choose any of them). Consider the following sub-intervals of $I_m$,
\begin{align*}
	\left[\frac{4k}{2^{m+2}},\frac{4k+1}{2^{m+2}}\right], \left[\frac{4k+1}{2^{m+2}},\frac{4k+2}{2^{m+2}}\right], \left[\frac{4k+2}{2^{m+2}},\frac{4k+3}{2^{m+2}}\right] \text{ and }\left[\frac{4k+3}{2^{m+2}},\frac{4k+4}{2^{m+2}}\right].
\end{align*}
If none of the above are completely contained in $[a,b]$, then the length of $I_m \cap [a,b]$ is at most $1/2^{m+2}$. From the choice of $I_m$, it follows that then length of $[a,b]$ is at most $1/2^{m+1}$. Therefore,  $m=\lfloor -\log(b-a) \rfloor \geq m+1$ which is a contradiction. Therefore, one of the above sub-intervals of $I_m$ satisfies the required condition.
\end{proof}
Now, we prove Theorem \ref{thm:hitchcockvinodchandran}.
\begin{proof}[Proof of Theorem \ref{thm:hitchcockvinodchandran}]
	 Let $d: \Sigma^\infty \to \Q$ be an exact $s$-gale  that runs in time $t$ such that $d$ succeeds on all $X \in \F$. We show that for all $X \in \F$, $\liminf_{\nifty} \frac{K_{t'}(X \restr n)}{n} \leq s$, where $t'(n) = \poly(t(n))$. 
	
	Given $n \in \N$, let $S_n = \{x \in \Sigma^n : d(x) > 1\}$.  From the Lemma \ref{lem:KolmogorvInequality}, it follows that $|S_n| < 2^{sn}$. For $x \in S_n$, we now define an encoding $\mathcal{E}(x)$ that takes at most $sn + O(1)$ bits. 
	
	\textbf{Encoding}: For any $n \in \N$ and $x \in \Sigma^{n}$, letting $p(x) = d(x). 2^{-sn}$ we obtain a probability distribution over $\Sigma^n$. For every $n$, define the cumulative probability $c_n(x) = \sum_{y \in \Sigma^n ,\; y < x} p(y)$, where $y < x$ means that $y$ is less than $x$ in the lexicographic ordering.  If $d(x) > 1$, then $p(x) > 2^{-sn}$. Hence $c_n(x+1)-c_n(x)> 2^{-sn}$. Here $x+1$ is the $n$-length successor of $x$ in the lexicographic order. Also, we assume that $c_n(1^n+1)=1$.
	
	From Lemma \ref{lem:dyadicintervallemma}, there exist a dyadic interval $[j/2^{m+2},(j+1)/2^{m+2}]$ that is totally contained in $[c_n(x),c_n(x+1)]$. Define $\mathcal{E}(x)$ to be the binary encoding of  this dyadic interval. From Lemma \ref{lem:dyadicintervallemma}, we have 
	\begin{align*}
		\lvert \mathcal{E}(x) \rvert \leq m+2 \leq \left\lfloor \log \frac{1}{c_n(x+1)-c_n(x)} \right\rfloor +2 \leq  \lfloor \log 2^{sn} \rfloor +2 = sn + O(1).
	\end{align*}	
	Hence, it follows that $\mathcal{E}$ defines a one to one encoding from $S_n \to \Sigma^{*}$ such that $|\mathcal{E}(x)| < sn +O(1)$. for every $x \in S_n$.

	\textbf{Decoding}: We will now give a $\poly(t(n))$ procedure that uses $d$ to generate an $x \in \Sigma^n$, given $\mathcal{E}(x)$ as input. 
	The idea is that we perform a binary search over $\Sigma^n$ using the exact polynomial time computable $s$-gale $d$ and the input encoding $\mathcal{E}(x)$.
	
	Given $\mathcal{E}(x) \in \Sigma^{sn}$, observe that the dyadic interval $[j/2^{m+2},(j+1)/2^{m+2}]$ encoded by $\mathcal{E}(x)$ is totally contained in $[c_n(x),c_n(x+1)]$ for some $x \in \Sigma^n$.  
		
	\begin{algorithm} [H]
		\caption{Decoding Algorithm. Input $w$}
		\begin{algorithmic}[1]
			\State Set $x_0 = \lambda$, $c_0(x_0) = 0$
			\State Compute the dyadic interval $I$ such that $w$ is an encoding of $I$ 
			\For {$i = 0$ to $n-1$}
			\If {$I \subseteq [c_i(x_i),c_i(x_i) + p(x_i.0)]$}
			\State Set $x_{i+1} = x_i.0$
			\State Set $c_{i+1}(x_{i+1}) = c_{i+1}(x_i)$
			\Else
			\State Set $x_{i+1} = x_i.1$
			\State Set $c_{i+1}(x_{i+1}) = c_{i+1}(x_i) + p(x_i.0)$
			\EndIf
			\EndFor
		\end{algorithmic}
	\end{algorithm}
	
	\textbf{Running time of the algorithm:} Since $d$ is exact computable $2^{-s}$ is rational and hence $p(y)=d(y)2^{-s\lvert y \rvert}$ can be computed in time polynomial in $t(\lvert w \rvert)$ for any $y$ with length less than or equal to that of $w$. Every other step in the algorithm can be performed in polynomial time. Therefore on input $w$, the algorithm terminates in $\poly(t(n))$ steps.
	
	\textbf{Correctness:} At each step in the main loop of the algorithm, we keep track of the cumulative probability $c_{i}(x_i)$ for the current node as a running sum. On input $\mathcal{E}(x)$ for some $x \in \Sigma^n$, we are guaranteed that the interval $I$ encoded by $\mathcal{E}(x)$ satisfies $I \subseteq [c_n(x),c_n(x+1)]$. At step $0$, either $I \subseteq [0, p(0)]$ or $I \subseteq [p(0),p(1)]$. A simple inductive argument shows that at every step $i$, either $I \subseteq [c_i(x_i),c_i(x_i) + p(x_i.0)]$ or $I \subseteq [c_i(x_i)+ p(x_i.0),c_i(x_i)+ p(x_i.0) + p(x_i.1)]$. 
	
	The above conclusion implies that at the end of step $i$, $I \subseteq [c_{i+1}(y),c_{i+1}(y+1)]$ for some $y \in \Sigma^{i+1}$. therefore, at the end of the loop $I \subseteq [c_n(y),c_n(y+1)]$ for some $y \in \Sigma^n$. From the definition of the encoding $\mathcal{E}$, it follows that the algorithm on input $\mathcal{E}(x)$ for $x \in \Sigma^n$ terminates with $x_n=x$.

	\textbf{Conclusion:} Given $\F \subseteq \Sigma^\infty$ and $s > \cdimp(\F)$ such that $2^{s} \in \Q$.
	There exists an $s$-gale $d$ that succeeds on all $X \in \F$, we have that for every $X \in \F$, for infinitely many $n \in \N$, $d(X \restr n) \geq 1$. From the above algorithm it follows that, $\liminf_{n \to \infty} {K_{\poly(t(n))}} / {n}  \leq s$ and so $\Kpoly (\F) \leq s$. Hence we conclude that for every $\F \subseteq \Sigma^\infty$, $\Kpoly (\F) \leq \cdimp(\F)$. This proves the first inequality.
	
	Similarly, let $s > \cDimp(\F)$. then there exists an $s$-gale $d$ such that $d$ strongly succeeds on all $X \in \F$. It follows that for every $X \in \F$, for all but finitely many $n \in \N$, $d(X \restr n) \geq 1$.  Therefore using the above algorithm we obtain, $\limsup_{n \to \infty} {K_{\poly(t(n))}} / {n}  \leq s$ and so $\Kstrpoly(\F) \leq s$. Hence we conclude that for every $\F \subseteq \Sigma^\infty$, $\Kstrpoly(\F) \leq \cDimp(\F).$
\end{proof}

\section*{Acknowledgments}
The authors would like to thank CS Bhargav, Laurent Bienvenu,  John Hitchcock, Jaikumar Radhakrishnan,  Mahesh Sreekumar Rajasree and Marius Zimand for helpful comments and discussions.

\bibliographystyle{alphaurl}
\bibliography{fair001,main,random}

\end{document}